\newtheorem{assumption}{Assumption}
\newtheorem{theorem}{Theorem}
\newtheorem{lemma}{Lemma}
\newtheorem{proposition}{Proposition}
\newtheorem{corollary}{Corollary}
\newcommand*{\rom}[1]{\expandafter\@slowromancap\romannumeral #1@}
\newcommand{\blind}{0}
\newcommand\Tstrut{\rule{0pt}{2.6ex}}         
\newcommand\Bstrut{\rule[-0.9ex]{0pt}{0pt}}   
\newcommand{\norm}[1]{\left\lVert#1\right\rVert}
\newcolumntype{L}[1]{>{\raggedright\arraybackslash}p{#1}}
\newcolumntype{C}[1]{>{\centering\arraybackslash}p{#1}}
\newcolumntype{R}[1]{>{\raggedleft\arraybackslash}p{#1}}
\DeclareMathOperator*{\argmax}{arg\,max}
\DeclareMathOperator*{\argmin}{arg\,min}
\DeclareMathOperator*{\plim}{plim}
\newcommand{\bzero}{\textbf{0}}
\newcommand{\bbone}{\textbf{1}}
\newcommand{\bU}{\textbf{U}}
\newcommand{\bV}{\textbf{V}}
\newcommand{\by}{\textbf{y}}
\newcommand{\wtilde}{\widetilde}
\newcommand{\capeps}{\mathcal{E}}
\begin{document}

\def\spacingset#1{\renewcommand{\baselinestretch}%
{#1}\small\normalsize} \spacingset{1}


\if0\blind
{
  \title{\bf Estimation of a Structural Break Point\\ in Linear Regression Models}
  \author{Yaein Baek\thanks{
    I am very grateful for helpful comments and suggestions from Graham Elliott, James Hamilton, Brendan Beare, Kaspar Wuthrich, Yixiao Sun, Juwon Seo, Jungmo Yoon and all seminar participants in UC San Diego Department of Economics. Website: \href{http://sites.google.com/view/yaeinbaek}{http://sites.google.com/view/yaeinbaek}. Email: yibaek@phbs.pku.edu.cn.}\hspace{.2cm}\\
    Peking University HSBC Business School}
  \maketitle
} \fi

\if1\blind
{
  \bigskip
  \bigskip
  \bigskip
  \begin{center}
    {\LARGE\bf Estimation of a Structural Break Point\vspace{0.2cm} in Linear Regression Models}
\end{center}
  \medskip
} \fi

\bigskip
\begin{abstract}
This study proposes a point estimator of the break location for a one-time structural break in linear regression models. If the break magnitude is small, the least-squares estimator of the break date has two modes at the ends of the finite sample period, regardless of the true break location. To solve this problem, I suggest an alternative estimator based on a modification of the least-squares objective function. The modified objective function incorporates estimation uncertainty that varies across potential break dates. The new break point estimator is consistent and has a unimodal finite sample distribution under small break magnitudes. A limit distribution is provided under an in-fill asymptotic framework. Monte Carlo simulation results suggest that the new estimator outperforms the least-squares estimator. I apply the method to estimate the break date in U.S. real GDP growth and U.S. and UK stock return prediction models. 
\end{abstract}

\noindent
{\it Keywords:} Change point, Parameter instability, Structural change 
\vfill

\newpage
\spacingset{1.45} 


\section{Introduction}

Researchers in many economic fields extensively address parameter instability in models, which is a common empirical problem in macroeconomics and finance, such as the decrease in output growth volatility in the 1980s, known as ``the Great Moderation," oil-price shocks, labor productivity changes, inflation uncertainty, and stock-return prediction models. It is often reasonable to assume that a change occurs over a long period of time or that some historical event affects the dynamics of a structural model. Hence, the interpretation of structural model dynamics or prediction models relies heavily on the estimation and testing of parameter instability. In econometrics literature, these changes in the underlying data generating process (DGP) of time-series are referenced as a structural break. The timing of the break, as a fraction of the sample size, is called the break point.

Researchers have used estimation methods in the structural break literature to analyze threshold effects and tipping points. Studies on policy change, income inequality dynamics, and social interaction models have used structural break estimation methods. \citet{Card2008} estimate a tipping point of segregation arising in neighborhoods with white preferences. \citet*{Gonzalez2012} explore the effect of child custody law reforms and Child Support Enforcement on U.S. divorce rates using the method developed by \citeauthor*{Bai1998a} (\citeyear{Bai1998a}, \citeyear{Bai2003}).

Extensive literature describes structural break estimation methods, starting with maximum likelihood estimators (MLE) on break points. \citet*{Hinkley1970}, \citet*{Bhattacharya1987} and \citet*{Yao1987} provide an asymptotic theory of the MLE of the break point in a sequence of independent and identically distributed random variables. The asymptotic theory of least-squares (LS) estimation of a one-time break in a linear regression model has been developed by \citeauthor*{Bai1994} (\citeyear{Bai1994}, \citeyear{Bai1997}), with extension to multiple breaks in \citet*{Bai1998a} and \citet{Bai1998b}. The main problem with the LS estimation of the break point is that its finite sample behavior depends on the size of the parameter shift. In many cases, break magnitudes that are empirically relevant are ``small" in a statistical sense. For instance, the quarterly U.S. real gross domestic product (GDP) growth rate from 1970Q1 to 2018Q2 has a mean of 0.68 and a standard deviation of 0.8 percent. A break that decreases the quarterly mean growth rate by 0.25 percentage points is less than half a standard deviation change but is equivalent to a 1 percentage-point decrease in annual growth, which is a significant event for the economy.

In asymptotic analysis, tests have local power against breaks with a magnitude of order $O(T^{-1/2})$\footnote{Asymptotic analysis under a DGP with a drifting sequence of parameters can be considered as a form of weak identification asymptotics. Literature on estimation and inference with a restricted parameter space under weak identification includes \citeauthor{Andrews2012} (\citeyear{Andrews2012}, \citeyear{Andrews2013}, \citeyear{Andrews2014}), and \citet*{Han2019}. Their results do not cover an abrupt structural change model, which is our model of interest.}. The magnitude represents a small break, shrinking with sample size $T$, so that structural breaks tests have asymptotic power strictly less than one (\citeauthor{Elliott2007} \citeyear{Elliott2007}). In the presence of small but detectable breaks, the LS estimator of the break point has a finite sample distribution that exhibits tri-modality with one mode at the true value and two modes at zero and one. Break points at zero or one do not provide any information about a structural break, nor are they likely to be true in practice. Therefore, inference in practical applications based on LS estimation of structural breaks would seem unreliable. Surprisingly, although the methodology is used widely, there are few alternatives for estimating the location of a structural break. Recent literature such as \citeauthor{Casini2019} (\citeyear{Casini2019}, \citeyear{Casini2020}) suggest a Laplace-based procedure to provide an estimator of the break point, which is defined by an integration, rather than an optimization-based method. 

This study provides an estimator of the structural break point, which is a generalization of LS estimation, and hence, easy to implement in practice. The new estimator resolves the finite sample issue of LS estimation; it has a finite sample distribution with a unique mode at the true break and flat tails. This is achieved by imposing weights on the LS objective function. Under small breaks, the LS estimator picks boundaries with high probability due to the functional form of the objective. I construct a weight function of the break point and impose it on the LS objective function to incorporate different estimation uncertainties across potential break points. I provide conditions on the weight function that ensure consistency of the break point estimator. I also suggest a representative weight function for empirical researchers to use.

The break point estimator is consistent with the same rate of convergence as the LS estimator (\citeauthor{Bai1997} \citeyear{Bai1997}) under regularity conditions on the weight functional form in a linear regression model with a structural break on a subset (or all) coefficients. The limit distribution of the break point estimator is derived when the break magnitude is small, under an in-fill asymptotic framework, following the approach of \citeauthor{Jiang2017} (\citeyear{Jiang2017}, \citeyear{Jiang2018}). Monte Carlo simulations show that the break point estimator has smaller root mean squared error (RMSE) than the LS estimator in a finite sample for all break point values considered.

This study provides two empirical applications: estimation of structural breaks on post-war U.S. real GDP growth rate and the U.S. and UK stock return prediction models. For the quarterly U.S. real GDP growth rate under different sample periods, the new method estimates a break in the early 1970s, whereas the LS estimates vary from the 1970s to 1952 or 2000, which are near boundaries of the sample. The break date estimate in the early 1970s is matched with the ``productivity growth slowdown" suggested in literature, such as \citet*{Perron1989} and \citet*{Hansen2001}. Thus, my estimation method yields reasonable break point estimates compared to LS estimates, which is sensitive to trimming the sample.

The remainder of this paper proceeds as follows: Section \ref{sec:motivation} constructs the break point estimator for a mean shift in a linear process. Section \ref{sec:multimodel} provides a generalized linear regression model with multiple regressors, and proves the consistency of the break point estimator. Section \ref{sec:limitdist} presents the in-fill asymptotic theory for stationary and local-to-unit root processes. Monte Carlo simulation results are in Section \ref{sec:montecarlo} and Section \ref{sec:empirical} provides three empirical applications of the new structural break estimation method. We provide concluding remarks in Section \ref{sec:conclusion}. Additional theoretical results and proofs are in the Appendix.

\section{Structural Break Point Estimator}\label{sec:motivation}

In this section, I consider the simplest regression model with a constant term to provide an intuitive explanation of the construction of the break point estimator. I provide theoretical results in Section \ref{sec:multimodel} under a general linear regression model with multiple regressors. Suppose a single break occurs at time $k_0 = [\rho_0 T]$, where $\rho_0 \in (0,1)$, $[\cdot]$ is the greatest smaller integer function, and $\bbone\{t>k_0\}$ is an indicator function that equals one if $t > k_0$ and zero otherwise.
\begin{equation}\label{eq:DGP}
    y_{t} = \mu +  \delta\bbone\{t > k_0\} + \varepsilon_{t}, \;\;\; t=1,\ldots,T
\end{equation}
The disturbances $\{\varepsilon_{t}\}$ are independent and identically distributed (i.i.d.) with mean zero and $E\varepsilon_t^2 = \sigma^2$. The pre-break mean $y_t$ is $\mu$ and the post-break mean is $\mu + \delta$. Assume we know a one-time break occurs, but the break point $\rho_0$ and parameters $(\mu, \delta, \sigma^2)$ are unknown.

The conventional estimation method of the break location in the literature is least-squares. One obtains the LS estimator by finding a value $k$ that minimizes the objective function $S_{T}(k)^2$, which is the sum of squared residuals (SSR) under the assumption that $k$ is the break date, $S_{T}(k)^2 =\sum_{t=1}^{k}(y_t - \bar{y}_{k})^2 + \sum_{t=k+1}^{T}(y_t - \bar{y}_{k}^{*})^2$, where $\bar{y}_{k} = k^{-1}\sum_{j=1}^{k}y_j$ and $\bar{y}_{k}^{*} = (T-k)^{-1}\sum_{j=k+1}^{T}y_j$ are pre- and post-break LS estimates under break date $k$, respectively. Following \citeauthor{Bai1994}'s \citeyearpar{Bai1994} expression, I use the identity $\sum_{t=1}^{T}(y_t - \bar{y})^2 = S_{T}(k)^2 + TV_{T}(k)^{2}$ (\citeauthor{Amemiya1985} \citeyear{Amemiya1985}), where $V_{T}(k)^2  = k/T(1-k/T)\left(\bar{y}_{k}^{*} - \bar{y}_{k}\right)^2$, to substitute for the SSR. Then the LS estimator of the break date is equivalent to
\begin{equation}\label{Def:hatrhols_breakmean}
    \hat{k}_{LS} = \argmax_{k=1,\ldots,T-1} \left\vert V_{T}(k) \right\vert, \;\;\;\; \hat{\rho}_{LS} = \hat{k}_{LS}/T.
\end{equation}
Denote $\rho = k/T$ and $\rho_0 = k_0/T$. An issue with the LS estimator $\hat{\rho}_{LS}$ is that under a small magnitude $\vert \delta \vert$, $\hat{\rho}_{LS}$ has a finite distribution that is tri-modal with two modes at the ends of the unit interval and one mode at the true break point $\rho_0$.

A break magnitude that is statistically small is not necessarily small in an economic sense. For example, quarterly U.S. real GDP growth rate from 1970Q1 to 2018Q2 has a mean of 0.68 and a standard deviation of around 0.8 percent. A break that decreases the mean quarterly growth rate by 0.3 percentage points (a 1.2 percentage point decrease in yearly growth) is a significant event for the economy. Suppose model (\ref{eq:DGP}) has parameter values similar to the U.S. real GDP growth rate; assume $\rho_0 = 0.3$, the pre-break mean is $\mu = 0.88$ percent and the shift in the mean of growth rate is $\delta = -0.29$. The expectation of $y_t$ is $\mu+(1-\rho_0)\delta = 0.68$, which matches the quarterly U.S. real GDP growth rate. Suppose we have $T=100$ observations and Gaussian disturbances $\varepsilon_t \overset{i.i.d.}{\sim} N(0,0.8^2)$. The left plot of Figure \ref{fig:example_gdp_rho} shows the finite sample distribution of the LS estimator of $\rho$ from a Monte Carlo simulation with 2,000 replications. The LS estimator fails to accurately detect the break that occurs in the constant term of a univariate linear regression model. Thus, we expect that in practice, structural breaks that are economically important are not large enough for the LS estimator to detect in many cases.

This study focuses on such empirically relevant breaks that are not ``large" enough. I follow the approach of \citet*{Elliott2007} to provide an asymptotic approximation to finite sample properties under this small break magnitude. The break magnitude has the same order as sampling uncertainty, $\delta = T^{-1/2}d$, where $d$ is fixed. These asymptotics reflect an important feature of finite sample properties under moderate breaks, because the $p$ values of tests for breaks are typically significant, but not zero. See \citeauthor{Elliott2007} (\citeyear{Elliott2007}, \citeyear{Elliott2014}) for details on the justification of this break magnitude.

\begin{figure}[H]
    \centering
    \begin{subfigure}{.48\textwidth}
  \centering
    \includegraphics[width=\linewidth]{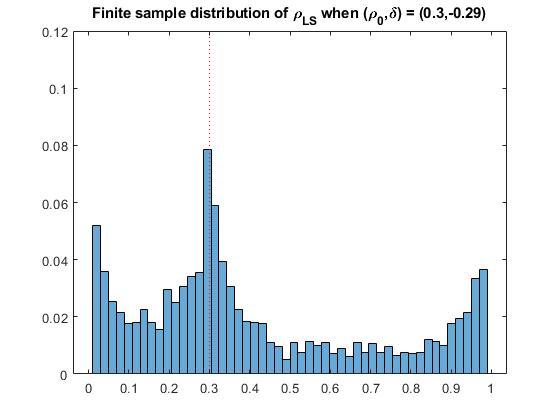}
\end{subfigure}
\begin{subfigure}{.48\textwidth}
  \centering
    \includegraphics[width=\linewidth]{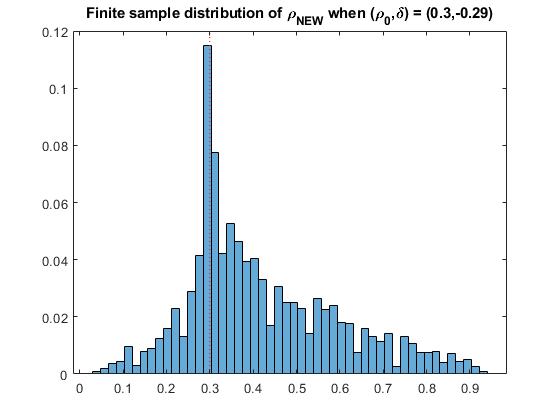}
\end{subfigure}
    \caption{Finite sample distribution of the LS estimator (left) and new estimator (right) of break point $\rho$ with weight function $w(\rho) = (\rho(1-\rho))^{1/2}$ when $(\rho_0,\delta) = (0.3, -0.29)$, $T = 100$, and $\varepsilon_t \overset{i.i.d.}{\sim} N(0,0.8^2)$ with 2,000 replications.}
\label{fig:example_gdp_rho}
\end{figure}

Importantly, in literature, it is standard to trim the boundaries of the optimization space so that $\hat{k}_{LS}$ in (\ref{Def:hatrhols_breakmean}) is the argmax function across $k = [\alpha T], \ldots, [(1-\alpha) T]$ for some $0 < \alpha < 1/2$. Trimming the optimization space may help reduce the build-up mass at the boundaries of the finite sample distribution; however, this has its own drawbacks. Figure \ref{fig:optspace_rho50} shows the finite sample distribution of the LS estimator of $\rho$ with various fractions of trimming, $\alpha \in \{0, 0.1, 0.15, 0.2\}$, from a Monte Carlo simulation with 2,000 replications. Under small break magnitudes $\delta = T^{-1/2}d$, $d \in \{2, 4\}$, the modes at the boundaries remain even after trimming. It is unclear if there is a trade-off between the break size and how large a trimming is needed. With larger trimming ($\alpha = 0.2$), the mass at the boundaries accumulate even more. In addition, there is no reason to believe a break occurs in a restricted period. Thus, we need an alternative method to resolve this issue.

\begin{figure}[H]
    \centering
    \begin{subfigure}{.48\textwidth}
  \centering
    \includegraphics[width=\linewidth]{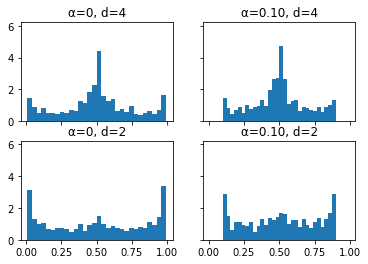}
\end{subfigure}
\begin{subfigure}{.48\textwidth}
  \centering
    \includegraphics[width=\linewidth]{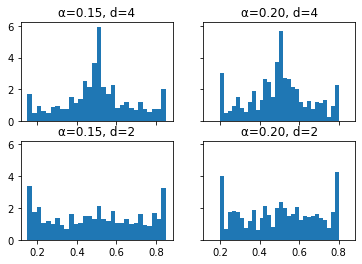}
\end{subfigure}
    \caption{Finite sample distribution of $\hat{\rho}_{LS}$ when $(\rho_0,\delta) = (0.5, 4T^{-1/2})$ (upper plots), $(\rho_0,\delta) = (0.5, 2T^{-1/2})$ (lower plots) $T = 100$, and $\varepsilon_t \overset{i.i.d}{\sim} N(0, 1)$ with 2,000 replications. Each distribution has different trimmed optimization space $[\alpha, 1-\alpha]$, with $\alpha \in \{0, 0.1, 0.15, 0.2\}$.}
    \label{fig:optspace_rho50}
\end{figure}

The finite sample distribution of the LS estimator has build-up mass at boundaries under small break magnitudes because of how the objective function is constructed. For each potential break date $k$, the objective function is constructed by partitioning the sample into two sub-samples, before and after $k$. Each sub-sample is used to estimate two different means, $\bar{y}_k$ and $\bar{y}_k^*$. If $k$ is near 1, the pre-break sub-sample size $k$ is small; similarly, if $k$ is near $T-1$, the post-break sub-sample size $T-k$ is small\footnote{Estimation theory does not require $\rho$ to be bounded away from zero and one, provided that a change point is assumed to exist. However, identification in a finite sample typically needs more than one observation pre- and post-break. In Section \ref{sec:multimodel}, I assume the break point exists; see Assumption \ref{A:multimodel}$(i)$.}. Hence, when the potential break date $k$ of $\left\vert V_{T}(k) \right\vert$ is near the boundaries, the estimates of pre- or post-break mean are imprecise because of the small sub-sample size. Estimation uncertainty at boundaries distorts picking up the true break location if the break magnitude is small relative to sampling variability. 

Because the issue arises from the large variance of the objective function at the boundaries, one can think of shrinking the variance accordingly. Suppose there are non-negative ``weights" $\omega_k$ imposed on the LS objective function so that $k$ with a large estimation error has smaller weights than $k$ with a small estimation error. When $k=1$ and $T-1$, weights near zero are imposed, which implies the variance of the weighted objective function $\omega_k \vert V_T(k)\vert$ would shrink toward zero. If the sample period is normalized into a unit interval, the weights are represented by a continuous function $\omega(\rho)$ on $\rho \in [0, 1]$, which is zero at $\rho \in \{0,1\}$, and has positive values otherwise. A continuous function with such properties would look like an inverse U-shaped (or concave downward) function on the unit interval.

I propose a new break point estimator to maximize the value of the objective function $\left\vert Q_{T}(k) \right\vert$, equal to weights $\omega_k$ multiplied by the LS objective $\left\vert V_{T}(k) \right\vert$.
\begin{gather}\label{Def:newestimator_breakmean}
    \hat{k} = \argmax_{k =1,\ldots,T-1} \left\vert Q_{T}(k) \right\vert, \;\;\;\; \hat{\rho} = \hat{k}/T \\
\nonumber
    \left\vert Q_{T}(k)\right\vert := \omega_k \left\vert V_{T}(k) \right\vert= \omega_k \left(\frac{k(T-k)}{T^2}\right)^{1/2}\left\vert\bar{y}_{k}^{*} - \bar{y}_{k}\right\vert. 
\end{gather}
The weight function shrinks the variance of the LS objective function when $k$ is near the boundaries, and thus, the maximizing value $\hat{k}$ is less likely to pick either end. The right plot of Figure \ref{fig:example_gdp_rho} shows the finite sample distribution of the break point estimator (\ref{Def:newestimator_breakmean}) under the DGP calibrated from the U.S. real GDP growth rate. As expected, the break point estimator has flat tails at boundaries with a mode at the true break point $\rho_0 = 0.3$, whereas the LS estimator has modes at $0.01$ and $0.99$. Section \ref{sec:montecarlo} provides additional Monte Carlo simulations comparing the two estimators.

The break point estimator (\ref{Def:newestimator_breakmean}) is easy to implement, as we simply modify the objective function by multiplying the weight function. It is a generalization of LS estimation because the LS estimator is a special case, when $\omega(\rho) = 1$. Moreover, by employing the weight function, we no longer need to trim the search grid, because the boundaries have zero weight. Section \ref{sec:multimodel} provides a set of conditions on the weight function that ensures consistency of the break point estimator under a general linear regression model. 

I also suggest a representative weight function $\omega(\rho) = (\rho(1-\rho))^{1/2}$ under model (\ref{eq:DGP}) (see Section \ref{sec:multimodel} for its analogue under a model with multiple regressors). The weight function has two interpretations. First, it is related to the weighting function from \citet*{Anderson1954}, which tests whether the sample is drawn from a particular distribution. A non-negative weight function is chosen to accentuate the boundaries of the sample space, where the test is desired to have sensitivity. If the cumulative distribution function (cdf) under the null hypothesis is $F(\cdot)$, the weight function is $\left[F(x)(1-F(x))\right]^{-1}$, which increases as $x$ approaches the boundaries of the sample space. In contrast, I want to down-weight the boundaries of the parameter space $\rho \in [0, 1]$. If $\rho$ is a random variable with cdf $F(\rho)$, the weight would be the reciprocal of the weight function in \citet*{Anderson1954}, $F(\rho)(1-F(\rho))$. We further assume that $\rho$ is uniformly distributed on the unit interval, $F(\rho) = \rho$. We obtain the weight function $\rho(1-\rho)$, which down-weights the variance of the LS objective function $V_T(k)^2$ near the boundaries.

Second, from a Bayesian perspective, the weight function $\omega(\rho)$ can be interpreted as a prior on parameters $\delta$ and $\rho$. We assume the Gaussian disturbances in (\ref{eq:DGP}), $\omega(\rho) = (\rho(1-\rho))^{1/2}$ are equivalent to the square root of the Fisher information up to a constant. The Fisher information is interpreted as a way to measure the amount of information the data gives us about the unknown parameter $\delta$, given $\rho$. A prior distribution based on the Fisher information reflects our belief that a structural break is less likely to occur near the boundaries. See Appendix A for details.

\section{Partial Break with Multiple Regressors}\label{sec:multimodel}

This section provides consistency of the break point estimator under a general linear regression model with multiple regressors. The model incorporates a partial break in coefficients and assumes that a one-time break occurs at an unknown date $k_0 = [\rho_0 T]$ with $\rho_0 \in (0,1)$. I follow the notations of \citet*{Bai1997} by denoting the vector of variables associated with a stable coefficient as $w_t$ and the variables associated with coefficients under a break as $z_t$. Let $x_{t} = (w_t', z_t')'$ be a $(p \times 1)$ vector and $z_{t}$ is a $(q\times 1)$ vector with $q \leq p$,
\begin{equation}\label{eq:DGPmulti}
    y_{t} = \begin{cases}
        x_{t}'\beta + \varepsilon_{t} & \text{if } t=1,\ldots,k_0 \\
        x_{t}'\beta + z_{t}'\delta_T + \varepsilon_{t} & \text{if } t=k_0+1,\ldots, T, 
    \end{cases}
\end{equation}
where $\varepsilon_t$ is a mean zero error term. In general, $z_{t}$ can be expressed as a linear function of $x_{t}$ so that $z_{t} = R'x_{t}$, where $R$ is a $(p\times q)$ matrix with full column rank. Let $Y = (y_1,\ldots,y_T)'$ and define $X_{k} := (0,\ldots,0,x_{k+1},\ldots,x_{T})'$ and $X_{0} :=(0,\ldots,0,x_{k_0 +1},$ $\ldots,x_{T})'$. Define $Z_k$ and $Z_0$ analogously so that $Z_k = X_k R$ and $Z_0 = X_0 R$. Let $M := I-X(X'X)^{-1}X'$ and use the maximal invariant to eliminate the nuisance parameter $\beta$. 

The subscript on $\delta_T$ shows that the break magnitude may depend on the sample size. We assume the break magnitude is outside the local $T^{-1/2}$ neighborhood of zero. This is because the break point is not consistently estimable if the break magnitude is in the local $T^{-1/2}$ neighborhood of zero. This corresponds to the case of small break magnitudes discussed previously, $\delta_T = O(T^{-1/2})$, in which structural break tests have asymptotic power strictly less than one. In this section, I proceed by assuming that $\delta_T$ is fixed or it converges to zero at a rate \emph{slower} than $T^{-1/2}$ so that the power of the structural break tests converge to one (Assumption \ref{A:consistency_delta}).

Let $\bar{S} =Y'MY$, and denote $S_{T}(k)^2$ as the SSR regressing $MY$ on $MZ_k$. The LS estimator of break date $\hat{k}_{LS}$ is the value that minimizes $S_{T}(k)^2$, and thus, maximizes $V_{T}(k)^2$ from the identity $\bar{S} = S_{T}(k)^2 + V_{T}(k)^2$ (\citeauthor{Amemiya1985} \citeyear{Amemiya1985}),
\begin{gather*}
    \hat{k}_{LS} = \argmax_{k} V_{T}(k)^2, \;\;\;\; \hat{\rho}_{LS} = \hat{k}_{LS}/T \\
    V_{T}(k)^2 := \hat{\delta}_{k}'(Z_{k}'MZ_{k})\hat{\delta}_{k},
\end{gather*}
where $\hat{\delta}_{k}$ is the LS estimate of $\delta_T$ by regressing $MY$ on $MZ_k$. Note that $V_{T}(k)^2$ is non-negative from the inner product of the vector $(Z_{k}'MZ_{k})^{1/2}\hat{\delta}_{k}$. The LS objective function is modified by multiplying a $(q \times q)$ positive definite weight matrix $\Omega_k$, which is a generalization of $\omega_k$ in Section \ref{sec:motivation} for linear regression models with multiple regressors. Decompose the weight matrix so that $\Omega_k = \Omega_k^{1/2\prime}\Omega_k^{1/2}$ and multiply $\Omega_k^{1/2}$ to the vector $(Z_{k}'MZ_{k})^{1/2}\hat{\delta}_{k}$. Take the inner product and obtain the objective function $Q_T(k)^2 :=  \hat{\delta}_{k}'(Z_{k}'MZ_{k})^{1/2}\Omega_k (Z_{k}'MZ_{k})^{1/2}\hat{\delta}_{k}$. Then the estimator of the break point is 
\begin{equation}\label{eq:newhatk_multi}
    \hat{k} = \argmax_{k} Q_T(k)^2, \;\;\;\; \hat{\rho} = \hat{k}/T.
\end{equation}
An example of the weight matrix is $\Omega_k = T^{-1}Z_k'MZ_k$, which is equal to the square of the representative weight function $\omega_k^2 = k/T(1-k/T)$ in model (\ref{eq:DGP}) if $R=I$ and $X$ is a $(T\times 1)$ vector of ones. Similarly, the matrix $T^{-1}Z_k'MZ_k$ ``decreases" as $k$ approaches either end of the sample from the following rearrangement of terms:
\begin{align}
\nonumber
    T^{-1}Z_k'MZ_k & = T^{-1}[Z_{k}'Z_{k} - Z_{k}'X(X'X)^{-1}X'Z_{k}] \\
\label{eq:zkmzk}
    & = T^{-1}R'(X_{k}'X_{k})(X'X)^{-1}(X'X-X_{k}'X_{k})R.
\end{align}

I prove the consistency of the break point estimator $\hat{\rho}$ in (\ref{eq:newhatk_multi}) under regularity conditions on model (\ref{eq:DGPmulti}) and weight matrix $\Omega_k$. The notation $\norm{\cdot}$ denotes the Euclidean norm $\norm{x} = \left(\sum_{i=1}^{p}x_i^2\right)^{1/2}$ for $x \in \mathbb{R}^p$. For a matrix $A$, $\norm{A}$ represents the vector induced norm $\norm{A} = \sup_x \norm{Ax}/\norm{x}$ for $x \in \mathbb{R}^p$ and $A \in \mathbb{R}^{p \times p}$.

\begin{assumption}\label{A:multimodel}
\begin{enumerate}[(i)]
    \item $k_0 = [\rho_0 T]$, where $\rho_0 \in [\alpha, 1-\alpha]$, $0 < \alpha < \frac{1}{2}$;
    \item The data $\{y_{tT}, x_{tT}, z_{tT}: 1\leq t \leq T, T \geq 1\}$ form a triangular array. The subscript $T$ is omitted for simplicity. In addition, $z_t = R'x_t$, where $R$ is $p\times q$, rank($R$) = $q$, $z_t \in \mathbb{R}^{q}$, $x_t \in \mathbb{R}^{p}$, and $q \leq p$;
    \item The matrices $\left(j^{-1}\sum_{t=1}^{j}x_t x_t'\right)$, $\left(j^{-1}\sum_{t=T-j+1}^{T}x_t x_t'\right)$, $\left(j^{-1}\sum_{t=k_0-j+1}^{k_0}x_t x_t'\right)$ and\\ $\left(j^{-1}\sum_{t=k_0+1}^{k_0+j}x_t x_t'\right)$ have minimum eigenvalues bounded away from zero in probability for all large $j$. For simplicity, assume these matrices are invertible when $j \geq p$. In addition, these four matrices have stochastically bounded norms uniformly in $j$. That is, for example, $\sup_{j \geq 1} \norm{ j^{-1} \sum_{t=1}^j x_t x_t'}$ is stochastically bounded;
    \item $T^{-1}\sum_{t=1}^{[sT]}x_t x_t' \overset{p}{\rightarrow} s\Sigma_x$ uniformly in $s \in [0,1]$, where $\Sigma_x$ is a nonrandom positive definite matrix;
    \item For random regressors, $\sup_t E \norm{x_t}^{4+\gamma} \leq K$ for some $\gamma >0 $ and $K < \infty$;
    \item The disturbance $\varepsilon_t$ is independent of the regressor $x_s$ for all $t$ and $s$. For an increasing sequence of $\sigma$-fields $\mathcal{F}_t$, $\{\varepsilon_t, \mathcal{F}_t\}$ form a sequence of $L^r$-mixingale sequence with $r=4+\gamma$ for some $\gamma > 0$ (\citet*{McLeish1975} and \citet*{Andrews1988}). That is, there exists nonnegative constants $\{c_t: t\geq 1\}$ and $\{\psi_j: j\geq 0\}$, such that $\psi_j \downarrow 0$ as $j \rightarrow \infty$ and for all $t \geq 1$ and $j \geq 0$, we have (a) $E\left\vert E(\varepsilon_t \vert \mathcal{F}_{t-j}) \right\vert^{r} \leq c_t^r \psi_{j}^r$, (b) $E\left\vert \varepsilon_t - E(\varepsilon_t\vert\mathcal{F}_{t+j})\right\vert^{r} \leq c_t^r \psi_{j+1}^r$, (c) $\max_j \vert c_j \vert < K < \infty$, (d) $\sum_j j^{1+\kappa}\psi_j < \infty$ for some $\kappa >0$.
\end{enumerate}
\end{assumption}

\begin{assumption}\label{A:Omegak} 
$\Omega_k$ is a positive definite $(q\times q)$ matrix ($q =$dim($z_t$)) that is a continuous function of data $\{y_t, x_t, z_t; 1\leq t \leq T\}$ and have stochastically bounded norms uniformly in $k=1,\ldots,T-1$. In addition, for any nonzero vector $c \in \mathbb{R}^q$,
\begin{equation*}
    \norm{\Omega_{k_0}^{1/2}(Z_0'MZ_0)^{1/2}c} > \norm{\Omega_k^{1/2}(Z_k'MZ_k)^{-1/2}(Z_k'MZ_0)c}
\end{equation*}
holds for all $k$ and $k_0$, where $M = I-X(X'X)^{-1}X'$. When $k/T \rightarrow \rho$ as $T \rightarrow \infty$, then $\Omega_k \overset{p}{\rightarrow} \bar{\Omega}(\rho)$, where $\bar{\Omega}(\rho)$ is a differentiable function of $\rho$, element-wise.
\end{assumption}

The conditions of assumption \ref{A:multimodel} are similar to assumptions A1 to A6 in \citet*{Bai1997}, with additional restrictions $(iv)$ and $(vi)$. Assumption \ref{A:multimodel}$(vi)$ allows for general serial correlation in disturbances and requires $x_t$ to be strictly exogeneous. This is because $\Omega_k$ depends on the moments of regressors and we want to impose zero weights on the boundaries of the unit interval. For instance, if the second moments of $z_t$ changes at $\rho_0$, the boundaries of the unit interval may have positive weights that depend on the distribution of $z_t$. These cases are avoided under strict exogeneity because $\Omega_k$ converges in probability to a nonrandom matrix that varies across $\rho$ only. Note that if $\Omega_k$ is a non-stochastic matrix that satisfies the norm inequality in Assumption \ref{A:Omegak}, consistency holds under weakly exogeneous regressors (see Assumption \ref{A:weakexog}).

Assumption \ref{A:Omegak} guarantees that the matrix
\begin{align}
\nonumber
  A_{T}(k) & :=\frac{1}{\vert k_0 - k\vert}\left[  (Z_0'MZ_0)^{1/2}\Omega_{k_0}(Z_0'MZ_0)^{1/2} \right. \\
\label{eq:Ak}
  & \hspace{0.7cm} \left. -(Z_0'MZ_k)(Z_k'MZ_k)^{-1/2}\Omega_k(Z_k'MZ_k)^{-1/2}(Z_k'MZ_0)\right] 
\end{align}
is positive definite, and hence, $\norm{A_{T}(k)} \geq \lambda_{\min}(A_{T}(k)) > 0$, where $\lambda_{\min}$ denotes the minimum eigenvalue of $A_{T}(k)$. The condition can be interpreted as follows: for simplicity, consider the univariate model (\ref{eq:DGP}). Assumption \ref{A:Omegak} is equivalent to $\vert \omega'(\rho)/ \omega(\rho) \vert < (2\rho(1-\rho))^{-1}$ for all $\rho$, where $\omega'(\rho) = \partial \omega(x)/\partial x \vert_{x=\rho}$. The slope magnitude of the logarithm of $\omega(\rho)$ has an upper bound that increases as $\rho$ approaches zero or one. A sufficient condition is the function $\omega(\rho) = (\rho(1-\rho))^{\gamma}$, with $-1/2 \leq \gamma \leq 1/2$.\footnote{The function $\omega(\rho) = (\rho(1-\rho))^{\gamma}$, with $-1/2 \leq \gamma \leq 1/2$, allows a convex function that has a large weight on the boundaries. This is because I assume ``large" break magnitudes (Assumption \ref{A:consistency_delta}) for consistency of the estimator. That is, if the break magnitude is large enough, we no longer have build-up mass at the boundaries and imposing large weights does not matter for consistency.} Note that this is sufficient under Assumption \ref{A:multimodel}$(i)$. If $\alpha$ is arbitrarily close to zero, it may restrict the functional of $\omega(\cdot)$. The weight matrix $\Omega_k$ may be close to a singular matrix in a finite sample if $\alpha$ is extremely close to zero under model (\ref{eq:DGPmulti}). Under Assumption \ref{A:multimodel}$(iv)$, the weight matrix converges in probability to a function of $\rho$ and $\Sigma_x$ as $T$ increases. Because $\bar{\Omega}(\rho)$ is a differentiable function of $\rho$ element-wise, $\norm{\Omega_k - \Omega_{k_0}} \leq b\vert k-k_0 \vert/T$ for some finite $b > 0$ and all $k$.

\begin{assumption}\label{A:consistency_delta}
$\delta_T \rightarrow 0$ and $T^{1/2-\gamma}\delta_T \rightarrow \infty$ for some $\gamma \in \left(0,\frac{1}{2}\right)$.
\end{assumption}

The consistency of the break point estimator is proved by showing that if $\delta_T \neq 0$, then with high probability, $Q_T(k)^2$ can only be maximized near the true break $k_0$. The objective function $Q_T(k)^2$ is defined in (\ref{eq:newhatk_multi}) and $\hat{\delta}_k$ is the LS estimator of the break magnitude, assuming that $k$ is the break date: $\hat{\delta}_k = (Z_k'MZ_k)^{-1}(Z_k'MZ_0)\delta_T + (Z_k'MZ_k)^{-1}Z_k'M\varepsilon$. If $k = k_0$, then $\hat{\delta}_{k_0} = \delta_T + (Z_0'MZ_0)^{-1}Z_0'M\varepsilon$.

\begin{theorem}\label{Thm:rateofconvergence}
Under Assumptions \ref{A:multimodel} and \ref{A:Omegak}, suppose $\delta_T$ is fixed or shrinking toward zero such that Assumption \ref{A:consistency_delta} is satisfied. Then, $\hat{k} = k_0 + O_{p}(\norm{\delta_T}^{-2})$ and the break point estimator $\hat{\rho}$ in (\ref{eq:newhatk_multi}) is consistent.
\begin{equation*}
    \vert\hat{\rho}-\rho_0 \vert = O_{p}(T^{-1}\norm{\delta_T}^{-2})  = o_p(1).
\end{equation*}
\end{theorem}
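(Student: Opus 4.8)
\emph{Proof proposal.} The plan is to exploit that $\hat{k}$ maximizes $Q_T(k)^2$: it suffices to show that for every $\eta>0$ there is an $M<\infty$ such that
\[
  P\!\left(\max_{|k-k_0|>M\norm{\delta_T}^{-2}} Q_T(k)^2 \ge Q_T(k_0)^2\right) < \eta
\]
for all large $T$, since this forces $\hat{k}-k_0=O_p(\norm{\delta_T}^{-2})$ and then $\hat{\rho}-\rho_0=O_p(T^{-1}\norm{\delta_T}^{-2})$, which is $o_p(1)$ because Assumption \ref{A:consistency_delta} yields $T\norm{\delta_T}^2\to\infty$. By symmetry I would treat $k<k_0$ and $k>k_0$ separately; Assumption \ref{A:multimodel}$(i)$ keeps $\rho_0$ in $[\alpha,1-\alpha]$ so both subsamples around $k_0$ grow with $T$. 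It is convenient first to establish weak consistency ($\hat{\rho}\xrightarrow{p}\rho_0$) so that attention can be restricted to $k$ in a neighborhood of $k_0$ on which the sub-design matrices of Assumption \ref{A:multimodel}$(iii)$--$(iv)$ behave regularly, and then to sharpen the localization to the $\norm{\delta_T}^{-2}$ rate.

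The key algebraic step is a signal/noise decomposition of the objective. Writing $(Z_k'MZ_k)^{1/2}\hat{\delta}_k = g_k + n_k$ with signal $g_k := (Z_k'MZ_k)^{-1/2}(Z_k'MZ_0)\delta_T$ and noise $n_k := (Z_k'MZ_k)^{-1/2}Z_k'M\varepsilon$, so that $Q_T(k)^2 = (g_k+n_k)'\Omega_k(g_k+n_k)$, I would split
\[
  Q_T(k)^2 - Q_T(k_0)^2 = \big(g_k'\Omega_k g_k - g_{k_0}'\Omega_{k_0}g_{k_0}\big) + 2\big(g_k'\Omega_k n_k - g_{k_0}'\Omega_{k_0}n_{k_0}\big) + \big(n_k'\Omega_k n_k - n_{k_0}'\Omega_{k_0}n_{k_0}\big).
\]
Since $g_{k_0}=(Z_0'MZ_0)^{1/2}\delta_T$, the first (deterministic) bracket is exactly $-|k-k_0|\,\delta_T'A_T(k)\delta_T$ with $A_T(k)$ as in (\ref{eq:Ak}). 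Assumption \ref{A:Omegak} makes $A_T(k)$ positive definite with $\lambda_{\min}(A_T(k))$ bounded away from zero uniformly in $k$ (using Assumption \ref{A:multimodel}$(iv)$ and continuity of $\bar{\Omega}$), so this bracket is at most $-c\,|k-k_0|\,\norm{\delta_T}^2$ for some $c>0$ with probability approaching one. This is the term that must dominate.

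The substance of the proof is bounding the two stochastic brackets uniformly over $k$. The binding one is the cross bracket: heuristically it scales like $O_p(\norm{\delta_T}\,|k-k_0|^{1/2})$ --- a signal of size $\norm{\delta_T}$ against noise accumulated over the $|k-k_0|$ observations separating $k$ from $k_0$ --- and balancing $c\,|k-k_0|\,\norm{\delta_T}^2$ against $\norm{\delta_T}\,|k-k_0|^{1/2}$ is precisely what produces the $\norm{\delta_T}^{-2}$ rate. To make this rigorous I would (i) replace $\Omega_k$ by $\Omega_{k_0}$ via the Lipschitz bound $\norm{\Omega_k-\Omega_{k_0}}\le b|k-k_0|/T$ noted after Assumption \ref{A:Omegak}, the remainder being of smaller order; (ii) express the $k$-increments of the cross and noise brackets as partial sums of the array $\{x_t\varepsilon_t\}$ over the interval between $k$ and $k_0$, controlled by the sub-design matrices of Assumption \ref{A:multimodel}$(iii)$; and (iii) apply a H\'ajek--R\'enyi / maximal inequality for the $L^{r}$-mixingale $\{\varepsilon_t\}$ of Assumption \ref{A:multimodel}$(vi)$, together with the moment bound \ref{A:multimodel}$(v)$, to obtain
\[
  \sup_{|k-k_0|>M\norm{\delta_T}^{-2}} \frac{\big|\text{cross}_k\big| + \big|\text{noise}_k\big|}{|k-k_0|\,\norm{\delta_T}^2} \xrightarrow{p} 0 \quad\text{as } M\to\infty .
\]
A check on orders (using $T^{1/2}\delta_T\to\infty$ from Assumption \ref{A:consistency_delta}) confirms that the pure-noise bracket is of smaller order than the cross bracket and hence does not bind.

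The main obstacle is step (iii): establishing a maximal inequality strong enough to control $\max_k\norm{Z_k'M\varepsilon}$-type partial sums under only the mixingale dependence of Assumption \ref{A:multimodel}$(vi)$, while simultaneously tracking the $k$-dependence introduced by the projection $M=I-X(X'X)^{-1}X'$ and by the varying weight $\Omega_k$. This is where the summability condition $\sum_j j^{1+\kappa}\psi_j<\infty$ and the strict exogeneity of $x_t$ enter. Once the uniform bound above is in hand, combining it with the signal decrease $-c\,|k-k_0|\,\norm{\delta_T}^2$ shows that $Q_T(k)^2 - Q_T(k_0)^2<0$ with probability at least $1-\eta$ for all $|k-k_0|>M\norm{\delta_T}^{-2}$ once $M$ is large, which delivers the displayed rate and hence consistency.
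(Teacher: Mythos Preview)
Your proposal is correct and follows essentially the same route as the paper: the decomposition $Q_T(k)^2-Q_T(k_0)^2=-|k-k_0|G_T(k)+H_T(k)$ in the paper is precisely your signal/cross/noise split, Lemma~\ref{Lem:consistent} is the preliminary consistency step you mention, Lemma~\ref{Lem:HajekRenyi} is the H\'ajek--R\'enyi inequality you flag as the main tool for step~(iii), and the rewriting via $Z_0=Z_k-Z_\Delta\,\text{sgn}(k_0-k)$ together with the Lipschitz bound on $\Omega_k$ is exactly your step~(i)--(ii). The one place your sketch understates the work is the claim that $\lambda_{\min}(A_T(k))$ is bounded away from zero ``using Assumption~\ref{A:multimodel}$(iv)$ and continuity of $\bar{\Omega}$'': in the paper this is Lemma~\ref{Lem:Gkinversebound}, whose proof requires a nontrivial projection-matrix argument (introducing $F_k,F_0$ and the matrix $B$) rather than following directly from Assumption~\ref{A:Omegak} and design-matrix convergence.
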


See Appendix B for proof of Theorem \ref{Thm:rateofconvergence}. For weakly exogenous regressors, the break point estimator is consistent with the same rate of convergence in Theorem \ref{Thm:rateofconvergence}, under the following conditions that substitute Assumptions \ref{A:multimodel} and \ref{A:Omegak}.

\begin{assumption}\label{A:weakexog}
Assume the following conditions in model (\ref{eq:DGPmulti}) with Assumption \ref{A:multimodel}$(i)$-$(iii)$ and $(v)$.  
\begin{enumerate}[(i)]
    \item $(X'X)/T$ converges in probability to a nonrandom positive definite matrix, as $T \rightarrow \infty$;
    \item $\{\varepsilon_t, \mathcal{F}_t\}$ form a sequence of martingale differences for $\mathcal{F}_t = \sigma$-field $\{\varepsilon_s,x_{s+1}: s\leq t\}$. Moreover, for all $t$, $E\vert\varepsilon_t\vert^{4+\gamma} < K$ for some $K < \infty$ and $\gamma > 0$;
    \item The weight matrix $\Omega_k$ is a nonrandom $(q\times q)$ positive definite matrix, and for any nonzero vector $c \in \mathbb{R}^q$,
    \begin{equation*}
        \norm{\Omega_{k_0}^{1/2}(Z_0'MZ_0)^{1/2}c} > \norm{\Omega_k^{1/2}(Z_k'MZ_k)^{-1/2}(Z_k'MZ_0)c}
    \end{equation*}
    holds for all $k$ and $k_0$, where $M = I-X(X'X)^{-1}X'$. $\Omega_k$ converges to $\bar{\Omega}(\rho)$ as $k/T \rightarrow \infty$, which is a differentiable function of $\rho$ on the unit interval.
    \end{enumerate}
\end{assumption}

\begin{theorem}\label{Thm:weakexog_rateofconv}
Under Assumption \ref{A:weakexog}, suppose $\delta_T$ is fixed or shrinking toward zero that satisfies $\delta_T \rightarrow 0$ and $T^{1/2-\gamma}\delta_T \rightarrow \infty$ for some $\gamma \in (0,\frac{1}{2})$. Then, $\hat{k} = k_0 + O_{p}(\norm{\delta_T}^{-2})$ and the break point estimator $\hat{\rho}$ in (\ref{eq:newhatk_multi}) is consistent.
\begin{equation*}
    \vert\hat{\rho}-\rho_0 \vert = O_{p}(T^{-1}\norm{\delta_T}^{-2})  = o_p(1).
\end{equation*}
\end{theorem}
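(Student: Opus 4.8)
The plan is to exploit the fact that the conclusion of Theorem \ref{Thm:weakexog_rateofconv} is word-for-word that of Theorem \ref{Thm:rateofconvergence}, so I would re-run the proof of Theorem \ref{Thm:rateofconvergence} and check that each building block survives the replacement of the strict-exogeneity/mixingale hypotheses of Assumptions \ref{A:multimodel} and \ref{A:Omegak} by the weak-exogeneity/martingale-difference hypotheses of Assumption \ref{A:weakexog}. Concretely, I would keep the identical decomposition (\ref{eq:QinGandH}), namely $Q_T(k)^2-Q_T(k_0)^2 = -\vert k_0-k\vert G_T(k) + H_T(k)$, reproduce the analogues of Lemma \ref{Lem:Gkinversebound} (lower bound on $G_T$), Lemma \ref{Lem:HajekRenyi} (maximal inequality) and Lemma \ref{Lem:consistent} (preliminary consistency and the $O_p(T^{-1/2}\norm{\delta_T}\sqrt{\ln T})$ rate), and then reduce the improved rate to the single bound (\ref{eq:Hkbounded}) exactly as before.

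For the $G_T(k)$ bound the argument is, if anything, simpler: since $\Omega_k$ is nonrandom by Assumption \ref{A:weakexog}$(iii)$ and $(X'X)/T$ converges in probability to a nonrandom positive definite matrix by \ref{A:weakexog}$(i)$, the matrix $A_T(k)$ in (\ref{eq:Ak}) is positive definite with minimum eigenvalue bounded away from zero. The norm inequality postulated in \ref{A:weakexog}$(iii)$ is precisely the one used to render $A_T(k)$ positive definite, and differentiability of $\bar\Omega(\rho)$ again yields $\norm{\Omega_{k_0}-\Omega_k}\le b\vert k-k_0\vert/T$; hence $\inf_{\vert k-k_0\vert>C\norm{\delta_T}^{-2}} G_T(k)\ge\lambda\norm{\delta_T}^2$ with high probability, so Lemma \ref{Lem:Gkinversebound} carries over verbatim. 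The only genuinely new ingredient is the maximal inequality. The key step is that under Assumption \ref{A:weakexog}$(ii)$ the regressor $x_t$ is $\mathcal{F}_{t-1}$-measurable while $\varepsilon_t$ is a martingale difference, so $z_t\varepsilon_t=R'x_t\varepsilon_t$ is itself a martingale-difference sequence with $E[z_t\varepsilon_t\mid\mathcal{F}_{t-1}]=0$. I would therefore replace the mixingale version of Lemma \ref{Lem:HajekRenyi} (borrowed from Bai and Perron) by the classical H\'{a}jek--R\'{e}nyi inequality for martingales, which delivers the same bound $P\bigl(\sup_{m\le k\le T}k^{-1}\norm{\sum_{t=1}^k z_t\varepsilon_t}>c\bigr)\le M/(c^2m)$; the uniform second-moment control giving $M<\infty$ follows from $\sup_t E\norm{x_t}^{4+\gamma}\le K$ in \ref{A:multimodel}$(v)$ and $E\vert\varepsilon_t\vert^{4+\gamma}<K$ in \ref{A:weakexog}$(ii)$.

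With these facts in hand I would reproduce the decomposition (\ref{eq:Hkdivided}) and bound its three pieces as in the proof of Theorem \ref{Thm:rateofconvergence}: the leading cross term $2\delta_T'\Omega_{k_0}(k_0-k)^{-1}Z_\Delta'\varepsilon\,\text{sgn}(k_0-k)$ by the new martingale maximal inequality, the $T^{-1/2}\norm{\delta_T}O_p(1)$ term by the rate condition $T^{1/2-\gamma}\delta_T\to\infty$ (equivalently $(T^{1/2}\norm{\delta_T})^{-1}\to0$), and the $O_p(1)/\vert k_0-k\vert$ term by the restriction $\vert k_0-k\vert\ge C\norm{\delta_T}^{-2}$ on $K_T(C)$. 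This yields (\ref{eq:Hkbounded}) and hence $\vert\hat\rho-\rho_0\vert=O_p(T^{-1}\norm{\delta_T}^{-2})$.

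The main obstacle I anticipate is not any of these bounds individually but confirming that no step of the original argument secretly used \emph{full} strict exogeneity — in particular the orthogonality $E[x_t\varepsilon_s]=0$ for $s<t$, which fails under weak exogeneity because of feedback. I would therefore scrutinize every appearance of $Z_k'M\varepsilon$ and $Z_\Delta'M\varepsilon$, where the projection $M=I-X(X'X)^{-1}X'$ couples all time periods. The point to verify is that $X'\varepsilon=\sum_t x_t\varepsilon_t$ remains an $O_p(T^{1/2})$ martingale-difference sum, so that $(X'X)^{-1}X'\varepsilon=O_p(T^{-1/2})$ still holds; the projection corrections then contribute at most the order $\vert k_0-k\vert T^{-1/2}\norm{\delta_T}O_p(1)$ already absorbed in (\ref{eq:Hkdivided}), and no non-negligible bias from the $x_t$--$\varepsilon_s$ correlation at lags $s<t$ leaks into $G_T(k)$ or the leading term of $H_T(k)$.
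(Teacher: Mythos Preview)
Your proposal is correct and mirrors the paper's own treatment: the paper explicitly omits the proof of Theorem~\ref{Thm:weakexog_rateofconv}, noting only that under Assumption~\ref{A:weakexog} the strong law holds for $x_t\varepsilon_t$ (via Hansen, 1991), that the nonrandom $\Omega_k$ is a differentiable function of $\rho$ so $\norm{\Omega_{k_1}-\Omega_{k_2}}\le c\vert k_1-k_2\vert/T$, and that the argument of Theorem~\ref{Thm:rateofconvergence} then carries over verbatim. Your plan to swap the mixingale H\'{a}jek--R\'{e}nyi inequality for the martingale-difference version (since $z_t\varepsilon_t$ is an $\mathcal{F}_t$-martingale difference) and to verify that the projection terms $Z_k'M\varepsilon$ behave as before is exactly the substance of that omitted argument.
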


The proof of Theorem \ref{Thm:weakexog_rateofconv} is similar to the proof of Theorem \ref{Thm:rateofconvergence}; hence we have omitted it. Under Assumptions \ref{A:multimodel}$(v)$, \ref{Thm:weakexog_rateofconv}$(i)$, and \ref{Thm:weakexog_rateofconv}$(ii)$, the strong law of large numbers holds for $x_t \varepsilon_t$, because the conditions in \citet*{Hansen1991} are satisfied. The weight matrix $\Omega_k$ in Assumption \ref{A:weakexog}$(iii)$ depends on $k/T$ but not on the data $\{x_t,\varepsilon_t\}$. Thus, by setting $\rho = k/T$, $\Omega_k$ is a function of $\rho$, which is assumed to be differentiable with respect to $\rho$. Then, for some finite $c>0$, the bound $\norm{\Omega_{k_1} - \Omega_{k_2}} \leq c \vert k_1-k_2\vert/T$ holds for any $k_1$ and $k_2$. Using these properties, proving the consistency of the estimator under Assumption \ref{A:weakexog} follows the same process as in the proof under Assumptions \ref{A:multimodel} and \ref{A:Omegak}.

Given the consistency of the break point estimator from Theorem \ref{Thm:rateofconvergence} or \ref{Thm:weakexog_rateofconv}, the estimator of the break magnitude corresponding to $\hat{k}$ is consistent and asymptotically normally distributed. Let $\hat{\delta}(\hat{\rho}) = \hat{\delta}_{\hat{k}}$, then the following results hold. The proof is provided in the Appendix.

\begin{corollary}\label{Cor:delta_asym}
Under Assumptions \ref{A:multimodel} and \ref{A:Omegak}, suppose $\delta_T$ is fixed or shrinking toward zero such that Assumption \ref{A:consistency_delta} is satisfied. Let $\hat{\delta}(\hat{\rho})$ be a consistent estimator of $\delta_T$ corresponding to $\hat{k}$, which is defined in (\ref{eq:newhatk_multi}). Then,
\begin{equation*}
    \sqrt{T}\left(\hat{\delta}(\hat{\rho})-\delta_T\right) \overset{d}{\longrightarrow} N\left(0, \bV^{-1}\bU\bV^{-1}\right),
\end{equation*}
where
\begin{equation*}
    \bV := \plim_{T\rightarrow \infty} T^{-1}Z_0'MZ_0, \;\;\; \bU:= \lim_{T\rightarrow \infty} E\left[\left(T^{-1/2}Z_0'M\varepsilon\right)^2\right]. 
\end{equation*}
\end{corollary}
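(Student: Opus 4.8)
The plan is to exploit the fast rate of convergence $\hat{k}=k_0+O_p(\norm{\delta_T}^{-2})$ from Theorem \ref{Thm:rateofconvergence} to show that estimating the break date has no first-order effect on the break-magnitude estimator, so that $\hat{\delta}(\hat{\rho})=\hat{\delta}_{\hat{k}}$ inherits the limit of the infeasible estimator $\hat{\delta}_{k_0}$ evaluated at the true date. Concretely, I would write
\[
\sqrt{T}\bigl(\hat{\delta}_{\hat{k}}-\delta_T\bigr) = \sqrt{T}\bigl(\hat{\delta}_{k_0}-\delta_T\bigr) + \sqrt{T}\bigl(\hat{\delta}_{\hat{k}}-\hat{\delta}_{k_0}\bigr),
\]
and prove that the first term converges to the stated Gaussian law while the second is $o_p(1)$.

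For the first term I would use the closed form $\hat{\delta}_{k_0}=\delta_T+(Z_0'MZ_0)^{-1}Z_0'M\varepsilon$ recorded before the theorem, so that $\sqrt{T}(\hat{\delta}_{k_0}-\delta_T)=(T^{-1}Z_0'MZ_0)^{-1}(T^{-1/2}Z_0'M\varepsilon)$. Under Assumption \ref{A:multimodel}$(iv)$ the first factor converges in probability to $\bV$, and under the moment and mixingale conditions of Assumption \ref{A:multimodel}$(v)$--$(vi)$ a central limit theorem gives $T^{-1/2}Z_0'M\varepsilon\overset{d}{\to}N(0,\bU)$ with $\bU$ as defined. Slutsky's theorem then yields the sandwich variance $\bV^{-1}\bU\bV^{-1}$.

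The substance is showing the second term vanishes. Using $\hat{\delta}_k=(Z_k'MZ_k)^{-1}(Z_k'MZ_0)\delta_T+(Z_k'MZ_k)^{-1}Z_k'M\varepsilon$, I would split $\hat{\delta}_{\hat{k}}-\hat{\delta}_{k_0}$ into a design piece $[(Z_{\hat{k}}'MZ_{\hat{k}})^{-1}(Z_{\hat{k}}'MZ_0)-I]\delta_T$ and a noise piece $(Z_{\hat{k}}'MZ_{\hat{k}})^{-1}Z_{\hat{k}}'M\varepsilon-(Z_0'MZ_0)^{-1}Z_0'M\varepsilon$, and bound each with the identities $Z_0=Z_{\hat{k}}-Z_\Delta\,\text{sgn}(k_0-\hat{k})$ and $Z_k'MZ_k-Z_0'MZ_0=\vert k_0-k\vert O_p(1)$ from (\ref{eq:ZkminusZ0}). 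The design piece equals $-(Z_{\hat{k}}'MZ_{\hat{k}})^{-1}Z_{\hat{k}}'MZ_\Delta\,\text{sgn}(k_0-\hat{k})\,\delta_T$; since $(Z_{\hat{k}}'MZ_{\hat{k}})^{-1}=O_p(T^{-1})$ and $Z_{\hat{k}}'MZ_\Delta=O_p(\vert\hat{k}-k_0\vert)$, it is $O_p(T^{-1}\vert\hat{k}-k_0\vert\norm{\delta_T})=O_p(T^{-1}\norm{\delta_T}^{-1})$, whose $\sqrt{T}$-scaling is $O_p(T^{-1/2}\norm{\delta_T}^{-1})=o_p(1)$ because Assumption \ref{A:consistency_delta} forces $T^{-1/2}\norm{\delta_T}^{-1}=o(T^{-\gamma})$.

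The main obstacle is the noise piece, because $\hat{k}$ depends on $\varepsilon$ and hence $(Z_{\hat{k}}-Z_0)'M\varepsilon=\text{sgn}(k_0-\hat{k})\,Z_\Delta'M\varepsilon$ is a sum over a random index set and cannot be handled by a pointwise central limit theorem. Here I would first invoke the rate in Theorem \ref{Thm:rateofconvergence} to restrict, with probability at least $1-\epsilon$, to the window $\vert\hat{k}-k_0\vert\leq C\norm{\delta_T}^{-2}$, and then apply a maximal inequality of H\'{a}jek--R\'{e}nyi type (cf. Lemma \ref{Lem:HajekRenyi}) to the partial sums of $\{z_t\varepsilon_t\}$: the leading term $Z_\Delta'\varepsilon$ scales like the square root of the number of summands, giving $\sup_{\vert k-k_0\vert\leq C\norm{\delta_T}^{-2}}\norm{Z_\Delta'M\varepsilon}=O_p(\norm{\delta_T}^{-1})$ (the projection correction $Z_\Delta'X(X'X)^{-1}X'\varepsilon$ is of smaller order). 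With $(Z_{\hat{k}}'MZ_{\hat{k}})^{-1}=O_p(T^{-1})$, and with the analogous bound $(Z_{\hat{k}}'MZ_{\hat{k}})^{-1}-(Z_0'MZ_0)^{-1}=O_p(T^{-2}\vert\hat{k}-k_0\vert)$ for the remaining difference acting on $Z_0'M\varepsilon=O_p(T^{1/2})$, the noise piece is $O_p(T^{-1}\norm{\delta_T}^{-1})$, so its $\sqrt{T}$-scaling is again $o_p(1)$ under Assumption \ref{A:consistency_delta}. Assembling the three bounds gives $\sqrt{T}(\hat{\delta}_{\hat{k}}-\hat{\delta}_{k_0})=o_p(1)$ and, combined with the first term, the stated asymptotic normality.
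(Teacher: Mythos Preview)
Your proposal is correct and follows essentially the same route as the paper: reduce $\hat{\delta}_{\hat{k}}$ to the infeasible $\hat{\delta}_{k_0}$ using the rate $\hat{k}-k_0=O_p(\norm{\delta_T}^{-2})$ from Theorem~\ref{Thm:rateofconvergence}, then invoke the CLT for $(T^{-1}Z_0'MZ_0)^{-1}T^{-1/2}Z_0'M\varepsilon$. The paper organizes the algebra slightly differently---it writes $MY=M\hat{Z}_0\delta_T+M\varepsilon^{*}$ with $\varepsilon^{*}=\varepsilon+(Z_0-\hat{Z}_0)\delta_T$ and then argues that $\hat{Z}_0$ can be replaced by $Z_0$ throughout---but the substantive bounds (on $T^{-1/2}\hat{Z}_0'M(Z_0-\hat{Z}_0)\delta_T$ and on $T^{-1}(\hat{Z}_0'M\hat{Z}_0-Z_0'MZ_0)$) are exactly your design-piece and Gram-matrix calculations. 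If anything, your treatment of the noise piece is more explicit: you invoke a maximal inequality over the window $\vert k-k_0\vert\leq C\norm{\delta_T}^{-2}$ to handle the random-index partial sum $Z_\Delta'M\varepsilon$, whereas the paper's appendix proof states ``the right side converges in probability to the same limit as when $\hat{Z}_0$ is replaced by $Z_0$'' without separately displaying the bound on $T^{-1/2}(\hat{Z}_0-Z_0)'M\varepsilon$.
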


\section{In-fill Asymptotic Distribution}\label{sec:limitdist}

\citet*{Bai1997} provides the limit distribution of the LS estimator assuming large breaks ($\delta = O(T^{-1/2+\epsilon})$ with $0<\epsilon< 1/2$). The asymptotic distribution is symmetric at the true break point, if the second moment of variables associated with coefficients under break ($z_t$ in Section \ref{sec:multimodel}) do not change before and after break. I am interested in small breaks in which the asymptotic distribution depends on the parameters in a complicated manner (\citeauthor{Elliott2007} \citeyear{Elliott2007}).

\citeauthor{Jiang2017} (\citeyear{Jiang2017}, \citeyear{Jiang2018}) and \citet*{Casini2019} employed a continuous record asymptotic framework to derive the limit distribution of the break point estimator. By assuming that a continuous record is available, a continuous time approximation to the discrete time model is constructed and an in-fill asymptotic distribution is developed. In contrast to the long-span asymptotic, where the time span of the data increases, the in-fill asymptotic assumes a fixed time span with shrinking sampling intervals. The in-fill asymptotic distribution is asymmetric, tri-modal, and dependent on the initial condition. However, the long-span asymptotic distribution of the LS estimator under local-to-unity processes do not depend on the initial condition. See \citet{Chong2001}, \citet{Pang2014} and \citet{Pang2018} on the long-span asymptotic distribution of the LS estimator under different settings of the AR root before and after the break. I follow the approach of \citeauthor{Jiang2017} (\citeyear{Jiang2017}, \citeyear{Jiang2018}) to derive the limit distribution of the break point estimator under a stationary and local-to-unity autoregressive process.

\subsection{Partial break in a stationary process}\label{subsec:infill_stationary}

Consider the linear regression model (\ref{eq:DGPmulti}) with continuous time process $\{W_s,Z_s,\capeps_s\}_{s\geq 0}$ defined on a filtered probability space $(\Omega,\mathcal{F},(\mathcal{F}_s)_{s\geq 0},P)$, where $s$ can be interpreted as a continuous time index. Assume we observe at discrete points of time so that $\{Y_{th}, W_{th}, Z_{th}: t=0,1,\ldots,T = N/h\}$, where $N$ is the time span. We normalize the time span $N=1$ for simplicity. We denote the increment of processes as $\Delta_h Y_t := Y_{th}-Y_{(t-1)h}$. Let $X_{th} = (W_{th}', Z_{th}')'$ so that $Z_{th} = R'X_{th}$. The model (\ref{eq:DGPmulti}) can be expressed as
\begin{equation*}
    \Delta_h Y_{t} = \begin{cases}
    (\Delta_h X_t)'\beta_h +  \Delta_h \capeps_t & \text{ if } t = 1,\ldots,[\rho_0 T] \\
    (\Delta_h X_t)'\beta_h + (\Delta_h Z_t)'\delta_h + \Delta_h \capeps_t & \text{ if } t = [\rho_0 T]+1,\ldots,T, 
    \end{cases}
\end{equation*}
Divide both sides by $\sqrt{h}$ so that the error term variance is $O(1)$. The parameters $\beta_h$ and $\delta_h$ may depend on the sampling interval, denoted by subscript $h$. Let $\varepsilon_{t} := \Delta_h \capeps_t/\sqrt{h}$, $y_t := \Delta_h Y_t/\sqrt{h}$, $x_t := \Delta_h X_t/\sqrt{h}$, $z_t := \Delta_h Z_t/\sqrt{h} = R'x_t$, 
\begin{equation}\label{eq:DGPinfill_multi}
    y_t = \begin{cases}
    x_t'\beta_h + \varepsilon_t & \text{ if } t = 1,\ldots, [\rho_0 T] \\
    x_t'\beta_h + z_t'\delta_h + \varepsilon_t & \text{ if } t = [\rho_0 T]+1,\ldots,T.
    \end{cases}
\end{equation}

\smallskip

\begin{assumption}\label{A:infillmult_covstat}
$\{z_t, \varepsilon_t\}$ is a covariance stationary process that satisfies the functional central limit theorem as $T = 1/h \rightarrow \infty$,
    \begin{equation*}
        T^{-1/2}\sum_{t=1}^{[sT]} z_t \varepsilon_t \Rightarrow B_1(s),
    \end{equation*}
    where $B_1(s)$ is a multivariate Gaussian process on $[0, 1]$ with mean zero and covariance \\
    $E[B_1(u)B_1(v)']= \min\{u,v\}\Xi$, and $\Xi := \lim_{T\rightarrow \infty} E\left[\left(T^{-1/2}\sum_{t=1}^{T} z_t \varepsilon_t \right)^2\right]$.
\end{assumption}

\smallskip

\begin{assumption}\label{A:infillmult_delta}
The break magnitude is $\delta_h = d_0 \lambda_h$, where $d_0 \in \mathbb{R}^q$ is a fixed vector and $\lambda_h$ is a scalar that depends on the sampling interval $h$. Assume one of the following cases on $\lambda_h$ as $h \rightarrow 0$,
\begin{enumerate}[(i)]
    \item $\lambda_h = O(h^{1/2})$ so that $\delta_h = d_0 \sqrt{h}$;
    \item $\lambda_h = O(h^{1/2-\gamma})$, where $0<\gamma < 1/2$ so that $\delta_h/ \sqrt{h} \rightarrow \infty$ simultaneously with $\delta_h \rightarrow 0$.
\end{enumerate}
\end{assumption}

\smallskip

\noindent Notations from Section \ref{sec:multimodel} are used for model (\ref{eq:DGPinfill_multi}): $MY = MZ_0 \delta_h + M\varepsilon$, where $\varepsilon = (\varepsilon_1,\ldots,\varepsilon_T)'$ and $M = I-X(X'X)^{-1}X'$. The objective function of the estimator $\hat{k}$ in (\ref{eq:newhatk_multi}) is restated as follows:
\begin{equation}\label{eq:objfn_Tscale}
    Q_T(k)^2 = \sqrt{T}\hat{\delta}_k'(T^{-1}Z_k'MZ_k)^{1/2}\Omega_k (T^{-1}Z_k'MZ_k)^{1/2}\sqrt{T}\hat{\delta}_k
\end{equation}
The in-fill asymptotic distribution is derived for the two different magnitudes of $\delta_h$ in Assumption \ref{A:infillmult_delta}. Theorem \ref{Thm:infillmult_dfixed} provides the limit distribution under \ref{A:infillmult_delta}$(i)$, which represents small breaks. For proof, see Appendix B.

\begin{theorem}\label{Thm:infillmult_dfixed}
Consider the model (\ref{eq:DGPinfill_multi}) with unknown parameters $(\beta_h,\delta_h)$. Assumption \ref{A:multimodel}, \ref{A:Omegak}, \ref{A:infillmult_covstat}, and \ref{A:infillmult_delta}$(i)$ holds. Then the break point estimator $\hat{\rho} = \hat{k}/T$ defined in (\ref{eq:newhatk_multi}) has the following in-fill asymptotic distribution as $h \rightarrow 0$,
\begin{equation*}
    T\norm{\delta_h}^2 \hat{\rho}  \overset{d}{\longrightarrow} \norm{d_0}^2\argmax_{\rho \in (0, 1)} \wtilde{W}(\rho)'\bar{\Omega}(\rho) \wtilde{W}(\rho),
\end{equation*}
with 
\begin{align*}
    \wtilde{W}(\rho) & := 
    \Sigma_z^{-1/2}\frac{B_1(\rho)-\rho B_1(1)}{\sqrt{\rho(1-\rho)}}- (1-\rho_0)\left(\frac{\rho}{1-\rho}\right)^{1/2}\Sigma_z^{1/2} d_0 \hspace{0.7cm}\text{ if } \rho \leq \rho_0 \\
    & := \Sigma_z^{-1/2}\frac{B_1(\rho)- \rho B_1(1)}{\sqrt{\rho(1-\rho)}}-\rho_0\left(\frac{1-\rho}{\rho}\right)^{1/2}\Sigma_z^{1/2}d_0 \hspace{1.7cm} \text{ if } \rho > \rho_0, 
\end{align*}
where $B_1(\cdot)$ is a Brownian motion defined in Assumption \ref{A:infillmult_covstat}.
\end{theorem}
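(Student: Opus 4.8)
The plan is to write the estimator as the argmax of a rescaled objective process indexed by $\rho=k/T$ and then invoke an argmax continuous mapping theorem. Set $\mathbb{Q}_T(\rho):=Q_T([\rho T])^2$ from (\ref{eq:objfn_Tscale}), so that $\hat\rho=\argmax_{\rho}\mathbb{Q}_T(\rho)$. First I would note that under Assumption \ref{A:infillmult_delta}$(i)$ we have $h=1/T$ and $\delta_h=d_0\sqrt h$, hence $\sqrt T\,\delta_h=d_0$ and $T\norm{\delta_h}^2=\norm{d_0}^2$ is a fixed constant; thus it suffices to prove $\hat\rho\overset{d}{\to}\argmax_{\rho\in(0,1)}\wtilde W(\rho)'\bar\Omega(\rho)\wtilde W(\rho)$ and multiply through by $\norm{d_0}^2$. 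The task then reduces to (a) establishing weak convergence of the whole process $\mathbb{Q}_T(\cdot)$ to $\mathbb{Q}(\rho):=\wtilde W(\rho)'\bar\Omega(\rho)\wtilde W(\rho)$ on compact subintervals of $(0,1)$, and (b) transferring this to convergence of the maximizer.

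For step (a) I would decompose $\sqrt T\,\what\delta_k=\Sigma_k^{-1}(T^{-1}Z_k'MZ_0)\sqrt T\,\delta_h+\Sigma_k^{-1}(T^{-1/2}Z_k'M\varepsilon)$, where $\Sigma_k:=T^{-1}Z_k'MZ_k$. Using the uniform convergence $T^{-1}\sum_{t=1}^{[sT]}x_tx_t'\overset{p}{\to}s\Sigma_x$ of Assumption \ref{A:multimodel}$(iv)$ together with identity (\ref{eq:zkmzk}), I would show $\Sigma_k\overset{p}{\to}\rho(1-\rho)\Sigma_z$ uniformly in $\rho$, with $\Sigma_z=R'\Sigma_x R$, and likewise $T^{-1}Z_k'MZ_0\overset{p}{\to}(1-\rho_0)\rho\,\Sigma_z$ for $\rho\le\rho_0$ and $(1-\rho)\rho_0\,\Sigma_z$ for $\rho>\rho_0$; these give the deterministic (signal) parts $\tfrac{1-\rho_0}{1-\rho}d_0$ and $\tfrac{\rho_0}{\rho}d_0$ respectively. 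For the noise I would write $T^{-1/2}Z_k'M\varepsilon=T^{-1/2}Z_k'\varepsilon-(T^{-1}Z_k'X)(T^{-1}X'X)^{-1}(T^{-1/2}X'\varepsilon)$. The key simplification is that $z_t=R'x_t$: the factor $(T^{-1}Z_k'X)(T^{-1}X'X)^{-1}\overset{p}{\to}(1-\rho)R'$, and $R'(T^{-1/2}X'\varepsilon)=T^{-1/2}\sum_t z_t\varepsilon_t\Rightarrow B_1(1)$ directly by Assumption \ref{A:infillmult_covstat}, so no separate FCLT for $X'\varepsilon$ is needed and the projection term equals $(1-\rho)B_1(1)+o_p(1)$. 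Combining with $T^{-1/2}Z_k'\varepsilon\Rightarrow B_1(1)-B_1(\rho)$ yields $T^{-1/2}Z_k'M\varepsilon\Rightarrow-\bigl(B_1(\rho)-\rho B_1(1)\bigr)$, a rescaled Brownian bridge. Assembling the deterministic limits with this FCLT (Slutsky-type arguments in the function space) gives $\xi_k:=\Sigma_k^{1/2}\sqrt T\,\what\delta_k\Rightarrow-\wtilde W(\rho)$, whence $\mathbb{Q}_T(\rho)=\xi_k'\Omega_k\,\xi_k\Rightarrow\wtilde W(\rho)'\bar\Omega(\rho)\wtilde W(\rho)$ by continuous mapping, using $\Omega_k\overset{p}{\to}\bar\Omega(\rho)$ from Assumption \ref{A:Omegak} (the sign of $\wtilde W$ is immaterial in the quadratic form).

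For step (b) I would apply the argmax continuous mapping theorem, in the spirit of Kim and Pollard (1990) and Jiang, Wang and Yu (2018). This requires almost-sure uniqueness of the maximizer of $\mathbb{Q}(\cdot)$, which holds because $\mathbb{Q}$ is a smooth functional of the Gaussian process $B_1$ with continuous sample paths, so ties occur with probability zero; and a tightness/non-escape condition guaranteeing $\hat\rho$ stays in a compact subinterval of $(0,1)$ with probability near one. The non-escape condition is where I expect the main obstacle to lie: near the endpoints the bridge term $\Sigma_z^{-1/2}(B_1(\rho)-\rho B_1(1))/\sqrt{\rho(1-\rho)}$ appearing in $\wtilde W(\rho)$ diverges by the law of the iterated logarithm, so the argument must exploit the decay of the weight $\bar\Omega(\rho)$ at $\rho\in\{0,1\}$. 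For the Fisher-information weight $\bar\Omega(\rho)=\rho(1-\rho)\Sigma_z$ the factor $\rho(1-\rho)$ exactly cancels the denominator, leaving the squared Brownian bridge $(B_1(\rho)-\rho B_1(1))'\Sigma_z^{-1}(B_1(\rho)-\rho B_1(1))$, which is bounded on $[0,1]$ and whose maximizer is interior almost surely. I would therefore control $\sup_\rho\mathbb{Q}_T(\rho)$ near the endpoints through maximal inequalities of the Hájek--Rényi type (Lemma \ref{Lem:HajekRenyi}) to rule out boundary maxima, and then conclude from the argmax theorem that $\hat\rho\overset{d}{\to}\argmax_{\rho\in(0,1)}\mathbb{Q}(\rho)$, which is the asserted limit after multiplication by $\norm{d_0}^2$.
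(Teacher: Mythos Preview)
Your proposal is correct and follows essentially the same approach as the paper: compute the uniform-in-$\rho$ probability limits of $T^{-1}Z_k'MZ_k$ and $T^{-1}Z_k'MZ_0$, the functional limit of $T^{-1/2}Z_k'M\varepsilon$, assemble these into weak convergence of $Q_T(\cdot)^2$, and then apply the argmax continuous mapping theorem. Your observation that $(T^{-1}Z_k'X)(T^{-1}X'X)^{-1}\overset{p}{\to}(1-\rho)R'$ so that only the FCLT for $z_t\varepsilon_t$ in Assumption \ref{A:infillmult_covstat} is needed (rather than a separate FCLT for $x_t\varepsilon_t$) is a tidy refinement of what the paper leaves implicit, and your explicit discussion of uniqueness and boundary non-escape for the argmax step is more careful than the paper, which simply invokes the continuous mapping theorem without comment.
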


\noindent An equivalent representation of the in-fill asymptotic distribution is (let $\rho = \rho_0+u$)
\begin{equation*}
    T\norm{\delta_h}^2 (\hat{\rho}-\rho_0) \overset{d}{\longrightarrow} \norm{d_0}^2\argmax_{u \in (\rho_0,1-\rho_0)} \wtilde{W}(\rho_0+u)'\bar{\Omega}(\rho_0+u)\wtilde{W}(\rho_0+u),
\end{equation*}
where $\wtilde{W}(\cdot)$ is defined in Theorem \ref{Thm:infillmult_dfixed}.

Next, consider the case of Assumption \ref{A:infillmult_delta}$(ii)$. The proof of Theorem \ref{Thm:infillmult_dlarge} is in Appendix B.

\begin{theorem}\label{Thm:infillmult_dlarge}
Consider the model (\ref{eq:DGPinfill_multi}) with unknown parameters $(\beta_h,\delta_h)$. Assumptions \ref{A:multimodel}, \ref{A:Omegak}, \ref{A:infillmult_covstat}, and \ref{A:infillmult_delta}$(ii)$ hold. For simplicity, we denote $\bar{\Omega}_0$ for $\bar{\Omega}(\rho_0)$. Then the break point estimator $\hat{\rho} = \hat{k}/T$ defined in (\ref{eq:newhatk_multi}) has the following in-fill asymptotic distribution as $h \rightarrow 0$,
\begin{equation*}
    \lambda_h^2 T(\hat{\rho}-\rho_0) \overset{d}{\longrightarrow} \frac{(d_0'\bar{\Omega}_0\Xi\bar{\Omega}_0 d_0)}{(d_0'\Sigma_z A_u d_0)^2} \argmax_{u \in (-\infty,\infty)}(d_0'\Sigma_z A_u d_0)^{-1} \left\{W(u)-\frac{\vert u \vert}{2}\right\},
\end{equation*}
where $A_u = \bar{\Omega}_0 -\text{sgn}(u)\rho_0(1-\rho_0)\nabla \bar{\Omega}_0$, $\nabla\bar{\Omega}_0 \equiv \left. \frac{\partial \bar{\Omega}(\rho)}{\partial \rho}\right\vert_{\rho = \rho_0}$, $W(u) = W_1(-u)$ for $u \leq 0$ and $W(u) = W_2(u)$ for $u>0$. $W_1(\cdot)$ and $W_2(\cdot)$ are two independent Wiener processes on $[0,\infty)$.
\end{theorem}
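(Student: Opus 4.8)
The plan is to follow the localization argument of \hyperref[Bai1997]{Bai (1997)} and of the proof of Theorem \ref{Thm:infillmult_dfixed}, now exploiting that under Assumption \ref{A:infillmult_delta}$(ii)$ the localizing window $\norm{\delta_h}^{-2}=O(T^{1-2\gamma})$ \emph{diverges} while remaining $o(T)$; this divergence is precisely what makes a local invariance principle available here but not in the case of Theorem \ref{Thm:infillmult_dfixed}. Since \ref{A:infillmult_delta}$(ii)$ gives $\delta_h\to 0$ and $T^{1/2-\gamma}\delta_h\to\infty$, Theorem \ref{Thm:rateofconvergence} applies and yields $\hat k-k_0=O_p(\norm{\delta_h}^{-2})$, so with probability approaching one the maximizer lies in a neighborhood of $k_0$ on which $k/T\to\rho_0$. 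I would first restrict attention to $k=k_0+m$ with $|m|\le C\norm{\delta_h}^{-2}$, treat the increment $R_T(m):=Q_T(k_0+m)^2-Q_T(k_0)^2$ as a process in $m$, and show that after the scaling dictated by the normalizing constant it converges to $W(u)-|u|/2$; the conclusion then follows from the argmax continuous mapping theorem, using that $W(u)-|u|/2$ has an almost surely unique maximizer on $\mathbb{R}$.

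For the drift I would use identity (\ref{eq:QinGandH}) to write $R_T(m)=-|m|\,G_T(k_0+m)+H_T(k_0+m)$ and expand the signal $|m|\,G_T(k_0+m)=|m|\,\delta_h'A_T(k_0+m)\delta_h$, with $A_T$ from (\ref{eq:Ak}), to first order in $|k-k_0|$. The inputs are the limits $T^{-1}Z_k'MZ_k\to\rho_0(1-\rho_0)\Sigma_z$ and $\Omega_k\overset{p}{\rightarrow}\bar\Omega_0$ from the proof of Theorem \ref{Thm:infillmult_dfixed}, together with the stationarity of the second moments of $z_t$ (Assumption \ref{A:infillmult_covstat}), which gives $|k_0-k|^{-1}Z_\Delta'Z_\Delta\overset{p}{\rightarrow}\Sigma_z$ via the decomposition $Z_0=Z_k-Z_\Delta\,\text{sgn}(k_0-k)$. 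The expansion yields a per-step drift asymptotic to $\delta_h'\Sigma_z\bar\Omega_0\delta_h$ uniformly over the window, so $|m|\,G_T(k_0+m)\approx(\delta_h'\Sigma_z\bar\Omega_0\delta_h)\,|m|$; because the second moments of $z_t$ do not change across the break, this drift is symmetric in $\text{sgn}(m)$, producing the $-|u|/2$ term, and $\Omega_k\to\bar\Omega_0$ from either side makes the weight common to both.

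For the noise I would reuse the analysis of $H_T$ in the proof of Theorem \ref{Thm:rateofconvergence}: the quadratic-in-$\varepsilon$ terms are $O_p(1)$ uniformly and, once $|m|\to\infty$, negligible relative to the cross term, whose dominant part is $2\delta_h'\bar\Omega_0\sum_t z_t\varepsilon_t\,\text{sgn}(k_0-k)$ summed over the block between $k$ and $k_0$. Writing this as a partial-sum process of the scalar sequence $2\,\delta_h'\bar\Omega_0 z_t\varepsilon_t$ and invoking the functional central limit theorem of Assumption \ref{A:infillmult_covstat} on the diverging window, I would obtain convergence of $H_T(k_0+m)$ to a two-sided Brownian motion in $u$ with variance $4\,\delta_h'\bar\Omega_0\Xi\bar\Omega_0\delta_h$ per unit of $u$; the left and right blocks of $\{\varepsilon_t\}$ are disjoint, hence asymptotically independent, which supplies the two independent Wiener processes $W_1,W_2$.

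Collecting the two pieces, $R_T$ behaves like $-(\delta_h'\Sigma_z\bar\Omega_0\delta_h)\,|m|+\sqrt{4\,\delta_h'\bar\Omega_0\Xi\bar\Omega_0\delta_h}\,\mathbb{B}(m)$ for a two-sided standard Brownian motion $\mathbb{B}$, and a Brownian-scaling change of variables in $\argmax_m[-a|m|+\sigma\mathbb{B}(m)]$ with $a=\delta_h'\Sigma_z\bar\Omega_0\delta_h$ and $\sigma^2=4\,\delta_h'\bar\Omega_0\Xi\bar\Omega_0\delta_h$ rescales the maximizer by $\sigma^2/(4a^2)=\delta_h'\bar\Omega_0\Xi\bar\Omega_0\delta_h/(\delta_h'\Sigma_z\bar\Omega_0\delta_h)^2$, exactly the reciprocal of the constant multiplying $T(\hat\rho-\rho_0)$ in the statement. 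The hard part, I expect, will be the noise step: establishing the invariance principle for the cross term \emph{uniformly} over the diverging window $|m|\le C\norm{\delta_h}^{-2}$ and obtaining tightness of $R_T(\cdot)$ so that the argmax functional is continuous — in particular controlling the joint effect of the shrinking $\delta_h$ and the diverging window, and verifying that the $O_p(1)$ quadratic terms of $H_T$ are negligible over the whole window rather than merely pointwise.
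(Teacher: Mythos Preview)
Your proposal is correct and follows essentially the same route as the paper: localize via Theorem \ref{Thm:rateofconvergence}, expand $Q_T(k)^2-Q_T(k_0)^2$ through the decomposition (\ref{eq:QinGandH}), identify the drift as $-|m|\,\delta_h'\Sigma_z\bar\Omega_0\delta_h$ from $Z_\Delta'Z_\Delta$ and $\Omega_k\to\bar\Omega_0$, isolate the dominant cross term $2\delta_h'\Omega_{k_0}Z_\Delta'\varepsilon$ in $H_T$, apply the FCLT of Assumption \ref{A:infillmult_covstat} on the diverging window, and finish with the Brownian scaling that produces the normalizing constant. The paper packages your ``hard part'' --- the uniform $o_p(1)$ control of the quadratic-in-$\varepsilon$ pieces and of the remainder in the drift expansion over $K(C)$ --- into a separate Lemma (Lemma \ref{lem:Qdiff_limitdist}), whose proof is exactly the term-by-term bounding you anticipate; in particular those quadratic terms are shown to be $O_p(T^{-1}\norm{\delta_h}^{-2})+O_p(T^{-1/2}\norm{\delta_h}^{-1})=o_p(1)$ uniformly, which is slightly sharper than the $O_p(1)$ you state but leads to the same conclusion.
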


If the weight matrix is $\Omega_k = I_q$, the estimator is equivalent to the LS estimator and the limiting distribution reduces to the distribution in Proposition 3 of \citet*{Bai1997}. The term $A_u$ shows how the weight matrix down-weights break points near the boundaries. Suppose $\Omega_k = T^{-1}Z_k'MZ_k$ and $\rho_0 > 0.5$ so that $\bar{\Omega}(\rho) = \rho(1-\rho)\Sigma_z$ and  $\nabla \bar{\Omega}_0 < 0$. If $u > 0$ increases in a positive direction toward the boundary (i.e., $\rho > \rho_0 > 0.5$), then $A_u$ increases and the term multiplied to the Wiener process with drift, $(d_0'\Sigma_z A_u d_0)^{-1}$ decreases. In contrast, if $u<0$ decreases such that $\rho$ shifts toward the median, then $A_u$ decreases and $(d_0'\Sigma_z A_u d_0)^{-1}$ increases. The result is opposite if $\rho_0 < 0.5$. That is, there is larger weight on $\rho$ near the median $0.5$ and less weight near the boundaries.

\subsection{Break in an autoregressive model}\label{subsec:infill_ar}

In this section, I derive the in-fill asymptotic distribution of an autoregressive (AR) model with a structural break in its lag coefficient, using a deterministic weight function $\omega(\cdot)$. As mentioned in Section \ref{sec:multimodel}, Assumption \ref{A:multimodel} excludes lagged dependent variables, due to the dependence of the weight function on regressors. This condition is relaxed to allow weakly exogeneous regressors by assuming non-stochastic weights. Consider a discrete model closely related to the Ornstein-Uhlenbeck process with a break in the drift function:
\begin{equation*}
    dx(t) = -(\mu + \delta\bbone\{t>\rho_0\}) x(t)dt + \sigma dB(t),
\end{equation*}
where $t \in [0,1]$ and $B(\cdot)$ denote a standard Brownian motion. The discrete time model has the form
\begin{equation*}
    x_t = \left(\beta_1 \bbone\{t\leq k_0\} + \beta_2 \bbone\{t > k_0\}\right)x_{t-1} + \sqrt{h}\varepsilon_t, \;\; \varepsilon_t \overset{i.i.d.}{\sim} (0,\sigma^2), \;\; x_0 = O_p(1), 
\end{equation*}
where $\beta_1 = \exp\{-\mu/T\}$ and $\beta_2 = \exp\{-(\mu+\delta)/T\}$ are the AR roots before and after the break. We denote $y_t = x_t/\sqrt{h}$ so that the order of errors is $O_p(1)$ as in model (\ref{eq:DGPmulti}). Then, I have for $t = 1,\ldots,T$,
\begin{equation}\label{eq:DGPinfill_ar}
    y_t = (\beta_1 \bbone\{t\leq k_0\} + \beta_2 \bbone\{t > k_0\})y_{t-1} + \varepsilon_t, \;\; \varepsilon_t \overset{i.i.d.}{\sim} (0,\sigma^2), \;\; y_0 = x_o/\sqrt{h} = O_p\left(T^{1/2}\right).
\end{equation}
The initial condition of $y_t$ in (\ref{eq:DGPinfill_ar}) diverges at rate $T^{1/2}$; thus, the in-fill asymptotic distribution will depend explicitly on the initial value $x_0$. The break size is $\beta_2-\beta_1 = O(T^{-1})$, whereas the literature on long-span asymptotics assumes $O(T^{-\gamma})$ with $0<\gamma <1$. The model (\ref{eq:DGPinfill_ar}) is a local-to-unit root process: $\beta_1 = \exp\{-\mu /T\} \rightarrow 1$ and $\beta_2 = \exp \{-(\mu+\delta)/T\} \rightarrow 1$, as $T \rightarrow \infty$ for any finite $(\mu,\delta)$. In contrast, the long-span asymptotic theory incorporates stationary AR(1) processes, where $\vert \beta_1 \vert < 1$ and $\vert \beta_2 \vert < 1$. \citet*{Chong2001} derives the long-span distribution under $\vert\beta_2-\beta_1\vert = O(T^{-1/2+\gamma})$ with $0 < \gamma < 1/2$. \citet{Jiang2017} provides simulation results that the in-fill asymptotic theory works well even when $\beta_1$ and/or $\beta_2$ are distant from unity in the finite sample.

The break point estimator and the LS estimator in model (\ref{eq:DGPinfill_ar}) takes the form 
\begin{gather}
\nonumber
    S(k)^2 = \sum_{t=1}^k \left(y_t - \hat{\beta}_1(k)y_{t-1}\right)^2 +\sum_{t=k+1}^T \left(y_t - \hat{\beta}_2(k)y_{t-1}\right)^2\\
\label{eq:hatk_infillar}    
    \hat{k} = \argmin_{k=1,\ldots,T-1} \omega_k^2 \, S(k)^2, \;\;\; \hat{\rho} = \hat{k}/T\\
\nonumber    
    \hat{k}_{LS} = \argmin_{k=1,\ldots,T-1} S(k)^2, \;\;\; \hat{\rho}_{LS} = \hat{k}_{LS}/T,
\end{gather}
where $\hat{\beta}_1(k) = \sum_{t=1}^k y_t y_{t-1}/ \sum_{t=1}^k y_{t-1}^2$ and $\hat{\beta}_2(k) = \sum_{t=k+1}^T y_t y_{t-1}/ \sum_{t=k+1}^T y_{t-1}^2$
are LS estimates of $\beta_1$ and $\beta_2$ under break at $k$, respectively.

\begin{theorem}\label{Thm:infill_ar}
Consider the model (\ref{eq:DGPinfill_ar}) with fixed parameters $(\mu,\delta)$ so that $\ln \beta_1 = O(T^{-1})$ and $\, \ln\beta_2 = O(T^{-1})$. Assume the weight function $\omega_k$ is nonrandom and bounded on the unit interval with $\omega_k \rightarrow \omega(\rho)$ as $k/T\rightarrow \rho$. Then, the break point estimator $\hat{\rho}= \hat{k}/T$ in (\ref{eq:hatk_infillar}) has the in-fill asymptotic distribution as
\begin{equation*}
    \hat{\rho} \Longrightarrow \argmax_{\rho \in (0, 1)}\; \omega(\rho)^2 \left[\frac{\left(\wtilde{J}_{0}(\rho)^2 - \wtilde{J}_{0}(0)^2-\rho\right)^2}{\int_{0}^{\rho}\wtilde{J}_{0}(r)^2 dr} + \frac{\left(\wtilde{J}_{0}(1)^2 - \wtilde{J}_{0}(\rho)^2-(1-\rho)\right)^2}{\int_{\rho}^{1}\wtilde{J}_{0}(r)^2 dr}\right],
\end{equation*}
where $\wtilde{J}_0(r)$, for $r \in [0, 1]$ is a Gaussian process defined by
\begin{equation}\label{eq:Jtilde0}
    d\wtilde{J}_0(r) = -(\mu + \delta \bbone\{r>\rho_0\})\wtilde{J}_0(r)dr + dB(r),
\end{equation}
with the initial condition $\wtilde{J}_0(0) = y_0/\sigma = x_0/(\sigma\sqrt{h})$, and $B(\cdot)$ is a standard Brownian motion.
\end{theorem}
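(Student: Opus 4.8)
The plan is to follow the continuous-record strategy of the previous theorems: reduce the weighted objective to a functional of a few partial-sum processes, establish an in-fill functional limit for those processes as $h\to 0$, and pass the $\argmax$ through by the continuous mapping theorem. Because $\beta_2-\beta_1 = O(T^{-1})$ is a genuinely small break in this parametrization, $\hat\rho$ will \emph{not} be consistent for $\rho_0$; instead the whole objective, viewed as a random function of $\rho=k/T$ on $(0,1)$, converges weakly and $\hat\rho$ inherits a nondegenerate limit.

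First I would rewrite the objective. Concentrating out $\hat\beta_1(k),\hat\beta_2(k)$ with the projection identity, and then using the exact telescoping identity $2\sum_{t=a+1}^{b}y_{t-1}\Delta y_t = y_b^2-y_a^2-\sum_{t=a+1}^{b}(\Delta y_t)^2$ (with $\Delta y_t:=y_t-y_{t-1}$), each subsample SSR becomes $\sum(\Delta y_t)^2-(\sum y_{t-1}\Delta y_t)^2/\sum y_{t-1}^2$. Writing $\bar S=\sum_{t=1}^T(\Delta y_t)^2$ for the part that is free of $k$, the break estimator maximizes the weighted regression (explained) sum of squares $\omega_k^2 G(k)$, $G(k)=\bar S-S(k)^2$, exactly as $V_T(k)^2$ is weighted in Sections \ref{sec:motivation}--\ref{sec:multimodel}, where
\[
G(k)=\frac{\left(\sum_{t=1}^{k}y_{t-1}\Delta y_t\right)^2}{\sum_{t=1}^{k}y_{t-1}^2}+\frac{\left(\sum_{t=k+1}^{T}y_{t-1}\Delta y_t\right)^2}{\sum_{t=k+1}^{T}y_{t-1}^2}.
\]
The telescoping identity is the crucial algebraic step: it expresses the ``score'' $\sum_{t=1}^{[\rho T]}y_{t-1}\Delta y_t$ purely through $y_{[\rho T]}^2$, $y_0^2$, and $\sum(\Delta y_t)^2$, which is precisely what produces the $(\wtilde J_0(\rho)^2-\wtilde J_0(0)^2-\rho)$ form in the limit without invoking a general stochastic-integral convergence theorem.

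Second, I would establish the in-fill limits. The normalized process $y_{[rT]}/\sqrt T$ is the Euler scheme for the Ornstein--Uhlenbeck SDE with a drift break at $\rho_0$, so under the local-to-unity rates $\ln\beta_1=O(T^{-1})$, $\ln\beta_2=O(T^{-1})$ it converges weakly in $D[0,1]$ to $\sigma\wtilde J_0(r)$, with $\wtilde J_0$ given by (\ref{eq:Jtilde0}); this is the near-integrated invariance principle of \hyperref[Jiang2017]{Jiang \emph{et al.} (2017)}. Jointly, $T^{-2}\sum_{t=1}^{[\rho T]}y_{t-1}^2\Rightarrow\sigma^2\int_0^{\rho}\wtilde J_0^2$ by continuous mapping and $T^{-1}\sum_{t=1}^{[\rho T]}(\Delta y_t)^2\overset{p}{\rightarrow}\rho\sigma^2$ by a law of large numbers, whence the telescoping identity gives $T^{-1}\sum_{t=1}^{[\rho T]}y_{t-1}\Delta y_t\Rightarrow\tfrac{\sigma^2}{2}(\wtilde J_0(\rho)^2-\wtilde J_0(0)^2-\rho)$, all as processes in $\rho$. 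Substituting into $G(k)$, the $T$-powers cancel in each ratio, and using $\omega_k\to\omega(\rho)$ one obtains that $\omega_k^2 G(k)$ converges to $\tfrac{\sigma^2}{4}$ times $\omega(\rho)^2$ multiplied by the bracket in the theorem; the constant $\sigma^2/4$ is irrelevant for the $\argmax$.

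Finally, I would apply the argmax continuous mapping theorem, and here I expect the two genuine obstacles. The first is showing that the limit functional has an a.s.\ unique maximizer on $(0,1)$, so that $\argmax$ is a.s.\ single-valued. The second, and harder, is controlling the endpoints: as $\rho\to 0$ or $\rho\to 1$ the denominators $\int_0^{\rho}\wtilde J_0^2$ and $\int_{\rho}^{1}\wtilde J_0^2$ vanish, giving a $0/0$ indeterminacy in $G$ and the corresponding small-subsample instability in the finite-sample objective. The weight $\omega(\rho)^2$, which vanishes at the endpoints, is exactly what tames this and keeps the maximizer in the interior; making this rigorous---uniform convergence of $\omega_k^2 G(k)$ on compact subintervals $[\eta,1-\eta]$ together with a tightness/vanishing-contribution bound on $(0,\eta)\cup(1-\eta,1)$, so that the supremum over the trimmed tails is asymptotically dominated by an interior value---is the main technical burden, and is the precise continuous-record counterpart of the boundary control that the weight was designed to supply throughout the paper.
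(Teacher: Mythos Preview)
Your proposal is correct and takes a genuinely different route from the paper. The paper centers the model at $\beta_1$, writes the subsample scores as $\sum y_{t-1}\eta_t$ with $\eta_t=y_t-\beta_1 y_{t-1}$, passes to the limit using the stochastic-integral convergence $T^{-1}\sum_{t\le[\rho T]} y_{t-1}\varepsilon_t\Rightarrow\sigma^2\int_0^\rho\wtilde J_0\,dB$ quoted from \hyperref[Jiang2017]{Jiang, Wang and Yu (2017)}, and only afterwards applies the It\^o identity for $\wtilde J_0(\rho)^2$ to rewrite the stochastic integrals in the squared form; this generates additional $\mu$-dependent terms $\mu^2\int_0^1\wtilde J_0^2$ and $\mu\bigl(\wtilde J_0(1)^2-\wtilde J_0(0)^2-1\bigr)$ which the paper then argues can be dropped from the argmax. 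You instead apply the discrete summation-by-parts identity \emph{before} taking any limit, so the squared form $\tfrac12\bigl(y_{[\rho T]}^2-y_0^2-\sum(\Delta y_t)^2\bigr)$ appears exactly at the sample level; the in-fill step then requires only the near-integrated invariance principle $T^{-1/2}y_{[\rho T]}\Rightarrow\sigma\wtilde J_0(\rho)$ and a law of large numbers for $(\Delta y_t)^2$, bypassing convergence to stochastic integrals altogether and producing no cross terms to cancel. What the paper's route buys is an explicit tie-in to the lemmas already established in \hyperref[Jiang2017]{Jiang \emph{et al.} (2017)}; what yours buys is a shorter and more self-contained derivation that lands directly on the limiting functional in the theorem. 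Your closing discussion of the argmax continuous-mapping step---almost-sure uniqueness of the maximizer and endpoint control via the vanishing weight---is also more explicit than the paper, which invokes the continuous mapping theorem without commenting on either issue.
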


\smallskip

The results of Theorem \ref{Thm:infill_ar} derive from applying the continuous mapping theorem to the limit distribution $S(k)^2$ in Theorem 4.1 from \citet{Jiang2017}. See Appendix B for the proof. The difference between the asymptotic distributions of the two estimators is the weight function multiplied by the stochastic process in the argmax function. Both estimators are asymmetrically distributed around the true point and biased when $\rho_0 \neq 1/2$.

\section{Monte Carlo Simulation}\label{sec:montecarlo}

This section compares finite sample distributions of the new estimator and the LS estimator using Monte Carlo simulation. It considers two different models; a break in the mean of a univariate regression model and a break in the lag coefficient of the AR(1) process. I compare the root mean squared error (RMSE), bias, and standard errors of the two estimators in the finite sample and in-fill asymptotics.

\subsection{Univariate stationary process}\label{subsec:mc_stationary}

The first model is when a structural break occurs in model (\ref{eq:DGPmulti}), where $x_t = z_t = 1$ for all $t$. The break magnitude $\delta_T = T^{-1/2}d_0$ is in the local $T^{-1/2}$ neighborhood of zero to represent small break magnitudes.
\begin{equation}\label{eq:mc_stationarydgp}
    y_t = \mu + \delta_T\bbone\{t > [\rho_0 T]\}  + \varepsilon_t,
\end{equation}
where $\varepsilon_t \overset{i.i.d.}{\sim} N(0,\sigma^2)$ and $\sigma = 1$. Parameter values are $\rho_0 \in \{0.15, 0.3, 0.5, 0.7, 0.85\}$, $\mu = 4$, $d_0 \in \{1, 2, 4 \}$, and $T=100$ with 5,000 replications. The weight function is $\omega_k = (k/T(1-k/T))^{1/2}$, which is the representative weight function motivated in Section \ref{sec:motivation}\footnote{If the weight function is $\omega_k = (k/T(1-k/T))^{\gamma}$, Assumption \ref{A:Omegak} is satisfied if $-1/2\leq \gamma \leq 1/2$ for an arbitrary small $\alpha$ in Assumption \ref{A:multimodel}$(i)$. For $\gamma \in \{1/8, 1/4, 3/8\}$, the results (omitted due to space constraints) do not change qualitatively; the probability at the boundaries decrease compared to the finite sample distribution of LS. Because $\omega(\rho) = (\rho(1-\rho))^\gamma \rightarrow 1$ as $\gamma \rightarrow 0$, the difference between the two estimators finite sample behavior shrinks when $\gamma$ is close to zero.}. The break point estimator $\hat{\rho}_{NEW}$ is defined in (\ref{Def:newestimator_breakmean}) and the LS estimator $\hat{\rho}_{LS}$ in (\ref{Def:hatrhols_breakmean}). Although it is unnecessary to trim under the simple model (\ref{eq:mc_stationarydgp}), I trim the optimization space by fraction $\alpha=0.1$ on both ends, following the common practice in the literature.

Table \ref{table:rmse_fini_stationary} provides the RMSE, the bias, and the standard error for the finite sample distribution. For all $\rho_0$ and $d_0$ values considered, the RMSE of the estimator $\hat{\rho}_{NEW}$ is smaller than that of $\hat{\rho}_{LS}$ in the finite sample. A comparison of asymptotic RMSE shows the same results qualitatively (see Appendix C). A trade-off emerges of slightly larger bias but a large decrease in standard error for $\hat{\rho}_{NEW}$ compared to $\hat{\rho}_{LS}$, which leads to a decrease in RMSE. When $\rho_0 = 0.5$, both bias and standard error of the new estimator is smaller than the LS estimator. 

Figures \ref{fig:finitedist_r30} and \ref{fig:finitedist_r85} show the finite sample distribution of the two estimators under $\rho_0 = 0.30$, and 0.85. Under $\rho_0 = 0.30$, the finite sample distribution of $\hat{\rho}_{LS}$ is tri-modal, whereas the $\hat{\rho}_{NEW}$ has an unique mode at $\rho_0$ for all $d_0$ values considered. When the true break point is near the boundaries of the optimization space and the break magnitude is small ($\rho_0 = 0.85$ and $d_0 = 1$), the LS estimator performs particularly worse. The tri-modal LS estimator distribution becomes bi-modal with modes at $\alpha$ and $1-\alpha$. Under these parameter values, the new estimator has its drawbacks; the finite sample distribution is relatively flat. This is partly due to trimming the optimization space, which is not necessary for our estimation method. Without trimming ($\alpha = 0$), the finite sample distribution of our break point estimator has a unique mode at $\rho_0$ for all parameter values considered.

In short, the break point estimator is preferable than the LS estimator in terms of RMSE, under small break magnitudes. The weight function biases the estimator toward the median in trade-off to a significant decrease in standard error. When a break occurs near the boundaries, we need to be careful, because the new estimator can also be problematic. I suggest minimizing the trimming fraction $\alpha$ and using the new break point estimator.

\begin{table}[H]
\caption{Finite sample RMSE, bias, and the standard error of the new estimator and the LS estimator of the break point under model (\ref{eq:mc_stationarydgp}) with parameter values $(\rho_0,d_0)$ and $T = 100$. The number of replications is 5,000.}
\begin{center}  
\begin{tabular}{lL{0.8cm}C{1.5cm}C{1.5cm}C{1.5cm}C{1.5cm}C{1.5cm}C{1.5cm}} 
\hline
\hline
& & \multicolumn{2}{c}{RMSE} & \multicolumn{2}{c}{Bias} & \multicolumn{2}{c}{Standard error}\Tstrut\\
\cline{3-8}
 $\rho_0$ & $d_0$ & NEW & LS & NEW & LS & NEW & LS \Tstrut\\
\hline
 \multirow{3}{*}{0.15} & 1 & 0.4034 & 0.4381 & 0.3428 & 0.3401 & 0.2127 & 0.2762\Tstrut\\
  & 2 & 0.3897 & 0.4211 & 0.3243 & 0.3152 & 0.2161 & 0.2792\\
  &4 & 0.3455 & 0.3556 & 0.2703 & 0.2323 & 0.2151 & 0.2693\\
  \hline
 \multirow{3}{*}{0.30} & 1 & 0.2853 & 0.3303 & 0.1933 & 0.1898 & 0.2098 & 0.2703\Tstrut\\
  & 2 & 0.2669 & 0.3150 & 0.1741 & 0.1703 & 0.2023 & 0.2649\\
 & 4 & 0.2018 & 0.2435 & 0.1139 & 0.0999 & 0.1666 & 0.2221\\ 
 \hline
 \multirow{3}{*}{0.50} & 1 & 0.2051 & 0.2681 & -0.0029 & -0.0041 & 0.2051 & 0.2680\Tstrut\\
  & 2 & 0.1876 & 0.2511 & -0.0017 & -0.0029 & 0.1876 & 0.2511\\
 & 4 & 0.1359 & 0.1985 & 0.0002 & -0.0020 & 0.1359 & 0.1985\\
 \hline
 \multirow{3}{*}{0.70} & 1 & 0.2866 & 0.3334 & -0.1940 & -0.1915 & 0.2109 & 0.2729\Tstrut\\
  & 2 & 0.2640 & 0.3104 & -0.1693 & -0.1641 & 0.2025 & 0.2635\\
  & 4 & 0.2043 & 0.2448 & -0.1151 & -0.1029 & 0.1689 & 0.2222\\
  \hline
 \multirow{3}{*}{0.85} & 1 & 0.4018 & 0.4394 & -0.3405 & -0.3386 & 0.2134 & 0.2800\Tstrut\\
  & 2 & 0.3913 & 0.4224 & -0.3265 & -0.3176 & 0.2157 & 0.2785\\
  & 4 & 0.3438 & 0.3524 & -0.2678 & -0.2275 & 0.2155 & 0.2692\\ 
 \hline
\hline
\end{tabular}
\end{center}
\label{table:rmse_fini_stationary}
\end{table}

\begin{figure}[H]
\centering
\begin{subfigure}{.48\textwidth}
  \centering
    \includegraphics[width=\linewidth]{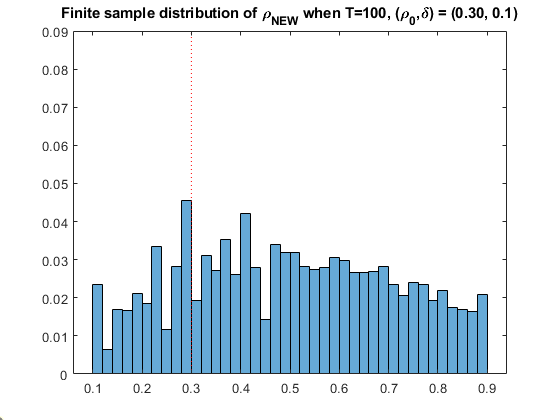}
\end{subfigure}
\begin{subfigure}{.48\textwidth}
  \centering
    \includegraphics[width=\linewidth]{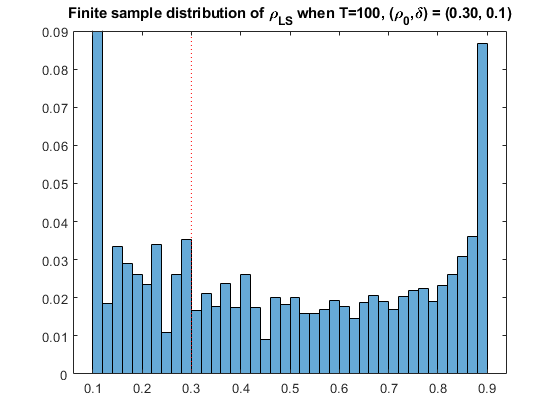}
\end{subfigure}    
\begin{subfigure}{.48\textwidth}
  \centering
    \includegraphics[width=\linewidth]{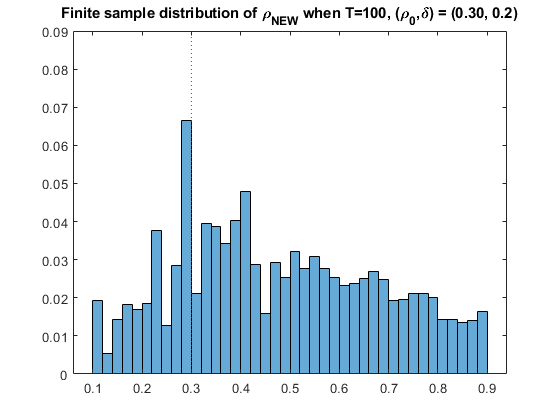}
\end{subfigure}
\begin{subfigure}{.48\textwidth}
  \centering
    \includegraphics[width=\linewidth]{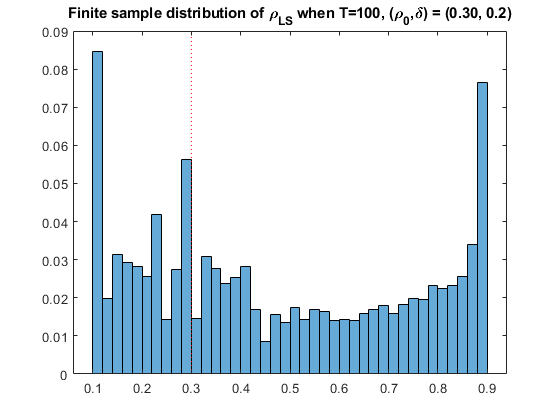}
\end{subfigure}
\begin{subfigure}{.48\textwidth}
  \centering
    \includegraphics[width=\linewidth]{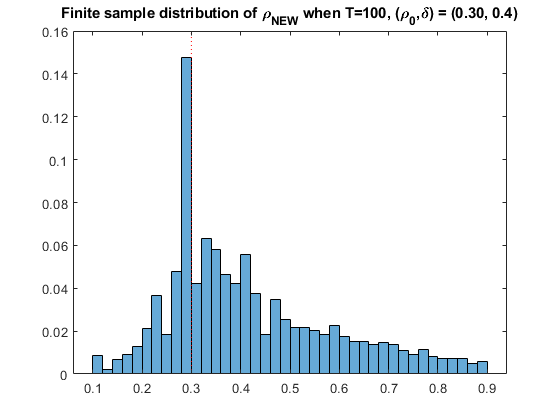}
\end{subfigure}
\begin{subfigure}{.48\textwidth}
  \centering
    \includegraphics[width=\linewidth]{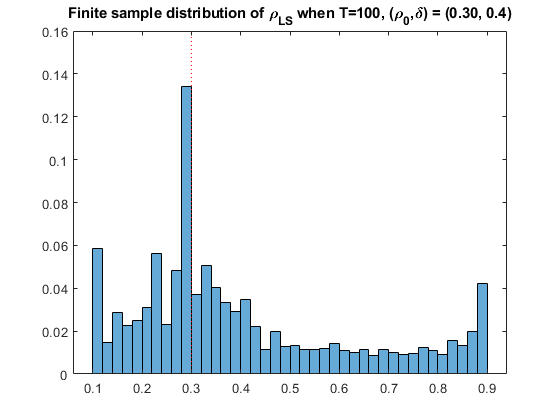}
\end{subfigure}
\caption{($\rho_0 = 0.30$) Finite sample distribution of the new estimator $\hat{\rho}_{NEW}$ (left) and the LS estimator $\hat{\rho}_{LS}$ (right) under model (\ref{eq:mc_stationarydgp}), with parameter values $(\rho_0,\delta_T) = (0.3, T^{-1/2})$, $(0.3, 2T^{-1/2})$, and $(0.3, 4T^{-1/2})$ and $T=100$, respectively. The optimization space is trimmed by fraction $\alpha$ on both ends.}
\label{fig:finitedist_r30}
\end{figure}

\begin{figure}[H]
\centering
\begin{subfigure}{.48\textwidth}
  \centering
    \includegraphics[width=\linewidth]{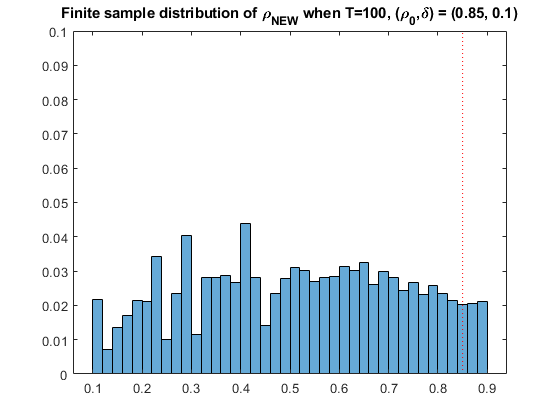}
\end{subfigure}
\begin{subfigure}{.48\textwidth}
  \centering
    \includegraphics[width=\linewidth]{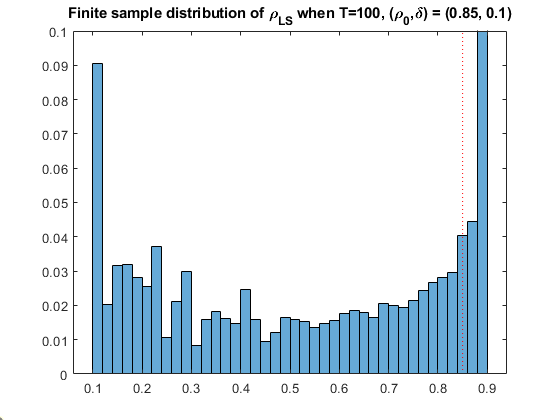}
\end{subfigure}
\begin{subfigure}{.48\textwidth}
  \centering
    \includegraphics[width=\linewidth]{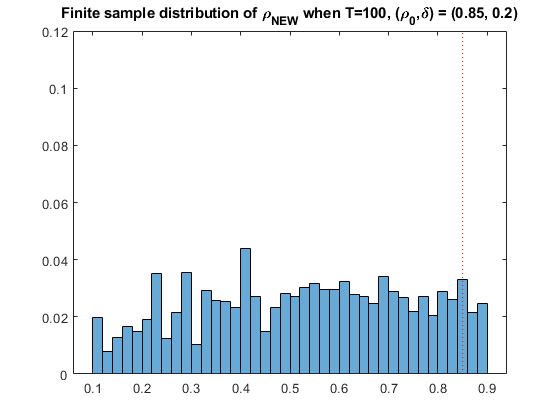}
\end{subfigure}
\begin{subfigure}{.48\textwidth}
  \centering
    \includegraphics[width=\linewidth]{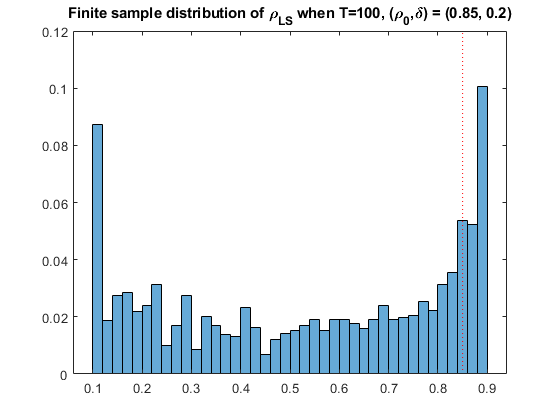}
\end{subfigure}
\begin{subfigure}{.48\textwidth}
  \centering
    \includegraphics[width=\linewidth]{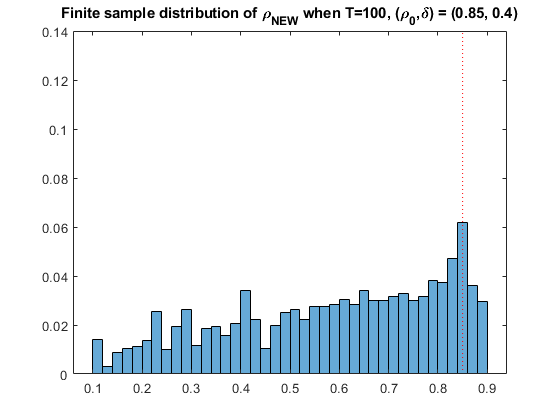}
\end{subfigure}
\begin{subfigure}{.48\textwidth}
  \centering
    \includegraphics[width=\linewidth]{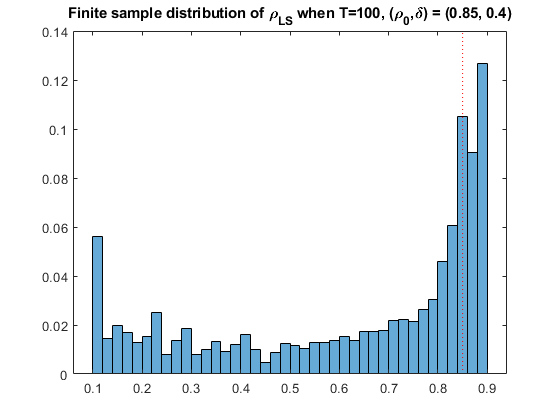}
\end{subfigure}
\caption{($\rho_0 = 0.85$) Finite sample distribution of the new estimator $\hat{\rho}_{NEW}$ (left) and the LS estimator $\hat{\rho}_{LS}$ (right) under model (\ref{eq:mc_stationarydgp}), with parameter values $(\rho_0,\delta_T) = (0.85, T^{-1/2})$, $(0.85, 2T^{-1/2})$, and $(0.85, 4T^{-1/2})$ and $T=100$, respectively.  The optimization space is trimmed by fraction $\alpha$ on both ends.}
\label{fig:finitedist_r85}
\end{figure}

\subsection{Autoregressive process}\label{subsec:mc_ar1}

For the AR(1) process, I replicate two experiments from \citet{Jiang2017}. The first experiment is a break in the lag coefficient so that the stationary process changes to another stationary AR(1) process. The second case is a change from a local-to-unit root to a stationary AR(1) process. Each experiment is generated from model (\ref{eq:DGPinfill_ar}) with $h=1/200$ ($T=200$), $\sigma = 1$, $\varepsilon_t \overset{i.i.d.}{\sim} N(0,1)$, $\rho_0 \in \{0.3, 0.5, 0.7\}$, and different combinations of $\mu$ and $\delta$ with $\beta_1 =\exp(-\mu/T)$ and $\beta_2 = \exp(-(\mu+\delta)/T)$. 
\begin{enumerate}
    \item Stationary to stationary: $(\mu,\delta) = (138,55)$, which implies $(\beta_1,\beta_2) = (0.5,0.38)$;
    \item Local-to-unity to stationary: $(\mu,\delta) = (1,5)$, which implies $(\beta_1,\beta_2) = (0.995,0.97)$.
\end{enumerate}
The stochastic integrals of in-fill asymptotic distributions are approximated over a grid size $h = 0.005$. The optimization space is trimmed by fraction $\alpha=0.1$. The break point estimator $\hat{\rho}_{NEW}$ of the AR(1) model is defined in (\ref{eq:hatk_infillar}) and its asymptotic distribution is stated in Theorem \ref{Thm:infillmult_dfixed}. The in-fill asymptotic distribution of the LS estimator $\hat{\rho}_{LS}$ is stated in Theorem 4.1 from \citet{Jiang2017}. 

Table \ref{table:rmse_ar1_fini} provides the RMSE, bias, and the standard error of $\hat{\rho}_{NEW}$ and $\hat{\rho}_{LS}$ for the finite sample, respectively (see Table \ref{table:rmse_ar1_infill} in Appendix C for the asymptotic distribution). Similar to results in section \ref{subsec:mc_stationary}, the RMSE of $\hat{\rho}_{NEW}$ is smaller than that of $\hat{\rho}_{LS}$ for all parameter values $(\beta_1,\beta_2,\rho_0)$ considered. This also holds in the limit. A decrease in RMSE of $\hat{\rho}_{NEW}$ emerges from the trade-off of a relatively large decrease in variance compared to the increase in the squared bias.

Figures \ref{fig:ar_fini_stationary} and \ref{fig:ar_fini_localtounit} are finite sample distributions of the break point in the two experiments. For the stationary to another stationary process change, the LS estimator $\hat{\rho}_{LS}$ mode at the true break point is almost negligible, unless it is the median $\rho_0 = 0.5$. In contrast, the estimator $\hat{\rho}_{NEW}$ has a unique mode at the true break point for all $\rho_0 \in \{0.3, 0.5, 0.7\}$. For the local-to-unit root to a stationary AR(1) change, both estimators have a higher probability at the true break point. However, the LS estimator continues to exhibit tri-modality with modes at the ends, whereas the new estimator has a unique mode at $\rho_0$.

\begin{figure}[H]
\centering
\begin{subfigure}{.48\textwidth}
  \centering
    \includegraphics[width=\linewidth]{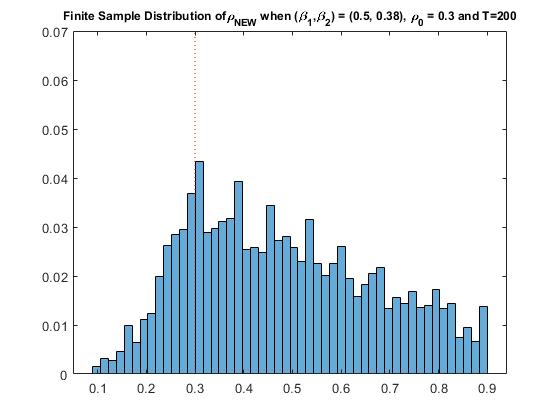}
\end{subfigure}
\begin{subfigure}{.48\textwidth}
  \centering
    \includegraphics[width=\linewidth]{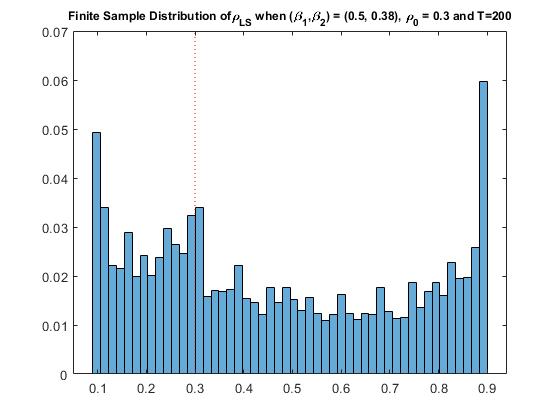}
\end{subfigure}
\begin{subfigure}{.48\textwidth}
  \centering
    \includegraphics[width=\linewidth]{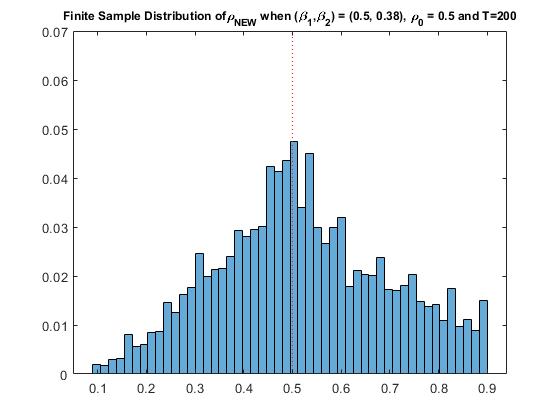}
\end{subfigure}
\begin{subfigure}{.48\textwidth}
  \centering
    \includegraphics[width=\linewidth]{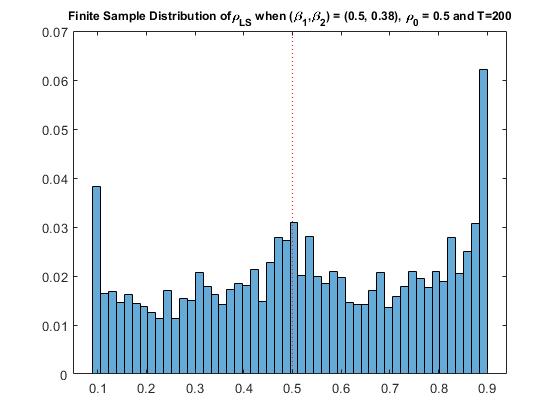}
\end{subfigure}
\begin{subfigure}{.48\textwidth}
  \centering
    \includegraphics[width=\linewidth]{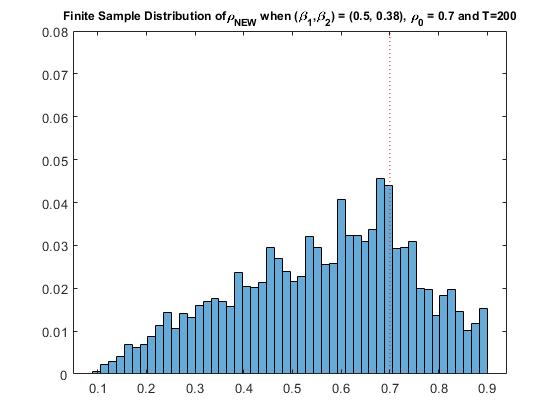}
\end{subfigure}
\begin{subfigure}{.48\textwidth}
  \centering
    \includegraphics[width=\linewidth]{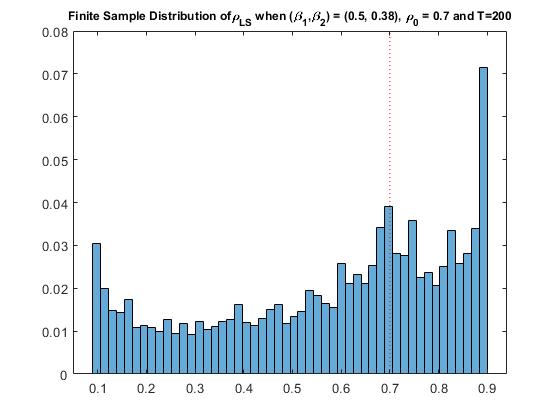}
\end{subfigure}
\caption{(Stationary to stationary) Finite sample distributions of the new estimator (left) and the LS estimator (right) when the lag coefficient pre- and post-break are $(\beta_1,\beta_2) = (0.5,0.38)$ at break points $\rho_0 = 0.3, 0.5$, and 0.7, respectively. }
\label{fig:ar_fini_stationary}
\end{figure}

\begin{figure}[H]
\centering
\begin{subfigure}{.48\textwidth}
  \centering
    \includegraphics[width=\linewidth]{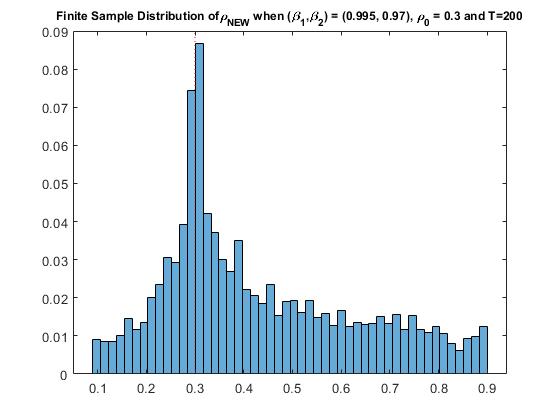}
\end{subfigure}
\begin{subfigure}{.48\textwidth}
  \centering
    \includegraphics[width=\linewidth]{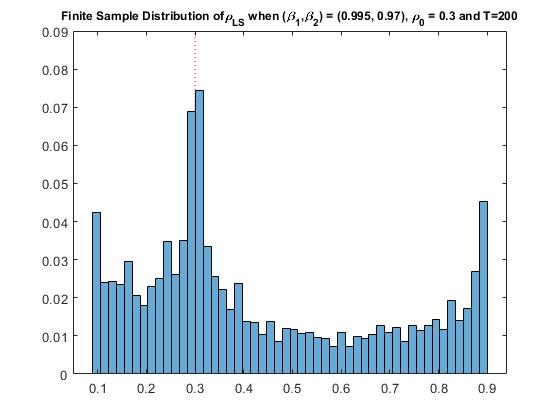}
\end{subfigure}
\begin{subfigure}{.48\textwidth}
  \centering
    \includegraphics[width=\linewidth]{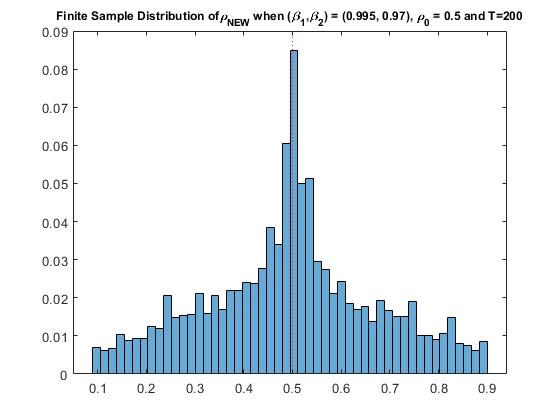}
\end{subfigure}
\begin{subfigure}{.48\textwidth}
  \centering
    \includegraphics[width=\linewidth]{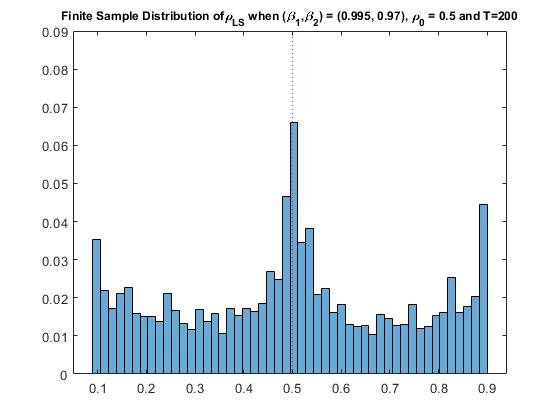}
\end{subfigure}
\begin{subfigure}{.48\textwidth}
  \centering
    \includegraphics[width=\linewidth]{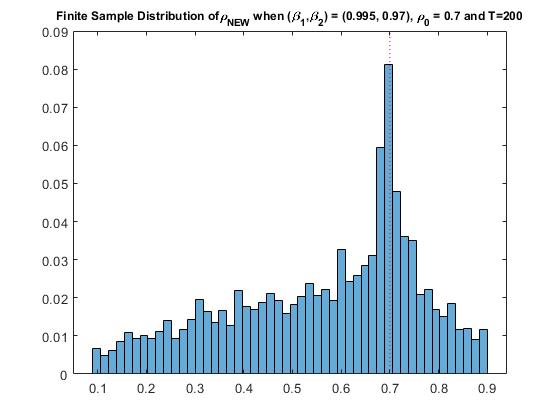}
\end{subfigure}
\begin{subfigure}{.48\textwidth}
  \centering
    \includegraphics[width=\linewidth]{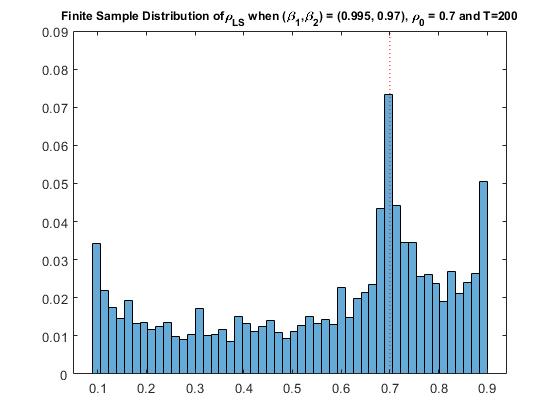}
\end{subfigure}
\caption{(Local-to-unity to stationary) Finite sample distributions of the new estimator (left) and the LS estimator (right) when the lag coefficient pre- and post-break are $(\beta_1,\beta_2) = (0.995,0.97)$ at break points $\rho_0 = 0.3, 0.5$, and 0.7, respectively. }
\label{fig:ar_fini_localtounit}
\end{figure}

\begin{table}[H]
\caption{Finite sample RMSE, bias, and the standard error of the new estimator and the LS estimator of the break point under the AR(1) model (\ref{eq:DGPinfill_ar}) with parameter values $(\beta_1,\beta_1,\rho_0)$ and $T = 200$. The number of replications is 5,000.}
\begin{center}  
\begin{tabular}{lllC{1.5cm}C{1.5cm}C{1.5cm}C{1.5cm}C{1.5cm}C{1.5cm}} 
\hline
\hline
& & & \multicolumn{2}{c}{RMSE} & \multicolumn{2}{c}{Bias} & \multicolumn{2}{c}{Standard error}\Tstrut\\
\cline{4-9}
$\beta_1$ & $\beta_2$ & $\rho_0$ & NEW & LS & NEW & LS & NEW & LS\Tstrut\\
\hline
\multirow{3}{*}{0.5} & \multirow{3}{*}{0.38} & 0.3 & 0.2627 & 0.3091 & 0.1821 & 0.1657 & 0.1893 & 0.2610\Tstrut\\
 &  & 0.5 & 0.1763 & 0.2452 & 0.0204 & 0.0282 & 0.1751 & 0.2436\\
 &  & 0.7 & 0.2285 & 0.2725 & -0.1379 & -0.1223 & 0.1822 & 0.2435\\
 \hline
\multirow{3}{*}{0.995} & \multirow{3}{*}{0.97} & 0.3 & 0.2369 & 0.2780 & 0.1319 & 0.1279 & 0.1967 & 0.2469\Tstrut\\
 &  & 0.5 & 0.1754 & 0.2328 & -0.0042 & -0.0047 & 0.1754 & 0.2327\\
 &  & 0.7 & 0.2375 & 0.2784 & -0.1358 & -0.1336 & 0.1948 & 0.2442\\
 \hline
\hline
\end{tabular}
\end{center}
\label{table:rmse_ar1_fini}
\end{table}

\section{Empirical Application}\label{sec:empirical}

In this section, I estimate the structural breaks in two empirical applications. I analyze the performance of the break point estimator by comparing it with the LS estimator and historical events documented in the literature. Furthermore, I show that the estimator is robust to trimming the sample period, whereas the LS estimator varies significantly depending on the trimmed sample. The first application is about the structural break in postwar U.S. real GDP growth rate. The second application is estimating the break date on the U.S. and UK stock returns using the return prediction model from \citet*{Paye2006}.

\subsection{U.S. real GDP growth rate}\label{subsec:empirical_gdpgrowth}

In the macroeconomics literature, shocks that affect the mean growth rate are often modeled as a one-time structural break because of their rare occurrence. However, existing estimation methods fail to capture the graphical evidence of postwar European and U.S. growth, slowing sometime in the 1970s. This is known as the ``productivity growth slowdown," which is widely hypothesized in macroeconomics literature. For instance, \citet*{Bai1998b} show that for the U.S., most test statistics reject the no-break hypothesis; however, the estimated confidence interval does not contain the slowdown in the 1970s. 

I estimate the structural break of an autoregressive model using the postwar quarterly U.S. real GDP growth rate. I use real GDP in chained dollars (base year 2012) data from the Bureau of Economic Analysis (BEA) website for the sample period 1947Q1-2018Q2, seasonally adjusted at annual rates. Annualized quarterly growth rates are calculated as 400 times the first differences of the natural logarithms of the levels data. I assume that log output has a stochastic trend with a drift and a finite-order representation. Following the \citet*{Eo2015} approach, I use \citeauthor{Kurozumi2011}'s \citeyearpar{Kurozumi2011} modified Bayesian information criterion (BIC) for lag selection to account for structural breaks. The highest lag order selected is 1 for output growth, given an upper bound of four lags and four breaks. The AR(1) model (\ref{eq:outputgrowth_ar1}) is estimated under three cases. The first case is a break only in the drift term ($\gamma \neq 0$, $\delta = 0$), the second case is a break only in the coefficient of lag, the ``propagation term" ($\gamma = 0$, $\delta \neq 0$), and finally, a break in constant and coefficient. 
\begin{equation}\label{eq:outputgrowth_ar1}
    \Delta y_{t} = \beta + \phi \Delta y_{t-1} + \bbone\{t > k_0 \} \left(\gamma + \delta \Delta y_{t-1}\right) + \varepsilon_{t}.
\end{equation}
I assume the error term $\{ \varepsilon_{t} \}$ is serially uncorrelated with mean zero disturbances. If a structural break occurs in constant and lag coefficients, the long-run growth rate of log output will change from $E[\Delta y_{t}] = \beta/(1-\phi)$ to $(\beta +\gamma)/(1-\phi-\delta)$ and the volatility of the growth rate will change from $Var[\Delta y_{t}] = \sigma^{2}/(1-\phi^{2})$ to $\sigma^{2}/(1-(\phi+\delta)^{2})$ at time $k_0$. 

Using the notations of model (\ref{eq:DGPmulti}), we have $x_t = (1,\, \Delta y_{t-1})'$, the dependent variable is $\Delta y_t$, and for each model $z_t = R'x_t$ is as follows:
\begin{itemize}
    \item M1: $R = (1,\, 0)'$, $z_t = 1$ 
    \item M2: $R = (0,\, 1)'$, $z_t = \Delta y_{t-1}$
    \item M3: $R = (1, \, 1)'$, $z_t = (1, \Delta y_{t-1})'$
\end{itemize}
The break point estimator $\hat{k}_{NEW}$ in (\ref{eq:newhatk_multi}) is obtained using the weight $\Omega_k = \omega_k^2 I_q$, where $\omega_k = (k/T(1-k/T))^{1/2}$ and $q = $ dim($z_t$)\footnote{For consistency of the break point estimator in a AR(1) model, we use $\Omega_k = \omega_k^2 I_q$, where $\omega_k$ is a function of $k/T$ only.}. I use the full sample 1947Q3-2018Q2 ($T=284$) to estimate the structural break date, then a shorter sub-sample to see if the break date estimates change. The optimization space of $k$ is the sample trimmed by fraction $\alpha = 0.1$ at both ends; the grid starts at 1954Q2 and ends at 2011Q1. The second and third columns of Table \ref{table:outputgrowth_ar1} show the break date estimates for the full sample. For M1 and M3, the two estimates are extremely different from each other, $\hat{k}_{NEW}$ is 1973Q1, whereas $\hat{k}_{LS}$ is 2000Q2. The break point estimates on the unit interval are approximately 0.36 and 0.75, respectively. Without any knowledge of historical events, one might think the finite sample properties of $\hat{\rho}_{LS}$ do not appear here, because it is not close to the boundaries of $0.1$ or $0.9$. 

However, the LS estimate switches to the boundary if we consider a sub-sample that is one decade shorter. Consider a sub-sample that ends at 2007Q1, with starting date 1947Q3, so that the search grid includes $\hat{k}_{LS}$ from all models. The LS estimate of M1 changes drastically to 1953Q1, which is the end of the search grid, $\hat{\rho}_{LS} = 0.1$. In contrast, my estimator under M1 provides the same break date estimate $\hat{k}_{NEW} = $1973Q1. For M3, both estimates change, so $\hat{k}_{NEW} = $1966Q1 and $\hat{k}_{LS} = $1958Q1. Compared to the full sample estimate, the change in the new estimate is 7 years, whereas it is over 40 years for the LS estimate. 

The break date estimate $\hat{k}_{NEW}=$1973Q1 under M1 corresponds to the productivity growth slowdown in the early 1970s. U.S. labor productivity experienced a slowdown in growth after the oil shock in 1973 (see \citet*{Perron1989} and \citet*{Hansen2001}). None of the models estimate a break date in the 1980s, which is known as ``the Great Moderation," referencing an empirical fact of a large reduction in the volatility of U.S. real GDP growth in 1984Q1, established by \citet*{Kim1999} and \citet*{McConnell2000}. I focus on events that affect the mean rather than the volatility of growth rate because the change in volatility is not a linear function of the change in the lag coefficient in model (\ref{eq:outputgrowth_ar1}).

\begin{table}[H]
\caption{Structural break date estimates of postwar U.S. real GDP growth rate in a AR(1) model. For each model, the first row is the break date estimate and the second row is the break point estimate (fraction in the corresponding sample).}
\begin{center}  
\begin{tabular}{L{1.3cm}cc|cc} 
\hline
\hline
& \multicolumn{2}{c|}{1947Q3-2018Q2} & \multicolumn{2}{c}{1947Q3-2007Q1}\Tstrut\Bstrut\\
 Model & NEW & LS & NEW & LS\Bstrut\\
\hline
 \multirow{2}{*}{M1}  & 1973Q1 & 2000Q2 & 1973Q1 & 1953Q1 \Tstrut\\
 & 0.36 & 0.75 & 0.43 & 0.10\\
 \hline
 \multirow{2}{*}{M2} & 1966Q1 & 1966Q1 & 1966Q1 & 1966Q1\Tstrut\\
  & 0.26 & 0.26 & 0.32 & 0.32\\
 \hline
 \multirow{2}{*}{M3} & 1973Q1 & 2000Q2 & 1966Q1 & 1958Q1 \Tstrut\\
 & 0.36 & 0.75 & 0.32 & 0.18\\ 
 \hline
\hline
\end{tabular}
\end{center}
\label{table:outputgrowth_ar1}
\end{table}

I check the sensitivity of the estimators to trimming the sample, by computing the two estimators for a total of 54 sub-samples, which end at different dates from 2005Q1 to 2018Q2 (the samples used to obtain estimates differ, not the fraction $\alpha$ of the optimization space within the sample). Because the sample is trimmed by one quarter each time, switching to a different estimate, which is farther, implies that the estimator is sensitive to trimming, rather than suggesting multiple breaks\footnote{It is likely that multiple structural breaks exist in the output growth rate because we are considering a sample that is over 70 years. Estimates that vary depending on the sub-sample could be evidence of more than one break in the sample period. The break date estimate $\hat{k}_{LS} =$ 2000Q2 can align with the tech bubble, also known as the dot-com crash in 2000. In relation to business cycles, $\hat{k}_{LS} = $1953Q1 and 1958Q1 are both recession in 1953, with the end of the Korean war. Under M2, both estimates from the full sample are 1966Q1, and the closest historical event that is likely to affect the output growth rate is the Vietnam war.}. Estimating multiple structural breaks using the weighting scheme is beyond the scope of this study. For LS estimation of multiple breaks see \citet*{Bai1998a} and \citet{Bai1998b}.

Table \ref{table:outputgrowth_ar1_subsample} provides the number of sub-samples that have the same break date estimates; the entries are fractions of the number of sub-samples out of 54 sub-samples. For M1 and M3, there are sub-samples in which the LS estimates are at the boundaries $\hat{\rho}_{LS} = 0.1$ or $0.9$. In contrast our break point estimates are either mid 1960s or early 1970s, which are in the fraction interval $\hat{\rho} \in [0.2, 0.5]$.

In short, estimating a structural break of postwar U.S. real GDP growth rate using our estimation method, provides evidence of a break occurring in 1973Q1, which corresponds to the productivity growth slowdown period. However, the LS estimates a break occurs in 2000 or 1953, depending on the time interval. Break date estimates are obtained for sub-samples with end dates 2005Q1 to 2018Q2 for both methods; the LS estimates vary considerably, with $\hat{\rho}_{LS}$ near 0.1 and 0.9 for almost 40\% of the sub-samples considered under M1. In contrast, my estimates are 1966Q1 or 1973Q1 for all sub-samples and models. This suggests that the difference in LS estimates, depending on the sample period, are due to their finite sample behavior (tri-modality) rather than evidence of multiple structural breaks.

\begin{table}[H]
\caption{Structural break date estimates of postwar U.S. real GDP growth rate in an AR(1) model using 54 sub-samples. The entries are the fractions of the number of sub-samples that have break date estimates corresponding to the first column. The second column is the interval of break point estimates that depend on sub-sample size. The start date is 1947Q3 and the end dates of sub-samples change across 2005Q1 to 2018Q2.}
\begin{center}  
\begin{tabular}{L{2cm}C{2.3cm}C{1cm}C{1cm}C{1cm}C{1cm}C{1cm}C{1cm}} 
\hline
\hline
  &  & \multicolumn{2}{c}{M1} & \multicolumn{2}{c}{M2} & \multicolumn{2}{c}{M3}\Tstrut\\
 \cline{3-8}
Break date & $\hat{\rho}$ interval & NEW & LS & NEW & LS & NEW & LS\Tstrut\\
\hline
1953Q1 & 0.10 & & 0.17 &  &  & & \Tstrut\\
1958Q1 & [0.17, 0.19] & &  &  &  & & 0.22\\
1966Q1 & [0.26, 0.33] & & 0.06 & 1 & 1 & 0.30 & 0.09\\
1973Q1 & [0.36, 0.45] & 1 & 0.05 &  &  & 0.70 & \\
2000Q2 & [0.74, 0.86] & & 0.52 &  &  & & 0.48\\
2006Q1 & [0.85, 0.90] & & 0.20 &  &  & & 0.21\Bstrut \\
 \hline
\hline
\end{tabular}
\end{center}
\label{table:outputgrowth_ar1_subsample}
\end{table}

\smallskip

\subsection{Stock return prediction models}\label{subsec:empirical_stock}

\citet*{Paye2006} studied the instability in models of ex-post predictable components in stock returns by examining structural breaks in the coefficients of state variables. The regression model (\ref{eq:stockreturnmodel}) is specified with four state variables as follows: the lagged dividend yield, short-term interest rate, term spread, and default premium. The model allows for all coefficients to change because no strong reason exists to believe that the coefficient on any of the regressors should be immune from shifts. The multivariate model with a one-time structural break at $k$ with $t=1,\ldots,T$ is
\begin{align}
\label{eq:stockreturnmodel}
    Ret_t & = \beta_{0}+\beta_{1}Div_{t-1} + \beta_{2}Tbill_{t-1} + \beta_{3}Spread_{t-1} + \beta_{4}Def_{t-1} \\
\nonumber    
     & \hspace{0.7cm} + \bbone\{t>k\}\left(\delta_0 + \delta_{1}Div_{t-1} + \delta_{2}Tbill_{t-1} + \delta_{3}Spread_{t-1} + \delta_{4}Def_{t-1}\right) + \varepsilon_t,
\end{align}
where $Ret_t$ represents the excess return for the international index in question during month $t$, $Div_{t-1}$ is the lagged dividend yield, $Tbill_{t-1}$ is the lagged local country short interest rate, $Spread_{t-1}$ is the lagged local country spread, and $Def_{t-1}$ is the lagged US default premium. From the notation of model (\ref{eq:DGPmulti}), $y_t = Ret_t$ and for the multivariate model, $x_t = z_t = (1,Div_{t-1},Tbill_{t-1},$ $Spread_{t-1},Def_{t-1})$. For the univariate model with dividend yield $x_t = z_t = (1,Div_{t-1})$, which is defined analogously for other univariate models. The weight matrix is $w_k = T^{-1}Z_k'MZ_k$, where $Z_k = (0,\ldots,0,z_{k+1},\ldots,z_{T})'$ and $M = I-X(X'X)^{-1}X'$. Following the approach of \citet*{Paye2006}, I examine univariate models to facilitate the interpretation of coefficients, in addition to the multivariate model (\ref{eq:stockreturnmodel}). 

I collected data from Global Financial Data and Federal Reserve Economic Data (FRED). The indices of the total return and dividend yield series are the S\&P 500 for the U.S. and the Financial Times Stock Exchange (FTSE) All-share for the UK. The dividend yield is expressed as an annual rate and constructed as the sum of dividends over the preceding 12 months, divided by the current price. For both countries, the three-month Treasury bill (T-bill) rate is used as a measure of the short-term interest rate and the 20-year government bond yield is the measure of the long-term interest rate. Excess returns are the total return stocks in the local currency less the total return on T-bills. The term spread is constructed as the difference between the long- and the short-term local country interest rate. The U.S. default premium is the differences in yields between Moody's Baa and Aaa rated bonds. The search grid is obtained by trimming each sample period by fraction $\alpha=0.15$ (which is equivalent to the trimming window of \citet*{Paye2006}). For the full sample, the search grid is 1960:2-1996:3, and for the sub-sample it is 1975:1-1998:10.

Under the univariate model, with the lagged dividend yield as a single forecasting regressor, the LS estimate of the break point for the S\&P 500 is close to the boundary of the search grid. \citet*{Paye2006} note that the NYSE or S\&P 500 indices have the same estimated break date when the trimming window is shortened, and thus, the discrepancy is not the sole explanation for the timing of the break. However, it is likely that estimates are near the boundaries because of the finite sample behavior of the LS estimator. I check whether the new estimator provides a different break date estimate under model (\ref{eq:stockreturnmodel}), using data similar to the first dataset from \citet*{Paye2006}, which is monthly data on the U.S. and the UK stock returns from 1952:7 to 2003:12. For comparison, I also estimate the break using a shorter period 1970:1-2003:12, which is equivalent to the sample period of their second dataset.

Table \ref{table:stockreturn_us} provides estimates of the two samples using the S\&P 500 index. One notable feature is that the LS estimates that a break occurred in December 1994, with break point $\hat{\rho}_{LS} = 0.83$, whereas my method estimates a break in the mid-1980s and $\hat{\rho}_{NEW} = 0.62$. Although the LS estimate is close to the boundaries of the grid, it gives the same break estimate in the sub-sample. This suggests that a break may have occurred multiple times. \citet*{Paye2006} use the \citet*{Bai1998a} method and find that two structural breaks occur in the return model (\ref{eq:stockreturnmodel}), using the S\&P 500, where each break occurs at 1987:7 and 1995:3. They note that the break in 1987 appears to be an isolated break, not appearing in other international markets. These two break date estimates are similar to estimates in Table \ref{table:stockreturn_us}, which assume a one-time structural break.

\begin{table}[H]
\caption{Structural break date estimates of the U.S. stock return (S\&P 500) prediction model for samples 1952:7-2003:12 and 1970:1-2003:12. For each model, the first row is the break date estimate and the second row is the break point estimate (fraction in the corresponding sample). }
\begin{center}  
\begin{tabular}{lcc|cc} 
\hline
\hline
 & \multicolumn{2}{c|}{1952:7-2003:12} & \multicolumn{2}{c}{1970:1-2003:12} \Tstrut\\
 Model & NEW & LS  & NEW & LS\\
\hline
\multirow{2}{*}{Multi.} & 1984:8 & 1994:12 & 1982:8 & 1994:12\Tstrut\\
 & 0.62 & 0.83 & 0.37 & 0.74\\
 \hline
 \multirow{2}{*}{Div. yield} & 1982:8 & 1995:1 & 1982:8 & 1996:9 \Tstrut\\
  & 0.59 & 0.83 & 0.37 & 0.79\\
\hline
 \multirow{2}{*}{T-bill} & 1974:10 & 1974:10 & 1982:8 & 1975:1 \Tstrut\\
  & 0.43 & 0.43 & 0.37 & 0.15\\
\hline 
 \multirow{2}{*}{Spread} & 1983:5 & 1976:2 & 1987:9 & 1976:2 \Tstrut\\
  & 0.60 & 0.46 & 0.52 & 0.18\\
 \hline
 \multirow{2}{*}{Def.prem.}  & 1968:12 & 1965:11 & 1982:8 & 1975:7 \Tstrut\\
  & 0.32 & 0.26 & 0.37 & 0.16\\
  \hline
\hline
\end{tabular}
\end{center}
\label{table:stockreturn_us}
\end{table}

An alternative explanation of the break in the early 1980s is that the estimation method captures a change in the individual state variable itself rather than the coefficient of the prediction model (\ref{eq:stockreturnmodel}), because the noisy nature of stock market returns makes it extremely difficult to detect a break. For instance, the estimate $\hat{\beta}_1$ could be capturing noise caused by the movement in $Div_{t-1}$ (see Figure \ref{fig:us_tbill_divyield} in Appendix C).

For UK stock returns, both methods obtain a break date estimate that is (or close to) 1975:1 under all models and sample periods. This is different from the result using the S\&P 500 index series, because the excess return for the FTSE All-share index increases nearly 10 standard deviations from 1975:1 to 1975:2 \footnote{For the sample period 1952:7-2003:12, the mean excess return of the FTSE All-share index is 0.5949 and the standard deviation is 5.4890. At $t = $1975:1, the excess return $Ret_t = 0.4556$ and at $t = $1975:2 we have $Ret_t = 53.2187$; therefore, the change is approximately 9.6 standard deviations.}. Hence, the change in excess returns is large enough for the LS to detect the break point appropriately. \citet*{Paye2006} relate the break in the mid-1970s to the large macroeconomic shocks reflecting large oil price increases; breaks in the underlying economic fundamentals process can explain breaks in financial return models. If this is the case, then the break magnitude is large enough for both methods to accurately estimate the break date 1975:1.

\begin{table}[H]
\caption{Structural break date estimates of the UK (FTSE) stock return prediction model for samples 1952:7-2003:12, and 1970:1-2003:12. For each model, the first row is the break date estimate and the second row is the break point estimate (fraction in the corresponding sample).}
\begin{center}  
\begin{tabular}{lcc|cc}
\hline
\hline
& \multicolumn{2}{c|}{1952:7-2003:12} & \multicolumn{2}{c}{1970:1-2003:12} \Tstrut\\
 Model & NEW & LS  & NEW & LS\\
\hline
 \multirow{2}{*}{Multi.} & 1975:1 & 1975:1 & 1975:1 & 1975:1\Tstrut\\
& 0.44 & 0.44 & 0.15 & 0.15\\
\hline
 \multirow{2}{*}{Div. yield} & 1975:1 & 1975:1 & 1975:1 & 1975:1\Tstrut\\
 & 0.44 & 0.44 &  0.15 & 0.15\\
\hline
 \multirow{2}{*}{T-bill} & 1975:1 & 1974:12 & 1975:1 & 1975:1 \Tstrut\\
 & 0.44 & 0.29 & 0.15 & 0.15\\
\hline
 \multirow{2}{*}{Spread} &  1975:1 & 1975:6 & 1975:1 & 1975:3 \Tstrut\\
& 0.44 & 0.45 & 0.15 & 0.15\\
 \hline
 \multirow{2}{*}{Def.prem.} & 1979:5 & 1975:6 & 1975:1 & 1975:3 \Tstrut\\
  & 0.52 & 0.45 & 0.52 & 0.15\\
 \hline
\hline
\end{tabular}
\end{center}
\label{table:stockreturn_uk}
\end{table}

\section{Conclusion}\label{sec:conclusion}

This study provides an estimation method of the structural break point in multivariate linear regression models, when a one-time break occurs in a subset of (or all) coefficients. In particular, this study focuses on break magnitudes that are empirically relevant. In practice, it is likely that the shift in parameters is small in a statistical sense. The LS estimation widely used in the literature fails to accurately estimate the break point under small break magnitudes, which motivates us to construct the estimation method in this study.

I construct a weight function on the sample period normalized to a unit interval, which imposes small weights on the LS objective for potential break points with large estimation uncertainty. The break point estimator is the argmax of the objective function that is equal to the LS objective multiplied by a weight function. The break point estimator is consistent under regularity conditions on a general weight function, with the same rate of convergence as the LS estimator from \citet*{Bai1997}. The limit distribution under a small break magnitude derives under an in-fill asymptotic framework, following the approach by \citeauthor{Jiang2017} (\citeyear{Jiang2017}, \citeyear{Jiang2018}). For a structural break in a stationary linear process with a small break magnitude (inside the local $T^{-1/2}$ neighborhood of zero), the asymptotic distribution of the new estimator explicitly depends on the weight function. The limit distribution is also derived for a break in a local-to-unit root process, assuming the break magnitude is $O(T^{-1})$. Monte Carlo simulation results show that for a small break, the break point estimator reduces the RMSE compared to the LS estimator for all parameter values considered. 

This study provides two empirical applications as follows: structural breaks on the U.S. real GDP growth and the U.S. and UK stock return prediction models. My break point estimator is robust to trimming of the sample, in contrast to the LS. In particular, my method estimates the break date 1973Q1 in U.S. real GDP growth rates, which LS estimation has failed to confirm. In macroeconomics literature, the ``productivity growth slowdown" in the early 1970s is a widely known empirical fact.

In short, this study provides an alternative estimation method that estimates the timing of a structural break in linear regression models under empirically relevant break magnitudes. My estimator shows a uni-modal finite sample distribution under statistically small break magnitudes. To my knowledge, this is the first study to widen the class of break point estimators by generalizing least-squares. I provide theoretical results of the consistency of the estimator and an asymptotic distribution that represents finite sample behavior. If the break magnitude is small, my estimator outperforms the LS estimation in terms of RMSE. Thus, under statistically small but empirically relevant breaks, the estimator described in this study provides reliable inferences of the change point in models. The estimation method can be generalized to estimate multiple structural breaks, which is a topic for future research.

\bibliographystyle{itaxpf}
\bibliography{main}  

\begin{appendix}

\section*{Appendix A}

\setcounter{lemma}{0}
\renewcommand{\thelemma}{A.\arabic{lemma}}

\setcounter{equation}{0}
\renewcommand{\theequation}{A.\arabic{equation}}


In a Bayesian perspective, the weight function can be interpreted as a prior belief on parameters $\delta$ and $\rho$. Suppose the prior distribution of the break magnitude $\delta$ conditional on $\rho$ is normal, with mean zero and variance $\nu^2$, $\delta \vert \rho \sim N(0, \nu^2)$. Denote the prior distribution of the break point as $f(\rho)$ and let $\by = (y_1, \ldots, y_T)$. We assume Gaussian disturbances, $\varepsilon_t \overset{i.i.d.}{\sim} N(0,\sigma^2)$. The joint distribution of $\delta$, $\rho$, and $\by$ is
\begin{equation*}
    f(\by,\delta,\rho) = (2\pi)^{-1}(\nu^2 I_T^{-1})^{-1/2}\exp\left[-\frac{1}{2}\left\{\frac{\delta^2}{\nu^2} + \frac{(\bar{y}_k^{*}-\bar{y}_k -\delta)^2}{I_T^{-1}}\right\}\right]f(\rho),
\end{equation*}
where $I_T = I_T(\delta\vert\rho)$ is the Fisher information of $\delta$ conditional on $\rho$ and $l_T(\delta \,\vert\,\rho)$ is the conditional log-likelihood function.
\begin{equation*}
        I_T := E\left[-\frac{\partial^2 l_T(\delta \,\vert\,\rho)}{\partial \delta\partial \delta'} \right] = \sigma^{-2}T\rho(1-\rho).
\end{equation*}
The posterior distribution of $\rho$ is proportional to $f(\by,\delta,\rho)$, integrated with respect to $\delta$.
\begin{equation*}
    f(\rho \vert \by) \propto f(\rho)(I_T^{-1} + \nu^2)^{-1/2}\exp\left[-\frac{(\bar{y}_k^{*}-\bar{y}_k)^2}{2(I_T^{-1}+\nu^2)}\right].
\end{equation*}
We assume $\rho$ is bounded away from $\{0,1\}$ so that $I_T^{-1} = o(1)$, and for simplicity, assume $\sigma^2 = 1$. Suppose $\nu = T^{-1} I_T = \rho(1-\rho)$ and $f(\rho) \propto \nu$. In the limit, the posterior distribution is proportional to 
\begin{equation}\label{eq:posterior_limit}
    f(\rho \vert \by) \propto \exp\left[-\frac{(\bar{y}_k^{*}-\bar{y}_k)^2}{2\rho^2 (1-\rho)^2}\right].
\end{equation}
Given the data $\by$, the argmax function of the monotone transformation of the likelihood (\ref{eq:posterior_limit}) is asymptotically equivalent to the argmax of $Q_T(k)^2 = \omega_k^2 V_T(k)^2$, with $\omega_k = (k/T(1-k/T))^{1/2}$. That is, if the weight function is proportional to the square root of the Fisher information of $\delta$ conditional on $\rho$, the mode of the Bayesian posterior distribution (known as the maximum a-posteriori probability (MAP) estimator) is asymptotically equivalent to the break point estimator in (\ref{Def:newestimator_breakmean}).

Note that the Fisher information is interpreted as a way to measure the amount of information about the unknown parameter $\delta$, given $\rho$. Given two different values $\rho_1 \neq \rho_2$, the inequality $I_T(\delta\vert\rho_1) > I_T(\delta\vert\rho_2)$ reflects the fact that observations carry more information on the break magnitude if a break occurs at $\rho_1$, compared to $\rho_2$. In other words, there is more information of a structural break occurring at $\rho_1$ than at $\rho_2$. If a break occurs with high probability, its magnitude is likely to be far from zero; if it is not likely, then $\delta$ is close to zero. If we scale the Fisher information by sample size ($I := T^{-1}I_T$), then the prior distribution $\delta \vert\rho \sim N(0, I)$ incorporates the amount of information of a break at some fixed $\rho$. A large $I$ implies the variance of the prior distribution is large, thus $\delta$ is more spread out from zero and has large magnitude with high probability. In the opposite case, if there is less information, $\delta$ is centered toward mean zero and the break magnitude is likely to be small. Similarly, a prior belief of $\rho$ can be expressed using the Fisher information ($f(\rho) \propto I$ is equivalent to a Beta distribution with shape parameters $(2,2)$).

\section*{Appendix B}

\noindent \textbf{Proof of Theorem \ref{Thm:rateofconvergence}} 
\begin{proof}
By rearranging terms we have 
\begin{equation}\label{eq:QinGandH}
    Q_{T}(k)^2-Q_T(k_0)^2 = -\vert k_0 -k\vert G_{T}(k) + H_{T}(k),
\end{equation}
where $G_T(k)$ and $H_{T}(k)$ are defined as follows:
\begin{align}
\nonumber
    G_{T}(k) & := \frac{1}{\vert k_0 - k \vert} \delta_T'\left[(Z_0'MZ_0)^{1/2}\Omega_{k_0}(Z_0'MZ_0)^{1/2} \right. \\
\label{eq:Gk}
    & \hspace{0.7cm} \left. - (Z_0'MZ_k)(Z_k'MZ_k)^{-1/2}\Omega_k(Z_k'MZ_k)^{-1/2}(Z_k'MZ_0) \right]\delta_T
\end{align}
\begin{align}
\nonumber
    H_{T}(k) & := \varepsilon'MZ_k(Z_k'MZ_k)^{-1/2}\Omega_k(Z_k'MZ_k)^{-1/2}Z_k'M\varepsilon\\
\nonumber
    & \hspace{0.7cm} - \varepsilon'MZ_0(Z_0'MZ_0)^{-1/2}\Omega_{k_0}(Z_0'MZ_0)^{-1/2}Z_0'M\varepsilon \\
\label{eq:Hk}
    & \hspace{0.7cm} + 2\delta_T'(Z_0'MZ_k)(Z_k'MZ_k)^{-1/2}\Omega_k (Z_k'MZ_k)^{-1/2}Z_k'M\varepsilon \\
\nonumber    
    & \hspace{0.7cm} - 2\delta_T'(Z_0'MZ_0)^{1/2}\Omega_{k_0} (Z_0'MZ_0)^{-1/2}Z_0'M\varepsilon.
\end{align}
Lemma \ref{Lem:HajekRenyi} is equivalent to \citeauthor{Bai1997}'s (\citeyear{Bai1997}) lemma A.3, which is the generalized H\'{a}jek-R\'{e}nyi inequality for martingale differences to mixingales. For the proof see \citet*{Bai1998a}. 

\smallskip
\begin{lemma}\label{Lem:Gkinversebound}
Under Assumptions \ref{A:multimodel} and \ref{A:Omegak}, for every $\epsilon >0$, there exists $\lambda >0$ and  $C<\infty$ such that $\inf_{\vert k-k_0 \vert > C\norm{\delta_T}^{-2}} G_T(k) \geq \lambda \norm{\delta_T}^2$, with probability at least $1-\epsilon$.
\end{lemma}

\smallskip

\begin{lemma}\label{Lem:HajekRenyi}
Under Assumption \ref{A:multimodel}, there exist a $M < \infty$ such that for every $c>0$ and $m>0$,
\begin{equation*}
    P\left(\sup_{m \leq k \leq T}\frac{1}{k}\norm{ \sum_{t=1}^{k}z_{t}\varepsilon_{t}} > c \right) \leq \frac{M}{c^2m}.
\end{equation*}
\end{lemma}

\smallskip
\begin{lemma}\label{Lem:consistent}
Under Assumptions \ref{A:multimodel} and \ref{A:Omegak}, suppose $\delta_T$ is fixed or shrinking toward zero such that Assumption \ref{A:consistency_delta} is satisfied. Then the break point estimator $\hat{\rho}$ in (\ref{eq:newhatk_multi}) is consistent.
That is, for every $\epsilon >0$ and $\eta > 0$, there exists $T_0 > 0$ such that when $T > T_0$,
\begin{equation*}
    P\left(\vert \hat{\rho}-\rho_0 \vert  > \eta \right) < \epsilon.
\end{equation*}
Moreover, $\vert \hat{\rho}-\rho_0 \vert = O_p\left(T^{-1/2}\norm{\delta_T}\sqrt{\ln T}\right)$. 
\end{lemma}

\bigskip
The rate of convergence of the break point estimator $\hat{\rho}$ in (\ref{eq:newhatk_multi}) can be improved from lemma \ref{Lem:consistent}. For a fixed $\epsilon >0$ and $\eta >0$, inequality (\ref{eq:rate_Teta}) holds for any true break point $\rho_0 \in [\alpha,1-\alpha]$, when $T$ is large. 
\begin{equation}\label{eq:rate_Teta}
    P\left(\underset{\vert k-k_0\vert > T\eta}{\sup} Q_T(k)^2 \geq Q_T(k_0)^2\right) < \epsilon.
\end{equation}
This is equivalent to lemma \ref{Lem:consistent} because given the estimator $\hat{k}$, $Q_{T}(\hat{k})^2 -Q_{T}(k_0)^2 \geq 0$ by definition. This implies that to prove the improved rate of convergence $O_p\left(T^{-1}\norm{\delta_T}^{-2}\right)$, it is sufficient to show that for all $\epsilon > 0$, there exists a finite $C > 0$ so that for all $T>T_\epsilon$,
\begin{equation*}
    P\left(\sup_{k \in K_{T}(C)}Q_{T}(k)^2 \geq Q_{T}(k_0)^2\right) < \epsilon.
\end{equation*}
Here, $K_{T}(C) = \left\{k: \vert k-k_0 \vert > C\norm{\delta_T}^{-2},  \vert k-k_0 \vert \leq T\eta\right\}$ for some small fraction $\eta$. From identity (\ref{eq:QinGandH}), $Q_{T}(k)^2\geq Q_{T}(k_0)^2$ is equivalent to $H_{T}(k)/\vert k-k_0 \vert \geq  G_{T}(k)$. From lemma \ref{Lem:Gkinversebound}, it is sufficient to prove that 


\begin{equation}\label{eq:Hkbounded}
    P\left(\sup_{k \in K_T(C)}\left\vert \frac{H_{T}(k)}{k_0-k}\right\vert > \lambda\norm{\delta_T}^2 \right) < \epsilon.
\end{equation}
We use the expression $Z_0 = Z_k -Z_{\Delta}\text{sgn}(k_0-k)$ to rewrite the third and fourth terms of $H_T(k)$ given in (\ref{eq:Hk}) as
\begin{align}
\nonumber
    2\delta_T'&\left[(Z_0'MZ_k)(Z_k'MZ_k)^{-1/2}\Omega_k (Z_k'MZ_k)^{-1/2}Z_k'M\varepsilon - (Z_0'MZ_0)^{1/2}\Omega_{k_0} (Z_0'MZ_0)^{-1/2}Z_0'M\varepsilon \right]\\
\nonumber    
    & = 2\delta_T'\left[(Z_k'MZ_k)^{1/2}\Omega_k(Z_k'MZ_k)^{-1/2}Z_k'M\varepsilon-(Z_0'MZ_0)^{1/2}\Omega_{k_0}(Z_0'MZ_0)^{-1/2}Z_k'M\varepsilon\right]\\
\label{eq:Hk_thirdfourth} 
    & \hspace{0.5cm} + 2\delta_T'(Z_0'MZ_0)^{1/2}\Omega_{k_0}(Z_0'MZ_0)^{-1/2}(Z_\Delta'M\varepsilon)\,  \text{sgn}(k_0-k) \\
\nonumber
    & \hspace{0.5cm}-2\delta_T'(Z_{\Delta}'MZ_k)(Z_k'MZ_k)^{-1/2}\Omega_k(Z_k'MZ_k)^{-1/2}(Z_k'M\varepsilon)\, \text{sgn}(k_0-k).  
\end{align}
Note that for nonsingular matrices $S$ and $A$ with bounded norms, $SAS^{-1} = A+o_p(1)$. Also, we have $(Z_k'MZ_k)^{-1}Z_k'M\varepsilon = O_{p}(T^{-1/2})$ and $(Z_0'MZ_0)^{-1}(Z_k'MZ_k) = O_p(1)$ uniformly on $K_T(C)$. We use this to find the order of the first line of the right-side in (\ref{eq:Hk_thirdfourth}). 
\begin{align*}
    & \norm{2\delta_T'\left\{(Z_k'MZ_k)^{1/2}\Omega_k(Z_k'MZ_k)^{-1/2}Z_k'M\varepsilon-(Z_0'MZ_0)^{1/2}\Omega_{k_0}(Z_0'MZ_0)^{-1/2}Z_k'M\varepsilon\right\}}\\
    & \hspace{0.5cm} \leq \norm{2\delta_T'\left\{(Z_k'MZ_k)^{1/2}\Omega_k(Z_k'MZ_k)^{1/2}-(Z_0'MZ_0)^{1/2}\Omega_{k_0}(Z_0'MZ_0)^{1/2}O_p(1)\right\}}\\
    & \hspace{2cm}\times \norm{(Z_k'MZ_k)^{-1}Z_k'M\varepsilon} + o_p(1)\\
    & \hspace{0.5cm} \leq \norm{2\delta_T}\norm{Z_k'MZ_k\Omega_k - Z_0'MZ_0\Omega_{k_0}}\, O_{p}(T^{-1/2}) + o_p(1)\\
    & \hspace{0.5cm} = \norm{2\delta_T}\norm{(Z_k'MZ_k-Z_0'MZ_0)\Omega_k - Z_0'MZ_0(\Omega_{k_0}-\Omega_k)}\, O_{p}(T^{-1/2}) + o_p(1).
\end{align*}
Then the second norm can be rearranged by subtracting and adding $Z_0'MZ_k$ to the term $(Z_k'MZ_k - Z_0'MZ_0)$ and Assumption \ref{A:Omegak}.
\begin{align}
\nonumber
    Z_k'MZ_k - & Z_0'MZ_0 \\
\nonumber    
    & = \begin{cases}
    R'[X_{\Delta}'X_{\Delta}(X'X)^{-1}(X'X-X_k'X_k)-X_0'X_0(X'X)^{-1}X_{\Delta}'X_{\Delta}]R &  \text{ if } k \leq k_0 \\
    R'[X_{\Delta}'X_{\Delta}(X'X)^{-1}X_k'X_k-(X'X-X_0'X_0)(X'X)^{-1}X_{\Delta}'X_{\Delta}]R & \text{ if } k > k_0 
    \end{cases}\\
\label{eq:ZkminusZ0}
    & = \vert k_0-k\vert O_p(1),\\
\nonumber    
    \Omega_{k_0} - \Omega_k & = \vert k_0-k\vert T^{-1} O_p(1)
\end{align}
The norm $\norm{(k_0-k)^{-1}X_{\Delta}'X_{\Delta}}$ is bounded by assumption, hence the first line of (\ref{eq:Hk_thirdfourth}) has order $\vert k_0-k\vert \norm{\delta_T}O_{p}(T^{-1/2})$. The second and third lines of (\ref{eq:Hk_thirdfourth}) are 
\begin{align*}
    2\delta_T'(Z_0'MZ_0)^{1/2}& \Omega_{k_0}(Z_0'MZ_0)^{-1/2}(Z_\Delta'M\varepsilon)\,\text{sgn}(k_0-k) \\
    & = 2\delta_T'\Omega_{k_0}(Z_\Delta'M\varepsilon)\, \text{sgn}(k_0-k)+o_p(1)\\
    & = 2\delta_T'\Omega_{k_0}(Z_\Delta'\varepsilon-Z_{\Delta}'X(X'X)^{-1}X'\varepsilon) \,\text{sgn}(k_0-k)+o_p(1)\\
    & = 2\delta_T'\Omega_{k_0}Z_\Delta'\varepsilon\,\text{sgn}(k_0-k)+\vert k_0-k\vert T^{-1/2}\norm{\delta_T}O_p(1) + o_p(1),\\
    -2\delta_T' (Z_{\Delta}'MZ_k)&(Z_k'MZ_k)^{-1/2}\Omega_k(Z_k'MZ_k)^{-1/2}(Z_k'M\varepsilon)\,\text{sgn}(k_0-k)\\
    & = -2\delta_T'(Z_{\Delta}'MZ_k)\Omega_k(Z_k'MZ_k)^{-1}(Z_k'M\varepsilon)\,\text{sgn}(k_0-k)+o_{p}(1)\\
    & = \vert k_0-k\vert T^{-1/2}\norm{\delta_T}O_{p}(1).
\end{align*}
The first and second terms of $H_{T}(k)$ in (\ref{eq:Hk}) are $O_{p}(1)$ uniformly in $K_T(C)$, under Assumptions \ref{A:multimodel} and \ref{A:Omegak}. Therefore, $H_{T}(k)$ divided by $\vert k_0-k \vert$ is
\begin{equation}\label{eq:Hkdivided}
    \frac{H_T(k)}{\vert k_0-k \vert} = 2\delta_T'\Omega_{k_0}\frac{1}{\vert k_0-k \vert}Z_{\Delta}'\varepsilon \,\text{sgn}(k_0-k) + T^{-1/2}\norm{\delta_T}O_{p}(1) + \frac{O_{p}(1)}{\vert k_0-k \vert}.
\end{equation}
Now we can prove (\ref{eq:Hkbounded}) using the expression (\ref{eq:Hkdivided}). Let $1/\norm{\Omega_{k_0}} = A$ where $A < \infty$ by Assumption \ref{A:Omegak}. Without loss of generality, consider the case $k < k_0$. The first term of (\ref{eq:Hkdivided}) is bounded by lemma \ref{Lem:HajekRenyi}.

\begin{align*}
    P& \left(\sup_{k \in K_T(C)}\norm{2\delta_T'\Omega_{k_0}\frac{1}{k_0-k}\sum_{t=k+1}^{k_0}z_{t}\varepsilon_t} > \frac{\lambda\norm{\delta_T}^2}{3} \right)\\
    & \hspace{1.5cm} \leq P\left(\sup_{k_0-k \geq C\norm{\delta_T}^{-2}}\norm{\frac{1}{k_0-k}\sum_{t=k+1}^{k_0}z_{t}\varepsilon_t} > \frac{\lambda A\norm{\delta_T}}{6} \right)\\
    & \hspace{1.5cm} \leq M\left(\frac{\lambda A \norm{\delta_T}}{6}\right)^{-2}\frac{1}{C\norm{\delta_T}^{-2}} \\
    & \hspace{1.5cm} = \frac{36M}{\lambda^2 A^2 C} < \frac{\epsilon}{3}
\end{align*}
The probability is negligible for large $T$ because we can choose a large $C$ value accordingly. For any $\epsilon>0$ and $\eta>0$, we proved that the probability in (\ref{eq:rate_Teta}) is negligible for large $T$. Thus we can choose $C$ such that $K_T(C)$ is non-empty and the inequality above is satisfied for all $\epsilon > 0$ and $T > T_\epsilon$. The second term of (\ref{eq:Hkdivided}) is bounded due to the assumption $(T^{1/2}\norm{\delta_T})^{-1} \rightarrow 0$.
\begin{equation*}
    P \left(T^{-1/2}\norm{\delta_T}O_{p}(1) > \frac{\lambda\norm{\delta_T}^2}{3}\right) = P\left(\frac{O_{p}(1)}{T^{1/2}\norm{\delta_T}} > \frac{\lambda}{3}\right) < \frac{\epsilon}{3}.
\end{equation*}
The third term of (\ref{eq:Hkdivided}) is bounded for $k_0-k \geq C\norm{\delta_T}^{-2}$, since $O_{p}(1)/\vert k_0-k \vert$ $\leq O_{p}(1)\norm{\delta_T}^2/C$,
\begin{equation*}
    P\left(\sup_{k_0-k \geq C\norm{\delta_T}^{-2}}\frac{O_{p}(1)}{\vert k_0-k\vert} > \frac{\lambda \norm{\delta_T}^2}{3}\right)\leq P\left(\frac{O_{p}(1)}{C} > \frac{\lambda}{3}\right) < \frac{\epsilon}{3}
\end{equation*}
where $O_p(1)/C$ is small for large $T$, by choosing a large constant $C$. Hence the bound (\ref{eq:Hkbounded}) holds and the rate of convergence of the break point estimator $\hat{\rho} = \hat{k}/T$ in Theorem \ref{Thm:rateofconvergence} is proved: $\vert \hat{\rho}-\rho_0 \vert = O_p(T^{-1}\norm{\delta_T}^2)$.
\end{proof}
 
\bigskip

\noindent \textbf{Proof of Lemma \ref{Lem:Gkinversebound}}
\begin{proof}
Assumption \ref{A:Omegak} implies that $A_T(k)$ from (\ref{eq:Ak}) is positive definite and thus $G_{T}(k) = \delta_T'A_T(k)\delta_T$ $\geq \lambda_T(k)\norm{\delta_T}^2$, where $\lambda_{T}(k)$ is the minimum eigenvalue of $A_T(k)$. It is sufficient to argue that $\lambda_T(k)$ is bounded away from zero with probability tending to 1 as $\vert k_0 -k \vert$ increases. The matrices $Z_0'MZ_0$ and $Z_0'MZ_k$ are rearranged as follows, similar to $Z_k'MZ_k$ in (\ref{eq:zkmzk}).
\begin{align}
\nonumber
    Z_0'MZ_0 & = R'(X_0'X_0)(X'X)^{-1}(X'X-X_0'X_0)R\\
\label{eq:z0mzk}    
    Z_0'MZ_k & = \begin{cases}
    R'(X_0'X_0)(X'X)^{-1}(X'X-X_k'X_k)R & \text{if } k \leq k_0\\
    R'(X'X-X_0'X_0)(X'X)^{-1}(X_k'X_k)R & \text{if } k > k_0
    \end{cases}
\end{align}
Without loss of generality, assume $k \leq k_0$. The second term of $\vert k_0 - k \vert A_{T}(k)$ from (\ref{eq:Ak}) is 
\begin{align}
\nonumber
    (Z_0'MZ_k) & (Z_k'MZ_k)^{-1/2}\Omega_k (Z_k'MZ_k)^{-1/2} (Z_k'MZ_0)\\
\label{eq:Ak_secondterm}
    & = \left[R'(X_0'X_0)(X'X)^{-1}(X'X-X_k'X_k)R \right] \left[(Z_k'MZ_k)^{1/2}\Omega_k^{-1}(Z_k'MZ_k)^{1/2}\right]^{-1}\\
\nonumber
    & \hspace{0.5cm} \times \left[R'(X'X-X_k'X_k)(X'X)^{-1}(X_0'X_0)R\right].
\end{align}
Define the following matrices.
\begin{align}
\nonumber
    F_k & := (X_k'X_k)^{-1}-(X'X)^{-1} =(X'X)^{-1}(X'X-X_k'X_k)(X_k'X_k)^{-1}\\
\label{eq:FkF0Bdefinition}    
    F_0 & := (X_0'X_0)^{-1}-(X'X)^{-1} =(X'X)^{-1}(X'X-X_0'X_0)(X_0'X_0)^{-1}\\
\nonumber \Omega_{X,k}& := \begin{bmatrix}
I_{(p-q)} & \bzero_{(p-q)\times q} \\
\bzero_{q\times(p-q)} &  \Omega_k
\end{bmatrix},\;\;\;    B := \Omega_{X,k}^{-1/2}F_k^{1/2}X_k'X_k.
\end{align}
Both $F_k$ and $F_0$ are positive definite matrices under Assumption \ref{A:multimodel}. Hence, each matrix can be decomposed into $F_k = \left(F_k^{1/2}\right)^{2}$ and $F_0 = \left(F_0^{1/2}\right)^{2}$ where $F_k^{1/2}$ and $F_0^{1/2}$ are nonsingular $(p\times p)$ matrices with $p =$ dim($x_t$). $\Omega_{X,k}$ is a $(p\times p)$ matrix where $I_{(p-q)}$ is an identity matrix with rank $(p-q)$ and zeros in non-diagonal block matrices such that $R'\Omega_{X,k}R = \Omega_k$. The projection matrix of $BR$ is $I_p - BR(R'B'BR)^{-1}R'B'$, which is positive semi-definite. If we multiply $R'(X_0'X_0)F_k^{1/2}\Omega_{X,k}^{1/2}$ to the left and its transpose to the right of the projection matrix, the following inequality is obtained:
\begin{align*}
    R' (X_0'X_0) & F_k^{1/2}\Omega_{X,k}F_k^{1/2}(X_0'X_0)R \\
    & \geq R'(X_0'X_0)F_k(X_k'X_k)R(R'B'BR)^{-1}R'(X_k'X_k)F_k(X_0'X_0)R \\
    & = R'(X_0'X_0)(X'X)^{-1}(X'X'-X_k'X_k)R(R'B'BR)^{-1}\\
    & \hspace{1cm} \times  R'(X'X-X_k'X_k)(X'X)^{-1}(X_0'X_0)R.
\end{align*}
From (\ref{eq:FkF0Bdefinition}), $R'B'BR = R'(X_k'X_k)F_k^{1/2}\Omega_{X,k}^{-1}F_k^{1/2}(X_k'X_k)R$ $= (Z_k'MZ_k)^{1/2}\Omega_k^{-1}(Z_k'MZ_k)^{1/2}$ and the right side of the inequality is equivalent to (\ref{eq:Ak_secondterm}). Therefore, it is sufficient to show that the right-side of inequality (\ref{eq:Ainequality}) is bounded away from zero for large $k_0 -k$,
\begin{equation}\label{eq:Ainequality}
    A_{T}(k) \geq \frac{1}{\vert k_0-k\vert} \left[(Z_0'MZ_0)^{1/2}\Omega_{k_0}(Z_0'MZ_0)^{1/2}- R'(X_0'X_0) F_k^{1/2}\Omega_{X,k}F_k^{1/2}(X_0'X_0)R \right].
\end{equation}
Also from (\ref{eq:FkF0Bdefinition}), $(Z_0'MZ_0)^{1/2}\Omega_{k_0}(Z_0'MZ_0)^{1/2} = R'(X_0'X_0)F_0^{1/2}\Omega_{X,k_0}F_0^{1/2}(X_0'X_0)R$. Thus,
\begin{align*}
    \vert k_0-k\vert A_{T}(k) & \geq R'(X_0'X_0)\left[F_0^{1/2}\Omega_{X,k_0}F_0^{1/2}-F_k^{1/2}\Omega_{X,k} F_k^{1/2} \right](X_0'X_0)R \\
    & = \vert k_0-k \vert R'\tilde{A}_{T}(k)R,
\end{align*}
where $\tilde{A}_{T}(k) := \frac{1}{\vert k_0-k \vert}(X_0'X_0) \left[F_0^{1/2}\Omega_{X,k_0}F_0^{1/2}-F_k^{1/2}\Omega_{X,k} F_k^{1/2} \right](X_0'X_0)$. Then we have
\begin{align*}
    \norm{\tilde{A}_{T}(k)^{-1}} & = \vert k_0 - k \vert \norm{(X_0'X_0)^{-1}\left[F_0^{1/2}\Omega_{X,k_0} F_0^{1/2}-F_k^{1/2}\Omega_{X,k} F_k^{1/2} \right]^{-1}(X_0'X_0)^{-1}}\\
    & \leq \frac{1}{\norm{X_0'X_0}}\frac{1}{\vert k_0-k \vert^{-1}\norm{F_0^{1/2}\Omega_{X,k_0}F_0^{1/2}-F_k^{1/2}\Omega_{X,k} F_k^{1/2}}}\frac{1}{\norm{X_0'X_0}}.
\end{align*}
Note that for a nonsingular, bounded $(p \times p)$ matrix $S$, the norm does not change by multiplying $S$ on the left and $S^{-1}$ on the right of a matrix: $\norm{\Omega_{X,k}} = \norm{S\Omega_{X,k}S^{-1}}$. By assumption, $\Omega_{X,k} \geq \lambda_{\min}(\Omega_{X,k}) >0$ where $\lambda_{\min}$ denotes the minimum eigenvalue. Therefore, $(F_0^{1/2})^{-1}\Omega_{X,k_0} F_0^{1/2} = \Omega_{X,k_0}+o_p(1) \geq \lambda_{\min}(\Omega_{X,k_0}) + o_p(1)$. By subtracting and adding the matrix $F_k(F_0^{1/2})^{-1}\Omega_{X,k_0} F_0^{1/2}$ to the denominator of the second term, the following inequality holds: 
\begin{align*}
    & \hspace{0.5cm} \norm{F_0^{1/2}\Omega_{X,k_0} F_0^{1/2}-F_k^{1/2}\Omega_{X,k} F_k^{1/2}} \\
    & = \norm{(F_0-F_k)(F_0^{1/2})^{-1}\Omega_{X,k_0}F_0^{1/2}- F_k\left\{(F_k^{1/2})^{-1}\Omega_{X,k} F_k^{1/2}-(F_0^{1/2})^{-1}\Omega_{X,k_0} F_0^{1/2}\right\}}\\
    & \geq \left\vert \norm{(F_0-F_k)(F_0^{1/2})^{-1}\Omega_{X,k_0} F_0^{1/2}} - \norm{F_k\left\{(F_k^{1/2})^{-1}\Omega_{X,k}  F_k^{1/2}-(F_0^{1/2})^{-1}\Omega_{X,k_0}  F_0^{1/2}\right\}} \right\vert \\
     & = \left\vert \norm{(F_0-F_k)\Omega_{X,k_0}} - \norm{F_k\left(\Omega_{X,k}-\Omega_{X,k_0}\right)} \right\vert + o_p(1), 
\end{align*}
where the inequality is from the inverse triangular inequality. Let $\tilde\lambda$ be the minimum value of $\lambda_{\min}(\Omega_{X,k})$ and $\lambda_{\min}(\Omega_{X,k_0})$. From Assumption \ref{A:Omegak}, we have $\vert k_0-k\vert^{-1}\norm{\Omega_k-\Omega_{k_0}} \leq b/T$. Hence,
\begin{align*}
    & \hspace{0.5cm}  \vert k_0-k\vert^{-1}\norm{F_0^{1/2} \Omega_{X,k_0}  F_0^{1/2}-F_k^{1/2}\Omega_{X,k} F_k^{1/2}} \\
    & \geq \left\vert (k_0-k)^{-1} \norm{(F_0-F_k)\tilde\lambda} - b/T\norm{F_k}  \right\vert + \vert k_0-k\vert^{-1}o_p(1)\\
    & \geq \tilde{\lambda}\norm{(k_0-k)^{-1}(F_0-F_k)} + o_p(1).
\end{align*}
Let $X_{\Delta} := \text{sgn}(k_0-k)(X_k -X_0)$. By rearranging terms similar to (\ref{eq:FkF0Bdefinition}), $F_0-F_k = (X_0'X_0)^{-1} (X_{\Delta}'X_{\Delta})(X_k'X_k)^{-1}$ so that 
\begin{align*}
    \norm{\tilde{A}_{T}(k)^{-1}} & \leq  \frac{1}{\norm{X_0'X_0}^2}\frac{1}{\tilde{\lambda}\vert k_0-k\vert^{-1}\norm{F_0-F_k}}\\
    & \leq  \frac{1}{\tilde{\lambda}\norm{X_0'X_0}^2\norm{(X_0'X_0)^{-1}(k_0-k)^{-1}X_{\Delta}'X_{\Delta}(X_k'X_k)^{-1}}}.
\end{align*}
From Assumptions \ref{A:multimodel} and \ref{A:Omegak}, the right-side of the inequality is bounded:\\ $\tilde\lambda \norm{T^{-1}X_0'X_0}^2\norm{T^2(X_0'X_0)^{-1}(X_k'X_k)^{-1}} < M$, for some $M < \infty$. In addition, the minimum eigenvalue of $(k_0-k)^{-1}(X_{\Delta}'X_{\Delta})$ is bounded away from zero with large probability so that
$1/\norm{(k_0-k)^{-1}X_{\Delta}'X_{\Delta}}$ is bounded with large probability for all large $k_0-k$. Thus $\norm{\tilde{A}_{T}(k)^{-1}}$ is bounded with large probability for all large $ k_0-k$. This implies that the minimum eigenvalue of $\tilde{A}_{T}(k)$ is bounded away from zero for all large $k_0-k$ and this is also true for $A_{T}(k) = R'\tilde{A}_{T}(k)R$ because $R$ has full column rank. 
\end{proof}

\bigskip
\noindent For the proof of Lemma \ref{Lem:consistent} we use Proposition \ref{Prop:HajekRenyi_mds} and Lemma \ref{Lem:HajekRenyi_mix}.

\begin{proposition}\label{Prop:HajekRenyi_mds}
Let $\varepsilon_1,\varepsilon_2,\ldots,$ be a sequence of martingale differences with $E[\varepsilon_t^2]=\sigma^2$ and $\{c_k\}$ be a decreasing positive sequence of constants. The H\'{a}jek-R\'{e}nyi inequality takes the following form.
\begin{equation*}
    P\left(\max_{m\leq k \leq T}c_k \left\vert\sum_{t=1}^k \varepsilon_t \right\vert > \alpha \right) \leq \frac{\sigma^2}{\alpha^2}\left(m c_m^2 + \sum_{t=m+1}^{T}c_t^2 \right).
\end{equation*}
\end{proposition}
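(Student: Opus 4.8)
The plan is to follow the classical first-passage decomposition that underlies Kolmogorov's and Doob's maximal inequalities, adapted here to accommodate the weighting sequence $\{c_k\}$. First I would set $S_k := \sum_{t=1}^k \varepsilon_t$ and record the structural facts used throughout: since $\{\varepsilon_t\}$ are martingale differences, $\{S_k\}$ is a martingale with respect to $\mathcal{F}_k := \sigma(\varepsilon_1,\ldots,\varepsilon_k)$; the cross terms vanish so that $E[S_k^2] = \sum_{t=1}^k E\varepsilon_t^2 = k\sigma^2$; and $\{S_k^2\}$ is a submartingale, i.e. $E[S_j^2 \mid \mathcal{F}_k] \geq S_k^2$ whenever $j \geq k$.

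Next I would decompose the maximal event according to the first index at which the threshold is exceeded. Define the disjoint events
\[
    A_k := \{c_j|S_j| \leq \alpha \text{ for } m \leq j < k\} \cap \{c_k|S_k| > \alpha\}, \qquad m \leq k \leq T,
\]
so that $A_k \in \mathcal{F}_k$ and $\bigcup_{k=m}^T A_k = \{\max_{m\leq k \leq T} c_k|S_k| > \alpha\}$. On $A_k$ we have $c_k^2 S_k^2 > \alpha^2$, so Markov's inequality gives $P(A_k) \leq \alpha^{-2} c_k^2 E[S_k^2 \mathbf{1}_{A_k}]$, and summing over these disjoint events yields
\[
    P\Big(\max_{m\leq k \leq T} c_k|S_k| > \alpha\Big) \leq \frac{1}{\alpha^2}\sum_{k=m}^T c_k^2\, E[S_k^2 \mathbf{1}_{A_k}].
\]

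The heart of the argument, and the step I expect to be the main obstacle, is to bound this weighted sum by combining the monotonicity of $\{c_k\}$ with the submartingale property so as to reproduce exactly the right-hand side of the claim. I would use that $c_k$ is decreasing to write $c_k^2 = c_T^2 + \sum_{j=k}^{T-1}(c_j^2 - c_{j+1}^2)$ with nonnegative increments, substitute, and interchange the order of the resulting double summation to reduce everything to partial sums of the form $\sum_{k=m}^j E[S_k^2 \mathbf{1}_{A_k}]$. For each such partial sum, the inclusion $A_k \in \mathcal{F}_k$ and the submartingale inequality give $E[S_j^2 \mathbf{1}_{A_k}] \geq E[S_k^2 \mathbf{1}_{A_k}]$ for $k \leq j$, whence disjointness of the $A_k$ forces $\sum_{k=m}^j E[S_k^2 \mathbf{1}_{A_k}] \leq E[S_j^2] = j\sigma^2$.

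Finally I would collect the pieces: the bound becomes $\sigma^2\big(T c_T^2 + \sum_{j=m}^{T-1} j(c_j^2 - c_{j+1}^2)\big)$, and a routine Abel summation that telescopes the terms $j(c_j^2 - c_{j+1}^2)$ collapses this to $\sigma^2\big(m c_m^2 + \sum_{t=m+1}^T c_t^2\big)$, which after division by $\alpha^2$ is the stated inequality. The only delicate bookkeeping is the reindexing identity $T c_T^2 + \sum_{j=m}^{T-1} j(c_j^2 - c_{j+1}^2) = m c_m^2 + \sum_{t=m+1}^T c_t^2$, which I would verify by shifting the index in the second summation and recombining.
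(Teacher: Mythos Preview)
Your proof is correct: the first-passage decomposition, the submartingale bound $\sum_{k=m}^j E[S_k^2 \mathbf{1}_{A_k}] \leq E[S_j^2] = j\sigma^2$, and the Abel summation identity all go through as you describe. The paper, however, does not prove this proposition at all; it simply records it as a classical result, citing H\'{a}jek and R\'{e}nyi (1955) for the i.i.d.\ case and Birnbaum and Marshall (1961) for the martingale extension, and then invokes it in the proof of Lemma~\ref{Lem:HajekRenyi_mix}. So there is no in-paper argument to compare yours against --- you have supplied a self-contained proof where the paper defers to the literature.
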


\smallskip

\noindent \citet*{Hajek1955} proved the inequality assuming i.i.d. random variables, and was later generalized to martingales by \citet*{Birnbaum1961}. We use the generalized H\'{a}jek-R\'{e}nyi for martingale difference sequences to prove Lemma \ref{Lem:HajekRenyi_mix}, where $\{\varepsilon_t,\mathcal{F}_t\}$ are mixingale sequences under Assumption \ref{A:multimodel}.

\bigskip
\begin{lemma}\label{Lem:HajekRenyi_mix}
Under Assumption \ref{A:multimodel}, for every $\alpha >0$ and $m > 0$ there exists $C < \infty$ such that
\begin{equation*}
    P\left(\sup_{m\leq k\leq T} \frac{1}{\sqrt{k}}\left\vert \sum_{t=1}^k z_t \varepsilon_t \right\vert > \alpha \right) \leq \frac{C \ln T}{\alpha^2},
\end{equation*}
and thus, $\sup_k T^{-1/2}\norm{Z_k'\varepsilon} =  O_p\left(\sqrt{\ln T}\right)$.

\end{lemma}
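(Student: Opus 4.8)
The plan is to lift the martingale-difference Hájek--Rényi inequality of Proposition \ref{Prop:HajekRenyi_mds} to the mixingale setting by the standard telescoping decomposition of McLeish, applying the martingale inequality to each layer of the decomposition and then summing the resulting bounds over the layers. Throughout I work with one fixed coordinate of $z_t$, writing the scalar partial sum $\sum_{t=1}^k z_t\varepsilon_t$; the vector claim then follows by summing over the $q$ fixed coordinates, since $\norm{Z_k'\varepsilon}^2$ is a finite sum of squared scalar sums. Because $\varepsilon_t$ is independent of the entire regressor process (Assumption \ref{A:multimodel}$(vi)$), I enlarge the filtration to $\mathcal{H}_s := \mathcal{F}_s \vee \sigma(z_u : u \in \mathbb{Z})$; independence guarantees that $\{\varepsilon_t,\mathcal{H}_t\}$ remains an $L^r$-mixingale with the same constants $\{c_t\}$ and $\{\psi_j\}$, while every $z_t$ is now $\mathcal{H}_s$-measurable for all $s$.

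First I would write the telescoping decomposition $\varepsilon_t = \sum_{j=-\infty}^{\infty} W_{t,j}$ with $W_{t,j} := E[\varepsilon_t\mid\mathcal{H}_{t+j}] - E[\varepsilon_t\mid\mathcal{H}_{t+j-1}]$, where $L^2$-convergence is controlled by the mixingale bounds (parts (a)--(b) of Assumption \ref{A:multimodel}$(vi)$ force $E[\varepsilon_t\mid\mathcal{H}_{t-N}]\to 0$ and $E[\varepsilon_t\mid\mathcal{H}_{t+N}]\to\varepsilon_t$). The key observation is that for each fixed $j$ the sequence $\{z_t W_{t,j}\}_t$ is a martingale-difference sequence with respect to $\mathcal{G}_t := \mathcal{H}_{t+j}$: by the tower property $E[W_{t,j}\mid\mathcal{H}_{t+j-1}]=0$, and since $z_t$ is $\mathcal{H}_{t+j-1}$-measurable we get $E[z_t W_{t,j}\mid\mathcal{G}_{t-1}]=z_t\,E[W_{t,j}\mid\mathcal{H}_{t+j-1}]=0$. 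This is precisely where strict exogeneity enters. The mixingale bounds further give $\norm{W_{t,j}}_2\le 2c_t\psi_{|j|}$, and by independence $E[(z_t W_{t,j})^2]=E[z_t^2]\,E[W_{t,j}^2]\le M_0\psi_{|j|}^2$ uniformly in $t$, using $\max_t c_t<K$ and the moment bound $\sup_t E z_t^2<\infty$ implied by Assumption \ref{A:multimodel}$(v)$.

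Next I would allocate the level $\alpha$ across layers as $\alpha=\sum_{j}\alpha_j$ and apply a union bound, so that it suffices to control each $P(\sup_{m\le k\le T}k^{-1/2}|\sum_{t=1}^k z_t W_{t,j}|>\alpha_j)$. Applying Proposition \ref{Prop:HajekRenyi_mds} with the decreasing weights $a_k=k^{-1/2}$ and the uniform variance bound $M_0\psi_{|j|}^2$ yields, for each $j$,
\[
P\!\left(\sup_{m\le k\le T}\frac{1}{\sqrt k}\Big|\sum_{t=1}^k z_t W_{t,j}\Big|>\alpha_j\right)\le \frac{M_0\psi_{|j|}^2}{\alpha_j^2}\Big(m\,a_m^2+\sum_{t=m+1}^T a_t^2\Big)\le \frac{M_0\psi_{|j|}^2}{\alpha_j^2}\,(1+\ln T),
\]
since $m\,a_m^2=1$ and $\sum_{t=m+1}^T t^{-1}\le \ln T$. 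Choosing $\alpha_j\propto\psi_{|j|}^{2/3}$ makes the sum over $j$ equal to $\alpha^{-2}(1+\ln T)\,M_0\,(\sum_j\psi_{|j|}^{2/3})^{3}$, so the whole argument reduces to showing $\sum_j\psi_{|j|}^{2/3}<\infty$. The hardest---indeed the only nonmechanical---step is exactly this summability, which I would obtain from Assumption \ref{A:multimodel}$(vi)(d)$ ($\sum_j j^{1+\kappa}\psi_j<\infty$) by Hölder's inequality with exponents $3/2$ and $3$: writing $\psi_j^{2/3}=(j^{1+\kappa}\psi_j)^{2/3}\,j^{-\frac23(1+\kappa)}$ gives $\sum_j\psi_j^{2/3}\le(\sum_j j^{1+\kappa}\psi_j)^{2/3}(\sum_j j^{-2(1+\kappa)})^{1/3}<\infty$. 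This produces the bound $C\ln T/\alpha^2$ for the scalar sum; summing over the $q$ coordinates gives the stated inequality.

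Finally, the $O_p(\sqrt{\ln T})$ conclusion follows by setting $m=1$ and $\alpha=\lambda\sqrt{\ln T}$, which bounds the probability by $C/\lambda^2$, arbitrarily small for large $\lambda$. For the corollary on $\sup_k T^{-1/2}\norm{Z_k'\varepsilon}$ I would use $Z_k'\varepsilon=\sum_{t=1}^T z_t\varepsilon_t-\sum_{t=1}^k z_t\varepsilon_t$, bound $T^{-1/2}\norm{\sum_{t=1}^k z_t\varepsilon_t}\le k^{-1/2}\norm{\sum_{t=1}^k z_t\varepsilon_t}$ (as $k\le T$), and combine the just-proved $O_p(\sqrt{\ln T})$ bound with $T^{-1/2}\norm{\sum_{t=1}^T z_t\varepsilon_t}=O_p(1)$.
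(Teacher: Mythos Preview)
Your proof is correct and follows essentially the same strategy as the paper: McLeish's telescoping decomposition into martingale-difference layers, the H\'ajek--R\'enyi inequality applied to each layer, and a union bound with allocated levels. The only minor differences are that the paper decomposes $\xi_t = z_t\varepsilon_t$ directly as a mixingale (rather than decomposing $\varepsilon_t$ and then multiplying by $z_t$ via an enlarged filtration), and it allocates the layer weights as $b_j \propto j^{-1-\kappa}$ (invoking assumption $(vi)(d)$ via $\sum_j j^{2+2\kappa}\psi_j^2<\infty$) rather than your H\"older-optimal choice $\alpha_j\propto\psi_{|j|}^{2/3}$.
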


\begin{proof}
Denote $\xi_t = z_t\varepsilon_t$ and proceed. Let $\{\xi_t, \mathcal{F}_t\}$ be $(q\times 1)$ $L^r$-mixingales, $r = 4+\gamma$ for some $\gamma > 0$ satisfying Assumption \ref{A:multimodel}$(vi)$. Define $\xi_{jt} := E[\xi_t\vert \mathcal{F}_{t-j}]-E[\xi_t\vert \mathcal{F}_{t-j-1}]$. Then $\xi_t = \sum_{j=-\infty}^{\infty} \xi_{jt}$, and hence $\sum_{t=1}^k \xi_{t} = \sum_{j=-\infty}^{\infty}\sum_{t=1}^k \xi_{jt}$. Denote $\norm{\cdot}_{s}$ for the $L^s$-norm. For each $T>0$,
\begin{equation}\label{eq:mixingale_pf1}
    P\left(\sup_{m\leq k \leq T}\frac{1}{\sqrt{k}}\norm{\sum_{t=1}^k \xi_t}_{2} > \alpha \right) \leq P\left(\sum_{j=-\infty}^{\infty}\sup_{m\leq k\leq T}\frac{1}{\sqrt{k}}\norm{\sum_{t=1}^k \xi_{jt}}_{2} > \alpha \right).
\end{equation}
For each $j$, $\{\xi_{jt},\mathcal{F}_{t-j}\}$ forms a sequence of martingale difference and the generalized H\'{a}jek-R\'{e}nyi inequality (Proposition \ref{Prop:HajekRenyi_mds}) holds for this sequence. Let $b_j > 0$ for all $j$ and $\sum_{j=-\infty}^{\infty} b_j = 1$. The right-side of (\ref{eq:mixingale_pf1}) is bounded by
\begin{align*}
    \sum_{j=-\infty}^{\infty} & P\left(\sup_{m\leq k\leq T}\frac{1}{\sqrt{k}}\norm{\sum_{t=1}^k \xi_{jt}}_{2} > b_j\alpha \right) \\
    & \leq \frac{1}{\alpha^2} \sum_{j=-\infty}^{\infty}\frac{1}{b_j^2}\left( m^{-1} \sum_{i=1}^{m}E\norm{\xi_{ji}}_{2}^2 + \sum_{i=m+1}^{T}i^{-1} E\norm{\xi_{ji}}_{2}^2\right).
\end{align*}
Note that $\norm{\xi_{jt}}_2 \leq \norm{\xi_{jt}}_r$ by Liapounov's inequality. By definition, for $j \geq 0$, we have $\norm{\xi_{jt}}_r \leq \norm{E[\xi_{t}\vert \mathcal{F}_{t-j}]}_{r} + \norm{E[\xi_{t}\vert \mathcal{F}_{t-j-1}]}_{r}$ and for $j < 0$, $\norm{\xi_{jt}}_{r} \leq \norm{\xi_{t}-E[\xi_t\vert \mathcal{F}_{t-j-1}]}_r + \norm{\xi_t - E[\xi_t\vert \mathcal{F}_{t-j}]}_{r}$. Hence, from the definition of a mixingale,  $E\norm{\xi_{jt}}_{r}^{2} \leq 4c_{t}^2\psi_{\vert j\vert}^2$ and with Assumption \ref{A:multimodel}$(vi)(c)$, this implies $E\norm{\xi_{ji}}_2^2 \leq 4c_i^2 \psi_{\vert j\vert}^2 \leq 4K^2 \psi_{\vert j\vert}^2$. Then the right-side of (\ref{eq:mixingale_pf1}) is bounded by
\begin{equation}\label{eq:lnTbound_mix}
    \frac{1}{\alpha^2} \sum_{j=-\infty}^{\infty} 4b_j^{-2}K^2\psi_{\vert j\vert}^2\left( 1 + \sum_{i=m+1}^{T}i^{-1}\right) \leq \frac{1}{\alpha^2} \sum_{j=-\infty}^{\infty} 4b_j^{-2}K^2\psi_{\vert j\vert}^2\left( 1 + \ln T \right).
\end{equation}
We can choose appropriate $\{b_j\}$ so that $\sum_{j} b_j^{-2}\psi_{\vert j\vert}^2$ are bounded. From lemma A.6 of \citet*{Bai1998a}, let $\nu_0 = 1$ and $\nu_j = j^{-1-\kappa}$ for $j\geq 1$, where $\kappa>0$ is given in Assumption \ref{A:multimodel}$(vi)(d)$. Let $b_j = \nu_j/\left(1+2\sum_{i=1}^{\infty}\nu_i \right)$ and $b_{-j} = b_j$ for all $j \geq 0$. Then $\sum_{j=-\infty}^{\infty} b_j = 1$. By Assumption \ref{A:multimodel}$(vi)(d)$, we have
\begin{equation*}
    \sum_{j=-\infty}^{\infty} b_{j}^{-2}\psi_{\vert j \vert}^2 = \left(\psi_0^2 + 2\sum_{j=1}^{\infty}j^{2+2\kappa}\psi_{j}^2\right)\left(1+2\sum_{j=1}^{\infty}j^{-2-2\kappa}\right) < \infty.
\end{equation*}
Hence, the right-side inequality of (\ref{eq:lnTbound_mix}) is bounded by $\frac{C \ln T}{\alpha^2}$ for some $C>0$ and we obtain the result of Lemma \ref{Lem:HajekRenyi_mix}.
\end{proof}

\bigskip
\noindent\textbf{Proof of Lemma \ref{Lem:consistent}}
\begin{proof}
We use the expression (\ref{eq:QinGandH}); the estimator $\hat{k}$ must satisfy $Q_{T}(\hat{k})^2 \geq Q_{T}(k_0)^2$ which is equivalent to $H_{T}(\hat{k}) \geq \vert k_0-\hat{k}\vert G_{T}(\hat{k})$. Therefore we have
\begin{align}
\nonumber
    P\left(\vert \hat{\rho}-\rho_0 \vert > \eta \right) & = P\left(\vert \hat{k}-k_0 \vert > T\eta \right)\\
\nonumber
    & \leq P\left(\sup_{\vert k-k_0\vert > T\eta} \vert H_{T}(k) \vert \geq \inf_{\vert k-k_0\vert > T\eta}\vert k_0 -k\vert G_{T}(k)\right)\\
\nonumber    
    & \leq P\left(\sup_{\vert k-k_0\vert > T\eta} \vert H_{T}(k) \vert \geq T\eta \inf_{\vert k-k_0\vert > T\eta} G_{T}(k)\right)\\
\label{eq:Hk_inequality}
    & \leq P\left(\tilde{G}^{-1}\sup_{p\leq k \leq T-p} T^{-1}\vert H_T(k) \vert \geq \eta \right),
\end{align}
where $\tilde{G} := \inf_{\vert k- k_0\vert > T\eta}G_{T}(k)$, which is positive and bounded away from zero by Lemma \ref{Lem:Gkinversebound} and the restriction $p \leq k \leq T-p$ is imposed to guarantee existence of $H_T(k)$. Thus, the consistency follows by showing that $T^{-1}\sup_{p\leq k \leq T-p} \vert H_T(k) \vert = o_{p}(1)$, where 
\begin{align}
\nonumber
    T^{-1}\vert H_T(k) \vert & \leq \left\vert T^{-1} \varepsilon'MZ_k(Z_k'MZ_k)^{-1/2}\Omega_k(Z_k'MZ_k)^{-1/2}Z_k'M\varepsilon \right\vert \\
\nonumber
    & \hspace{0.7cm} + \left\vert T^{-1} \varepsilon'MZ_0(Z_0'MZ_0)^{-1/2}\Omega_{k_0}(Z_0'MZ_0)^{-1/2}Z_0'M\varepsilon \right\vert\\
\label{eq:HkdivideT_inequality}
    & \hspace{0.7cm} + 2\left\vert T^{-1}\delta_T'(Z_0'MZ_k)(Z_k'MZ_k)^{-1/2}\Omega_k (Z_k'MZ_k)^{-1/2}Z_k'M\varepsilon \right\vert \\
\nonumber
    & \hspace{0.7cm} + 2\left\vert T^{-1}\delta_T'(Z_0'MZ_0)^{1/2}\Omega_{k_0} (Z_0'MZ_0)^{-1/2}Z_0'M\varepsilon \right\vert.
\end{align}
Lemma \ref{Lem:HajekRenyi_mix} implies that $\sup_k \norm{T^{-1/2} Z_k'M\varepsilon} = O_p\left(\sqrt{\ln T}\right)$. We use (\ref{eq:LIL}) to verify uniform convergence for all $k$.
\begin{equation}\label{eq:LIL}
    \sup_{p \leq k \leq T-p} \norm{(Z_k'MZ_k)^{-1/2}Z_k'M\varepsilon} = O_{p}\left(\sqrt{\ln T}\right).
\end{equation}
We show that the third and fourth terms of (\ref{eq:HkdivideT_inequality}) are $O_p\left(T^{-1/2}\norm{\delta_T}\ln T\right)$ and $O_p\left(T^{-1/2}\norm{\delta_T}\right)$, respectively. Denote $D_T := T^{-1/2}(Z_0'MZ_k)(Z_k'MZ_k)^{-1/2}$. From (\ref{eq:zkmzk}) and (\ref{eq:z0mzk}), $Z_0'MZ_k \leq Z_k'MZ_k$ for all $k$, and thus
\begin{equation*}
    \sup_{p \leq k \leq T-p}D_T'D_T \leq \sup_{p \leq k \leq T-p}T^{-1}Z_k'MZ_k = O_{p}(1),
\end{equation*}
then the third term of (\ref{eq:HkdivideT_inequality}) is bounded by 
\begin{equation*}
     2\norm{ T^{-1/2}\delta_T'(T^{-1}Z_k'MZ_k)^{1/2}\Omega_{k}}\norm{ (Z_k'MZ_k)^{-1/2}Z_k'M\varepsilon} = \norm{\delta_T}O_p\left(T^{-1/2}\sqrt{\ln T}\right).
\end{equation*}
For the true break date $k_0$, $\left\vert(Z_0'MZ_0)^{-1/2}Z_0'M\varepsilon \right\vert = O_p(1)$ under our regularity conditions. Hence, the fourth term of (\ref{eq:HkdivideT_inequality}) has order
\begin{equation*}
   \left\vert T^{-1}\delta_T'(Z_0'MZ_0)^{1/2}\Omega_{k_0}(Z_0'MZ_0)^{-1/2}Z_0'M\varepsilon \right\vert = \norm{\delta_T} O_p\left(T^{-1/2}\right).
\end{equation*}
From (\ref{eq:LIL}) and boundedness of $\Omega_k$, we have $\sup_k \norm{\Omega_k^{1/2}(Z_k'MZ_k)^{-1/2}Z_k'M\varepsilon} = O_p\left(\sqrt{\ln T}\right)$. Then the first and second terms of (\ref{eq:HkdivideT_inequality}) are bounded as below, respectively. 
\begin{gather*}
    \sup_k T^{-1} \norm{\Omega_k^{1/2}(Z_k'MZ_k)^{-1/2}Z_k'M\varepsilon}^2  = O_p\left(T^{-1}\ln T \right) \\
    T^{-1}\norm{\Omega_{k_0}^{1/2}(Z_0'MZ_0)^{-1/2}Z_0'M\varepsilon}^2 = O_p(T^{-1}).
\end{gather*}
By combining all four terms, $T^{-1}\sup_{p \leq k \leq T-p}\vert H_T(k) \vert = o_p(1)$. Hence, probability (\ref{eq:Hk_inequality}) is negligible for large $T$. 
\end{proof}

\bigskip

\noindent\textbf{Proof of Corollary \ref{Cor:delta_asym}}
\begin{proof}
Let $\hat{Z}_0$ denote $Z_k$ when $k$ is replaced by $\hat{k}$. Then the LS estimator of $\hat{\delta}(\hat{\rho})$ is obtained by regressing $MY$ on $M\hat{Z}_0$. By multiplying $M$ to model (\ref{eq:DGPmulti}) can be rewritten as $MY = M\hat{Z}_0 \delta_T + M\varepsilon^{*}$, where $\varepsilon^{*} = \varepsilon + (Z_0-\hat{Z}_0)\delta_T$. Then,
\begin{align*}
    \sqrt{T}\left(\hat{\delta}(\hat{\rho})-\delta_T\right) & = (T^{-1}\hat{Z}_0'M\hat{Z}_0)^{-1}T^{-1/2}\hat{Z}_0'M\varepsilon^{*} \\
    & = (T^{-1}\hat{Z}_0'M\hat{Z}_0)^{-1}\left(T^{-1/2}\hat{Z}_0'M\varepsilon + T^{-1/2}\hat{Z}_0'M(Z_0-\hat{Z}_0)\delta_T\right).
\end{align*}
We show that the right side converges in probability to the same limit as when $\hat{Z}_0$ is replaced by $Z_0$. First, we show that $\plim T^{-1/2}\hat{Z}_0'M(Z_0-\hat{Z}_0)\delta_T =0$. Without loss of generality, consider $k > k_0$.
\begin{align*}
    \norm{T^{-1/2}\hat{Z}_0'M(Z_0-\hat{Z}_0)\delta_T} & \leq T^{-1/2} \norm{\hat{Z}_0'(Z_0-\hat{Z}_0)-\hat{Z}_0'X(X'X)^{-1}X'(Z_0-\hat{Z}_0)} \norm{\delta_T} \\
    & \leq \frac{1}{\sqrt{T}\norm{\delta_T}} \norm{\sum_{t=k_0+1}^{\hat{k}}z_tz_t'}\norm{\delta_T}^2\\
    & \hspace{0.7cm} + \norm{\left(\sum_{t=\hat{k}+1}^{T}z_t x_t'\right)\left(\sum_{t=1}^{T}x_t x_t'\right)^{-1}} \frac{1}{\sqrt{T}\norm{\delta_T}} \norm{\sum_{t=k_0+1}^{\hat{k}}x_tz_t'}\norm{\delta_T}^2 \\
    & = \frac{1}{\sqrt{T}\norm{\delta_T}} O_p(1) = o_p(1).
\end{align*}
The last equality is because the sum has  $\hat{k}-k_0 = O_p\left(\norm{\delta_T}^{-2}\right)$ terms, hence, $\norm{\sum_{t=\hat{k}+1}^{k_0}z_t z_t'}$ $\times\norm{\delta_T}^2= O_p(1)$. Also,
\begin{align*}
    T^{-1}\norm{\hat{Z}_0'M\hat{Z}_0-Z_0'MZ_0} & \leq T^{-1}\norm{\hat{Z}_0'\hat{Z}_0-Z_0'Z_0}+T^{-1}\norm{(\hat{Z}_0-Z_0)'X(X'X)^{-1}X'\hat{Z}_0} \\
    & \hspace{0.7cm} + T^{-1}\norm{Z_0'X(X'X)^{-1}X'(\hat{Z}_0-Z_0)} \\
    & \leq \frac{1}{T\norm{\delta_T}^2}\norm{\sum_{t=k_0+1}^{\hat{k}}z_t z_t'}\norm{\delta_T}^2 + \frac{2}{T\norm{\delta_T}^2}\norm{\sum_{t=k_0+1}^{\hat{k}}x_t z_t'}\norm{\delta_T}^2 O_p(1) \\
    & = \frac{1}{T\norm{\delta_T}^2} O_p(1) = o_p(1).
\end{align*}
Thus, $\sqrt{T}\left(\hat{\delta}(\hat{\rho})-\delta_T\right) = (T^{-1}Z_0'MZ_0)^{-1}T^{-1/2}Z_0'M\varepsilon+ o_p(1)$, and the normality follows from the central limit theorem.
\end{proof}

\bigskip

\noindent \textbf{Proof of Theorem \ref{Thm:infillmult_dfixed}}

\begin{proof}
When $\delta_h = d_0\sqrt{h}$, the break point estimator $\norm{\delta_h}^2(\hat{k}- k_0) = \norm{d_0}^2(\hat{\rho}-\rho_0) =O_p(1)$ has values in the interval $(-\rho_0\norm{d_0}^2, (1-\rho_0)\norm{d_0}^2)$. Therefore, we only need to examine the behavior of the objective function $Q_T(k)^2$ for those $k$ in the neighborhood of $k_0$ such that $k = \left[k_0 + s\norm{d_0\sqrt{h}}^{-2}\right]$, with 
$s \in (-\rho_0\norm{d_0}^2, (1-\rho_0)\norm{d_0}^2)$. Then for any fixed $s$, if $h \rightarrow 0$ then $T \rightarrow \infty$ with $k/T \rightarrow \rho = \rho_0+ u$, and $T-k \rightarrow \infty$ with $(T-k)/T \rightarrow 1-\rho = 1-\rho_0-u$, where $u = s\norm{d_0}^{-2} \in (-\rho_0, 1-\rho_0)$. From the objective function (\ref{eq:objfn_Tscale}), we have
\begin{equation*}
    (T^{-1}Z_k'MZ_k)^{1/2}\sqrt{T}\hat{\delta}_k  = (T^{-1}Z_k'MZ_k)^{-1/2}(T^{-1}Z_k'MZ_0)d_0 + (T^{-1}Z_k'MZ_k)^{-1/2}T^{-1/2}Z_k'M\varepsilon.
\end{equation*}
Consider each of the terms as $h \rightarrow 0$, which is equivalent to $T \rightarrow \infty$.
\begin{align*}
   T^{-1}Z_k'MZ_0 & = T^{-1}\sum_{t=\max\{k,k_0\}+1}^T z_tz_t' - \left(T^{-1}\sum_{t=k+1}^T R'x_t x_t'\right)(T^{-1}X'X)^{-1}\left(T^{-1}\sum_{t=k_0+1}^T x_t x_t'R\right) \\
    & \longrightarrow \left(1-\max\{\rho,\rho_0\}\right)\Sigma_z - (1-\rho)(1-\rho_0)R'\Sigma_{x}\Sigma_{x}^{-1}\Sigma_{x}R \\
    & = \left(\min\{\rho,\rho_0\}-\rho\cdot\rho_0\right)\Sigma_z \\
    T^{-1}Z_k'MZ_k & = T^{-1}\sum_{t=k+1}^T z_tz_t' - \left(T^{-1}\sum_{t=k+1}^T R'x_t x_t'\right)(T^{-1}X'X)^{-1}\left(T^{-1}\sum_{t=k+1}^T x_t x_t'R\right) \\
    & \longrightarrow \rho(1-\rho)\Sigma_z\\
   T^{-1/2}Z_k'M\varepsilon & = T^{-1/2}\sum_{t=k+1}^T z_t \varepsilon_t - (T^{-1}Z_k'X)(T^{-1}X'X)^{-1}\left(T^{-1/2}\sum_{t=1}^T x_t \varepsilon_t \right) \\
   & \Rightarrow B_1(\rho)-\rho B_1(1).
\end{align*}
By assumption, $\Omega_k \overset{p}{\rightarrow} \bar{\Omega}(\rho)$ as $k/T \rightarrow \rho$. This implies that for a fixed $d_0$, the objective function $Q_T(k)^2$ weakly converges as follows. For $\rho \leq \rho_0$,
\begin{multline*}
    Q_T(k)^2 \Rightarrow \frac{1}{\rho(1-\rho)}\left[ B_1(\rho)-\rho B_1(1)-\rho(1-\rho_0)\Sigma_z d_0 \right]'\\
    \times \Sigma_{z}^{-1/2}\bar{\Omega}(\rho)\Sigma_{z}^{-1/2}
    \left[B_1(\rho)-\rho B_1(1)- \rho(1-\rho_0)\Sigma_z d_0\right],
\end{multline*}
and for $\rho > \rho_0$,
\begin{multline*}
    Q_T(k)^2 \Rightarrow  \frac{1}{\rho(1-\rho)}\left[ B_1(\rho)- \rho B_1(1)-\rho_0(1-\rho)\Sigma_z d_0 \right]'\\
    \times \Sigma_{z}^{-1/2}\bar{\Omega}(\rho)\Sigma_{z}^{-1/2}\left[ B_1(\rho)- \rho B_1(1)-\rho_0(1-\rho)\Sigma_z d_0 \right].
\end{multline*}
By continuous mapping theorem, the the in-fill asymptotic distribution of $T\norm{\delta_h}^2\hat{\rho}$ is the argmax functional of the limit of $Q_T(k)^2$, stated in Theorem \ref{Thm:infillmult_dfixed}.
\end{proof}

\bigskip

\noindent \textbf{Proof of Theorem \ref{Thm:infillmult_dlarge}}

\begin{proof}
We omit the proof of consistency of the break point estimator $\hat{\rho} \rightarrow \rho_0$, because it follows the same procedure as the proof of Theorem \ref{Thm:rateofconvergence}. Given the rate of convergence $\hat{\rho}-\rho_0 = O_p\left(T^{-1}\lambda_h^{-2} \right)$, we only need to examine the behavior of $Q_T(k)^2-Q_T(k_0)^2$ for those $k$ in the neighborhood of $k_0$ such that $k \in K(C)$, where $K(C) = \{k: \vert k-k_0 \vert \leq C\lambda_h^{-2}\}$ for some $C >0$. 

\bigskip

\begin{lemma}\label{lem:Qdiff_limitdist}
Consider the model (\ref{eq:DGPinfill_multi}) and the weight matrix $\Omega_k$ that satisfies Assumption \ref{A:Omegak}. For the break magntiude $\delta_h = d_0 \lambda_h$ that satisfies Assumption \ref{A:infillmult_delta}$(ii)$, 
\begin{multline*}
    Q_T(k)^2 - Q_T(k_0)^2 = -\lambda_h^2 d_0'Z_{\Delta}'Z_{\Delta}\Omega_{k_0}d_0 -\lambda_h^2 d_0'(Z_0'MZ_0)(\Omega_{k_0}-\Omega_k)d_0 \\
    + 2\lambda_h d_0'\Omega_{k_0} Z_{\Delta}'\varepsilon \, \text{sgn}(k_0 -k) + o_p(1)
\end{multline*}
where $Z_{\Delta}:= \text{sgn}(k_0-k)(Z_k-Z_0)$ and $o_p(1)$ is uniform on $K(C)$.
\end{lemma}

\smallskip

\noindent Because $\delta_h = d_0 \lambda_h$, for any constant $C$ of $K(C)$, we consider the limiting process of $Q_T(k)^2-Q_T(k_0)^2$ for $k = \left[k_0+ \nu \lambda_h^{-2}\right]$ and $\nu \in [-C, C]$. Consider $\nu \leq 0$ (i.e., $\rho \leq \rho_0$). From lemma \ref{lem:Qdiff_limitdist},
\begin{align*}
    Q_T(k)^2-Q_T(k_0)^2 & = -d_0'\left( \lambda_h^{2} \sum_{t=k+1}^{k_0}z_t z_t'\right)\Omega_{k_0} d_0 - d_0'(T^{-1}Z_0'MZ_0)\lambda_h^2 T(\Omega_{k_0}-\Omega_k)d_0 \\
    & \hspace{1cm}+ 2 d_0'\Omega_{k_0}\left( \lambda_h\sum_{t=k+1}^{k_0}z_t \varepsilon_t\right)+ o_p(1).
\end{align*}
For $k_0-k = \left[-\nu \lambda_h^{-2}\right]$, the partial sum in the first term converges as follows:
\begin{equation*}
    \lambda_h^{2}\sum_{t=k+1}^{k_0}z_t z_t' \overset{p}{\longrightarrow} \vert \nu \vert \Sigma_{z}.
\end{equation*}
Consider the second term; from the proof of Theorem \ref{Thm:infillmult_dfixed}, $T^{-1}Z_0'MZ_0 \overset{p}{\rightarrow} \rho_0(1-\rho_0)\Sigma_z$ and 
\begin{equation*}
    \lambda_h^2 T (\Omega_{k_0}-\Omega_k) = \frac{-\nu (\Omega_{k_0}-\Omega_k)}{-\nu \lambda_h^{-2}T^{-1}} = \frac{-\nu (\Omega_{k_0}-\Omega_k)}{\rho_0-\rho},
\end{equation*}
where $\Omega_k =\bar{\Omega}(\rho) + o_p(1)$ for fixed $\rho$. From Assumption \ref{A:Omegak}, $\bar{\Omega}(\rho)$ is a differentiable function of $\rho$ element-wise and by the mean value theorem, $\bar{\Omega}(\rho_0)-\bar{\Omega}(\rho) = \left.\frac{\partial\bar{\Omega}(\rho)}{\partial \rho}\right\vert_{\tilde{\rho}}(\rho_0-\rho)$ for some $\tilde{\rho} \in (\rho,\rho_0)$. Thus, the equation converges to $-\nu \left.\frac{\partial\bar{\Omega}(\rho)}{\partial \rho}\right\vert_{\rho_0} \equiv -\nu \nabla\bar{\Omega}_0$, under the break magntiude \ref{A:infillmult_delta}$(ii)$.

The third term has partial sum of $z_t \varepsilon_t$, which weakly converges to a Brownian motion process $B_1(-\nu)$ on $[0,\infty)$ that has variance $\vert \nu \vert \Xi$.
\begin{equation*}
    \lambda_h \sum_{t=k+1}^{k_0}z_t \varepsilon_t \Rightarrow B_1(-\nu).
\end{equation*}
Therefore,
\begin{align*}
    Q_T\left(\left[k_0+\nu\lambda_h^{-2}\right]\right)^2 & -Q_T(k_0)^2 \\
    & \Rightarrow -\vert \nu\vert d_0'\Sigma_z \bar{\Omega}_0 d_0 +\nu\rho_0(1-\rho_0) d_0'\Sigma_z \nabla \bar{\Omega}_0 d_0 + 2 d_0'\bar{\Omega}_0 B_1(-\nu). 
\end{align*}
Let $W_1(\cdot)$ and $W_2(\cdot)$ be an Wiener processes that are independent of each other on $[0,\infty)$. Define $A_{\nu} := \bar{\Omega}_0 -\text{sgn}(\nu) \rho_0(1-\rho_0) \nabla \bar{\Omega}_0$.
\begin{equation*}
    \wtilde{G}(\nu) := \begin{cases}
    -\frac{\vert \nu \vert}{2}(d_0'\Sigma_z A_{\nu} d_0) + (d_0'\bar{\Omega}_0\Xi\bar{\Omega}_0d_0)^{1/2}W_1(-\nu) & \text{ if } \nu \leq 0\\
    -\frac{\vert \nu \vert}{2}(d_0'\Sigma_z A_{\nu} d_0) + (d_0'\bar{\Omega}_0\Xi\bar{\Omega}_0 d_0)^{1/2}W_2(\nu) & \text{ if } \nu > 0.
    \end{cases}
\end{equation*}
From the continuous mapping theorem, the in-fill asymptotic distribution of the break point estimator is $\lambda_h^2(\hat{k}-k_0) \Rightarrow \argmax_{\nu} \wtilde{G}(\nu)$. Denote $A_\nu$ in terms of $u$ so that $A_u = \bar{\Omega}_0 -\text{sgn}(u) \rho_0(1-\rho_0) \nabla \bar{\Omega}_0$. Let $\nu=c u$ where $c = (d_0'\bar{\Omega}_0\Xi\bar{\Omega}_0 d_0)/(d_0'\Sigma_z A_u d_0)^2$ and $u \in (-\infty,\infty)$. For $\nu\leq 0$,
\begin{align*}
    \argmax_{\nu \in (-\infty,0]} \wtilde{G}(\nu) & = \argmax_{c u \in (-\infty,0]} -\frac{\vert u \vert}{2}c(d_0'\Sigma_z A_u d_0) + c^{1/2}(d_0'\bar{\Omega}_0\Xi\bar{\Omega}_0 d_0)^{1/2}W_1(-u) \\
    & = \argmax_{cu \in (-\infty,0]} (d_0'\Sigma_z A_u d_0)^{-1} \left\{ W_1(-u)-\frac{\vert u \vert}{2} \right\} \\
    & = c \argmax_{u \in (-\infty,0]} (d_0'\Sigma_z A_u d_0)^{-1}\left\{ W_1(-u)-\frac{\vert u \vert}{2} \right\},
\end{align*}
where the second equality is from
\begin{equation*}
    c(d_0'\Sigma_z A_u d_0) = c^{1/2}(d_0'\bar{\Omega}_0\Xi\bar{\Omega}_0 d_0)^{1/2}= (d_0'\bar{\Omega}_0\Xi\bar{\Omega}_0 d_0)/(d_0'\Sigma_z A_u d_0),
\end{equation*}
and the numerator does not depend on $u$. For $\nu > 0$ we have $\argmax_{\nu \in (0,\infty)} \wtilde{G}(\nu) =  c\argmax_{u \in (0,\infty)} (d_0'\Sigma_z A_u d_0)^{-1}\{W_2(u)-\vert u\vert/2\}$. Thus,
\begin{equation*}
    \lambda_h^{2}(\hat{k}-k_0) \Rightarrow \frac{(d_0'\bar{\Omega}_0\Xi\bar{\Omega}_0 d_0)}{(d_0'\Sigma_z A_u d_0)^2} \argmax_{u \in (-\infty,\infty)} (d_0'\Sigma_z A_u d_0)^{-1}\left\{ W(u)- \frac{\vert u \vert}{2} \right\},
\end{equation*}
for $W(\cdot)$ defined in Theorem \ref{Thm:infillmult_dlarge}.
\end{proof}

\bigskip

\noindent\textbf{Proof of Lemma \ref{lem:Qdiff_limitdist}}
\begin{proof}
Use equation (\ref{eq:QinGandH}), $Z_0 = Z_k - Z_{\Delta}\text{sgn}(k_0-k)$ and rearrange terms so that
\begin{align*}
    \vert k_0 & -k\vert G_T(k) \\
    & = \delta_h'(Z_0'MZ_0)^{1/2}\Omega_{k_0} (Z_0'MZ_0)^{1/2}\delta_h\\
    & \hspace{0.5cm} -\delta_h(Z_0'MZ_k)(Z_k'MZ_k)^{-1/2}\Omega_k (Z_k'MZ_k)^{-1/2}(Z_k'MZ_0)\delta_h\\
    & = \delta_h'\left[Z_{\Delta}'MZ_{\Delta}\Omega_{k_0} + (Z_k'MZ_k)(\Omega_{k_0}-\Omega_k)\right]\delta_h \\
    & \hspace{0.5cm} + \delta_h'\left[2\text{sgn}(k_0-k)(Z_{\Delta}'MZ_k)(Z_k'MZ_k)^{-1/2}\Omega_k(Z_k'MZ_k)^{1/2}  \right. \\
    & \hspace{1.5cm} \left. -2 \text{sgn}(k_0-k)Z_{\Delta}'MZ_k\Omega_{k_0}- (Z_{\Delta}'MZ_k)(Z_k'MZ_k)^{-1/2}\Omega_k(Z_k'MZ_k)^{-1/2}Z_k'MZ_{\Delta}\right] \delta_h\\
    & = \lambda_h^2 d_0'\left[Z_{\Delta}'MZ_{\Delta}\Omega_{k_0} + (Z_k'MZ_k)(\Omega_{k_0}-\Omega_k)\right]d_0 + o_p(1).
\end{align*}
The last equality is because $(Z_{\Delta}'MZ_k) = \norm{\delta_h}^{-2}O_p(1)$ uniformly on $K(C)$ and 
\begin{gather*}
    (Z_k'MZ_k)^{-1/2}\Omega_k(Z_k'MZ_k)^{1/2}-\Omega_{k_0} = \Omega_k-\Omega_{k_0} + o_p(1) \leq b\vert k_0-k\vert/T = o_p(1)\\
    (Z_{\Delta}'MZ_k)(Z_k'MZ_k)^{-1/2}\Omega_k(Z_k'MZ_k)^{-1/2}Z_k'MZ_{\Delta} = \norm{\delta_h}^{-4}T^{-1}O_p(1).
\end{gather*}
Similarly, from $Z_k = Z_0 - Z_{\Delta}\text{sgn}(k_0-k)$, we can show that $(Z_k'MZ_k)(\Omega_{k_0}-\Omega_k) = (Z_0'MZ_0)(\Omega_{k_0}-\Omega_k) + o_p(1)$ and
\begin{align*}
    \lambda_h^2d_0(Z_{\Delta}'MZ_{\Delta})\Omega_{k_0}d_0 & =   \lambda_h^2d_0(Z_{\Delta}'Z_{\Delta}-Z_{\Delta}'X(X'X)^{-1}X'Z_{\Delta})\Omega_{k_0}d_0 \\
    & = \lambda_h^2d_0Z_{\Delta}'Z_{\Delta}\Omega_{k_0}d_0 + o_p(1),
\end{align*}
because $Z_{\Delta}'X = \norm{\delta_h}^{-2}O_p(1)$ and $(X'X)^{-1} = O_p(T^{-1})$. Hence,
\begin{equation}\label{eq:G_limitdist}
    \vert k_0 -k\vert G_T(k) = \lambda_h^2 d_0'Z_{\Delta}'Z_{\Delta}\Omega_{k_0}d_0 + \lambda_h^2 d_0'(Z_0'MZ_0)(\Omega_{k_0}-\Omega_k)d_0 + o_p(1).
\end{equation}
Next, consider $H_T(k)$ in equation (\ref{eq:Hkdivided}).
\begin{equation*}
    H_T(k) = 2\lambda_h d_0'\Omega_{k_0} Z_{\Delta}'\varepsilon \, \text{sgn}(k_0-k) + T^{-1/2}\norm{\delta_h}\vert k_0-k\vert O_p(1) + O_p(1).
\end{equation*}
Because $\vert k_0-k\vert \leq C\norm{\delta_h}^{-2}$ on $K(C)$, the second term in the above equation is bounded by $C T^{-1/2}\norm{\delta_h}^{-1}O_p(1) = o_p(1)$. The last term $O_p(1)$ is $o_p(1)$ uniformly on $K(C)$, which can be verified by rearranging terms using $Z_0 = Z_k - Z_{\Delta} \text{sgn}(k_0 -k)$.
\begin{align*}
    \varepsilon'MZ_k &(Z_kMZ_k)^{-1/2}\Omega_k (Z_kMZ_k)^{-1/2}Z_k'M\varepsilon -     \varepsilon'MZ_0(Z_0MZ_0)^{-1/2}\Omega_{k_0} (Z_0MZ_0)^{-1/2}Z_0'M\varepsilon\\
    & =  \varepsilon'MZ_k\left[(Z_k'MZ_k)^{-1/2}\Omega_k (Z_k'MZ_k)^{-1/2} -(Z_0'MZ_0)^{-1/2}\Omega_{k_0} (Z_0'MZ_0)^{-1/2}\right]Z_k'M\varepsilon \\
     & \hspace{0.7cm} + \varepsilon'MZ_k (Z_0'MZ_0)^{-1/2}\Omega_{k_0} (Z_0'MZ_0)^{-1/2} Z_{\Delta}'M\varepsilon \, \text{sgn}(k_0-k)\\
    & \hspace{0.7cm} +\varepsilon'MZ_{0} (Z_0'MZ_0)^{-1/2}\Omega_{k_0} (Z_0'MZ_0)^{-1/2} Z_{\Delta}'M\varepsilon  \, \text{sgn}(k_0-k) \\
    & = O_p\left(T^{-1}\norm{\delta_h}^{-2}\right) + O_p\left(T^{-1/2}\norm{\delta_h}^{-1}\right) + o_p(1).
\end{align*}
The first line is $O_p\left(T^{-1}\norm{\delta_h}^{-2}\right)$ is uniformly on $K(C)$,
\begin{align*}
    \varepsilon'MZ_k & \left[(Z_k'MZ_k)^{-1}\Omega_k -(Z_0'MZ_0)^{-1}\Omega_{k_0} \right]Z_k'M\varepsilon +o_p(1)\\
    & = \varepsilon'MZ_k (Z_k'MZ_k)^{-1}(Z_0'MZ_0- Z_k'MZ_k)(Z_0'MZ_0)^{-1}\Omega_k Z_k'M\varepsilon \\
    & \hspace{1cm} + \varepsilon'MZ_k(Z_0'MZ_0)^{-1}(\Omega_k-\Omega_{k_0})Z_k'M\varepsilon + o_p(1) \\
    & = T^{-1/2}\norm{\delta_h}^{-2} O_p(T^{-1/2})+ o_p(1),
\end{align*}
from (\ref{eq:ZkminusZ0}) and $(Z_k'MZ_k)^{-1} Z_k'M\varepsilon = O_p(T^{-1/2})$ uniformly on $K(C)$. The second and third lines are $O_p\left(T^{-1/2}\norm{\delta_h}^{-1}\right)$ from  $Z_{\Delta}'M\varepsilon \, \text{sgn}(k_0-k) = \vert k_0-k\vert^{1/2}O_p(1)= O_p\left(\norm{\delta_h}^{-1}\right)$. Hence,
\begin{equation*}
    H_T(k) = 2\lambda_h d_0'\Omega_{k_0}Z_{\Delta}'\varepsilon \,\text{sgn}(k_0-k) + o_p(1).
\end{equation*}
Combine this with (\ref{eq:G_limitdist}), we obtain the expression in Lemma \ref{lem:Qdiff_limitdist}.
\end{proof}

\bigskip
\noindent\textbf{Proof of Theorem \ref{Thm:infill_ar}}
\begin{proof}
Lemma B.1 from \citet{Jiang2017} is restated below, and used without proof. 
\begin{lemma}\label{Lem:jiang17_lemB1}
For the process $y_t$ defined in (\ref{eq:DGPinfill_ar}), the following equations hold when $T = 1/h \rightarrow \infty$ with a fixed $\rho_0 = k_0/T$, for any $\rho \in [0, 1]$,
\begin{enumerate}[(a)]
    \item $T^{-1}\sum_{t=1}^{[\rho T]} y_{t-1}\varepsilon_t \Rightarrow \sigma^2 \int_{0}^{\rho} \wtilde{J}_0(r) dB(r)$;
    \item $T^{-2}\sum_{t=1}^{[\rho T]} y_{t-1}^2 \Rightarrow \sigma^2 \int_{0}^{\rho} \left[\wtilde{J}_0(r)\right]^2 dr$;
    \item $\left[\wtilde{J}_0(\rho)\right]^2-\left[\wtilde{J}_0(0)\right]^2 = 2\int_0^{\rho}\wtilde{J}_0(r)dB(r)-2\int_0^{\rho} \left(\mu+\delta \bbone\{r > \rho_0\}\right)\left[\wtilde{J}_0(r)\right]^2dr + \rho$;
    \item $\left[\wtilde{J}_0(1)\right]^2-\left[\wtilde{J}_0(\rho)\right]^2 = 2\int_\rho^{1}\wtilde{J}_0(r)dB(r)-2\int_\rho^{1} \left(\mu+\delta \bbone\{r > \rho_0\}\right)\left[\wtilde{J}_0(r)\right]^2dr + (1-\rho)$;
\end{enumerate}
where $\wtilde{J}_0(r)$ for $r \in [0, 1]$ is a Gaussian process defined in (\ref{eq:Jtilde0}) and $B(\cdot)$ is a standard Brownian motion.
\end{lemma}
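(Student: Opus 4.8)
The plan is to separate the lemma into two kinds of statements: parts (a) and (b) are weak-convergence results that I would deduce from a single functional invariance principle for the near-unit-root process $y_t$, while parts (c) and (d) are pathwise identities for the limit diffusion that follow directly from It\^o's formula. The two blocks require quite different tools, so I would handle them in that order.

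The common input is the functional convergence of the rescaled sample path to the Ornstein--Uhlenbeck diffusion (\ref{eq:Jtilde0}). Writing $\beta(t)=\exp\{-(\mu+\delta\bbone\{t>k_0\})/T\}$, iterating the recursion (\ref{eq:DGPinfill_ar}) gives $y_t=\left(\prod_{j=1}^{t}\beta(j)\right)y_0+\sum_{i=1}^{t}\left(\prod_{j=i+1}^{t}\beta(j)\right)\varepsilon_i$, and for $i=[rT]$, $t=[\rho T]$ the product $\prod_{j=i+1}^{t}\beta(j)=\exp\{-T^{-1}\sum_{j=i+1}^{t}(\mu+\delta\bbone\{j>k_0\})\}$ converges to the deterministic kernel $\exp\{-\int_r^{\rho}(\mu+\delta\bbone\{s>\rho_0\})\,ds\}$ of the SDE (\ref{eq:Jtilde0}). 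Combining this with Donsker's invariance principle for the martingale-difference partial sums, $T^{-1/2}\sum_{i=1}^{[\rho T]}\varepsilon_i\Rightarrow\sigma B(\rho)$, and retaining the initial condition $y_0$, which enters the limit through the boundary value $\wtilde J_0(0)$, I would show that the suitably normalized sample path converges weakly in $D[0,1]$ to $\sigma\wtilde J_0$, exactly the embedding used by Jiang \emph{et al.} (2017). The divergent order of $y_0$ is the point that needs the most care here.

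Given this convergence, part (b) is immediate from the continuous mapping theorem applied to the integral functional $g\mapsto\int_0^{\rho}g(s)^2\,ds$ on $D[0,1]$: the left side $T^{-2}\sum_{t=1}^{[\rho T]}y_{t-1}^2$ is the corresponding Riemann-sum functional of the normalized path and its limit is $\sigma^2\int_0^{\rho}[\wtilde J_0(s)]^2\,ds$. Part (a) is the harder case, since $T^{-1}\sum_{t=1}^{[\rho T]}y_{t-1}\varepsilon_t$ is a sample stochastic integral whose limit is a genuine It\^o integral, and the plain continuous mapping theorem does not apply to the stochastic-integral map. Instead I would invoke a convergence-of-stochastic-integrals theorem of the Chan--Wei / Kurtz--Protter / Hansen type for a martingale-difference integrator: using that $\{\varepsilon_t\}$ is a martingale-difference sequence with $y_{t-1}$ adapted, the joint convergence of integrand and integrator together with the uniform-tightness (good-integrator) property of the martingale partial sums yields $T^{-1}\sum_{t=1}^{[\rho T]}y_{t-1}\varepsilon_t\Rightarrow\sigma^2\int_0^{\rho}\wtilde J_0(s)\,dB(s)$. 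The main obstacle is precisely this step: verifying the tightness condition and the adaptedness that rule out anticipation, so that the discrete sum converges to the It\^o (not the Stratonovich) integral, with no drift correction.

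For (c) and (d) no limit is needed, as they are identities for the limit process itself. I would apply It\^o's formula to $f(x)=x^2$ along the solution of (\ref{eq:Jtilde0}), using $d\wtilde J_0=-(\mu+\delta\bbone\{r>\rho_0\})\wtilde J_0\,dr+dB$ and the quadratic variation $d\langle\wtilde J_0\rangle_r=dr$ (the diffusion coefficient being unity), to obtain $d([\wtilde J_0]^2)=2\wtilde J_0\,d\wtilde J_0+dr=-2(\mu+\delta\bbone\{r>\rho_0\})[\wtilde J_0]^2\,dr+2\wtilde J_0\,dB+dr$. Integrating from $0$ to $\rho$ gives (c) and integrating from $\rho$ to $1$ gives (d); in both the extra additive term ($\rho$, respectively $1-\rho$) is exactly the contribution of the quadratic variation, so these steps are routine once that It\^o correction is identified.
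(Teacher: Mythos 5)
The paper offers no proof of this lemma to compare against: it is imported verbatim as Lemma B.1 of Jiang, Wang and Yu (2017) and, as the text says, ``used without proof.'' Your sketch is correct, and it is essentially the standard local-to-unity argument (in the tradition of Phillips, 1987) that underlies the cited result. The variation-of-constants solution with kernel $\exp\{-\int_r^{\rho}(\mu+\delta\bbone\{s>\rho_0\})\,ds\}$, Donsker for the innovation partial sums, and continuous mapping give $T^{-1/2}y_{[\cdot T]}\Rightarrow \sigma\wtilde{J}_0(\cdot)$ in $D[0,1]$; note that after this normalization the divergent initial condition is actually harmless, since $T^{-1/2}y_0=x_0=O_p(1)$ enters only as the fixed starting level of the limit process, so the step you flag as most delicate is routine. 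Part (b) is then continuous mapping as you say, and for part (a) you correctly identify that the plain CMT fails and a convergence-of-stochastic-integrals theorem (Chan--Wei/Kurtz--Protter/Hansen) applies, with adaptedness of $y_{t-1}$ and the martingale-difference structure of $\varepsilon_t$ ensuring the It\^o limit with no drift correction; parts (c)--(d) are immediate from It\^o's formula, and the discontinuity of the drift at $\rho_0$ is irrelevant since the formula only needs an integrable drift. One simplification worth knowing: you can dispense with the stochastic-integral convergence machinery for (a) entirely by squaring the recursion, $y_t^2-y_{t-1}^2=(\beta(t)^2-1)y_{t-1}^2+2\beta(t)y_{t-1}\varepsilon_t+\varepsilon_t^2$, summing to $[\rho T]$, and using $\beta(t)^2-1=-2(\mu+\delta\bbone\{t>k_0\})T^{-1}+O(T^{-2})$ together with the path convergence, part (b), and the law of large numbers for $T^{-1}\sum_{t\leq[\rho T]}\varepsilon_t^2\to\sigma^2\rho$; solving for $T^{-1}\sum y_{t-1}\varepsilon_t$ and invoking the identity (c) then delivers (a). This elementary route, which derives (a) and the discrete counterparts of (c)--(d) in one stroke, is the classical one and mirrors exactly the algebra the paper later performs on the objective function in the proof of Theorem \ref{Thm:infill_ar}.
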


\smallskip

Define the $(T\times 2)$ matrix $Y(k) = [Y_1(k) \,\vdots\, Y_2(k)]$ with $Y_1(k) = (y_0,\ldots,y_{k-1},0\ldots,0)'$, $Y_2(k) = (0\ldots,0,y_{k},\ldots,y_{T-1})'$ and $Y = (y_1,\ldots,y_T)'$. Then the LS objective function can be expressed as $S(k)^2 = Y'MY$ where
\begin{equation*}
    M = I_T - Y_1(k)[Y_1(k)'Y_1(k)]^{-1}Y_1(k)' - Y_2(k)[Y_2(k)'Y_2(k)]^{-1}Y_2(k)',
\end{equation*}
where $I_T$ is a $(T\times T)$ identity matrix. The model (\ref{eq:DGPinfill_ar}) can be written as
\begin{equation*}
    y_t = \beta_1 y_{t-1}+ (\beta_2-\beta_1)\bbone\{t > k_0\}y_{t-1}+\varepsilon_t = \beta_1 y_{t-1} + \eta_t,
\end{equation*}
where $\eta_t := (\beta_2-\beta_1)\bbone\{t > k_0\}y_{t-1}+\varepsilon_t$. Let $Y_{-} = (y_0,\ldots,y_{T-1})'$ and $\eta = (\eta_1,\ldots,\eta_T)'$. Then we have $Y = Y_{-}\beta_1 + \eta$, and the LS objective function is
\begin{align*}
    S(k)^2 & = (Y_{-}\beta_1 + \eta)'M'M(Y_{-}\beta_1 + \eta)\\
    & = \eta'\eta - \eta' Y_1(k)[Y_1(k)'Y_1(k)]^{-1}Y_1(k)'\eta - \eta'Y_2(k)[Y_2(k)'Y_2(k)]^{-1}Y_2(k)'\eta
\end{align*}
because $M$ is a idempotent matrix and $MY_{-} = \bzero$. Note that
\begin{equation*}
    \eta'\eta = \sum_{t=1}^{k_0} \eta_t^2 + \sum_{t=k_0+1}^{T} \eta_t^2 = \sum_{t=1}^{k_0} \varepsilon_t^2 + \sum_{t=k_0+1}^{T} ((\beta_2-\beta_1)y_{t-1}+\varepsilon_t)^2
\end{equation*}
which holds regardless of the choice of $k$, and 
\begin{align*}
    \eta' Y_1(k)[Y_1(k)'Y_1(k)]^{-1}Y_1(k)'\eta & = \frac{\left(\sum_{t=1}^{k}y_{t-1}\eta_t\right)^2}{\sum_{t=1}^{k}y_{t-1}^2},\\
    \eta'Y_2(k)[Y_2(k)'Y_2(k)]^{-1}Y_2(k)'\eta & = \frac{\left(\sum_{t=k+1}^{T}y_{t-1}\eta_t\right)^2}{\sum_{t=k+1}^{T}y_{t-1}^2}.
\end{align*}
Therefore, the break point estimator is
\begin{gather}
\label{eq:objfn_infillar}
    \hat{\rho} = \argmax_{\rho \in (0,1)} \omega(\rho)^2\, \mathcal{V}(\rho), \\
\nonumber    
    \mathcal{V}(\rho) := \left[\frac{\left(\sum_{t=1}^{[\rho T]}y_{t-1}\eta_t\right)^2}{\sum_{t=1}^{[\rho T]}y_{t-1}^2} + \frac{\left(\sum_{t=[\rho T]+1}^{T}y_{t-1}\eta_t\right)^2}{\sum_{t=[\rho T]+1}^{T}y_{t-1}^2}\right].
\end{gather}
When $\rho \leq \rho_0$, the terms in the numerator and denominator of $\mathcal{V}(\rho)$ weakly converges as follows.
\begin{equation*}
    T^{-1}\sum_{t=1}^{[\rho T]} y_{t-1}\eta_t = T^{-1}\sum_{t=1}^{[\rho T]} y_{t-1}\varepsilon_t \Rightarrow \sigma^2 \int_{0}^{\rho} \wtilde{J}_0(r)dB(r).
\end{equation*}
From Lemma \ref{Lem:jiang17_lemB1}, 
\begin{align*}
    T^{-1}\sum_{t=[\rho T]+1}^{T} y_{t-1}\eta_t & = T^{-1}\left[\sum_{t=[\rho T]+1}^{[\rho_0 T]} y_{t-1}\eta_t +\sum_{t=[\rho_0 T]+1}^{T} y_{t-1}\eta_t \right] \\
    & = T^{-1}\sum_{t=[\rho T]+1}^{T} y_{t-1}\varepsilon_t + T(\beta_2-\beta_1)T^{-2}\sum_{t=[\rho_0 T]+1}^{T} y_{t-1}^2 \\
    & \Rightarrow \sigma^2 \int_{\rho}^{1}\wtilde{J}_0(r)dB(r) -\delta \sigma^2 \int_{\rho_0}^{1} \left[\wtilde{J}_0(r)\right]^2 dr,
\end{align*}
\begin{equation*}
    T^{-2}\sum_{t=1}^{[\rho T]} y_{t-1}^2 \Rightarrow \sigma^2 \int_0^\rho \left[\wtilde{J}_0(r)\right]^2 dr, \;\; \text{ and }\;\; T^{-2}\sum_{t=[\rho T]+1}^{T} y_{t-1}^2 \Rightarrow \sigma^2 \int_\rho^1 \left[\wtilde{J}_0(r)\right]^2 dr.
\end{equation*}
Then the LS objective function $\mathcal{V}(\rho)$ in (\ref{eq:objfn_infillar}) weakly converges to
\begin{equation*}
    \mathcal{V}(\rho) \Rightarrow \sigma^2 \left[\frac{\left(\int_{0}^{\rho}\wtilde{J}_0(r)dB(r)\right)^2}{\int_0^\rho \left[\wtilde{J}_0(r)\right]^2 dr} + \frac{\left(\int_{\rho}^{1}\wtilde{J}_0(r)dB(r)-\delta \int_{\rho_0}^{1} \left[\wtilde{J}_0(r)\right]^2 dr \right)^2}{\int_\rho^1 \left[\wtilde{J}_0(r)\right]^2 dr}\right].
\end{equation*}
Lemma \ref{Lem:jiang17_lemB1} (c) and (d) implies that each term is rearranged as follows.
\begin{align*}
    \frac{\left(\int_{0}^{\rho}\wtilde{J}_0(r)dB(r)\right)^2}{\int_0^\rho \left[\wtilde{J}_0(r)\right]^2 dr} & = \frac{\left(\left[\wtilde{J}_0(\rho)\right]^2-\left[\wtilde{J}_0(0)\right]^2-\rho+2\mu\int_0^\rho \left[\wtilde{J}_0(r)\right]^2 dr \right)^2}{4\int_0^\rho \left[\wtilde{J}_0(r)\right]^2 dr} \\
    & = \frac{\left(\left[\wtilde{J}_0(\rho)\right]^2-\left[\wtilde{J}_0(0)\right]^2-\rho \right)^2}{4\int_0^\rho \left[\wtilde{J}_0(r)\right]^2 dr} + \mu^2 \int_0^\rho \left[\wtilde{J}_0(r)\right]^2 dr \\
    & \hspace{1cm} + \mu \left(\left[\wtilde{J}_0(\rho)\right]^2-\left[\wtilde{J}_0(0)\right]^2-\rho \right)
\end{align*}
\begin{multline*}
    \frac{\left(\int_{\rho}^{1} \wtilde{J}_0(r)dB(r)-\delta \int_{\rho_0}^{1} \left[\wtilde{J}_0(r)\right]^2 dr \right)^2}{\int_\rho^1 \left[\wtilde{J}_0(r)\right]^2 dr} = \frac{\left(\left[\wtilde{J}_0(1)\right]^2-\left[\wtilde{J}_0(\rho)\right]^2-(1-\rho) \right)^2}{4\int_\rho^1 \left[\wtilde{J}_0(r)\right]^2 dr} \\
    + \mu^2 \int_\rho^1 \left[\wtilde{J}_0(r)\right]^2 dr+ \mu\left(\left[\wtilde{J}_0(1)\right]^2-\left[\wtilde{J}_0(\rho)\right]^2-(1-\rho) \right).
\end{multline*}
As a result, the objective function of the break point estimator in (\ref{eq:objfn_infillar}) weakly converges to
\begin{align*}
   \frac{\omega(\rho)^2\mathcal{V}(\rho)}{\sigma^2} & \Rightarrow \frac{\omega(\rho)^2\left(\left[\wtilde{J}_0(\rho)\right]^2-\left[\wtilde{J}_0(0)\right]^2-\rho \right)^2}{4\int_0^\rho \left[\wtilde{J}_0(r)\right]^2 dr} + \frac{\omega(\rho)^2\left(\left[\wtilde{J}_0(1)\right]^2-\left[\wtilde{J}_0(\rho)\right]^2-(1-\rho) \right)^2}{4\int_\rho^1 \left[\wtilde{J}_0(r)\right]^2 dr} \\
    & \hspace{0.7cm} + \mu^2 \omega(\rho)^2 \int_0^1 \left[\wtilde{J}_0(r)\right]^2 dr + \mu\, \omega(\rho)^2\left(\left[\wtilde{J}_0(1)\right]^2-\left[\wtilde{J}_0(0)\right]^2-1\right).
\end{align*}
Following the same procedure above, $\omega(\rho)^2\mathcal{V}(\rho)/\sigma^2$ has the same limit when $\rho > \rho_0$. Therefore, by deleting the terms which are independent of the choice of $\rho$, the in-fill asymptotic distribution of $\hat{\rho}$ in (\ref{eq:objfn_infillar}) is
\begin{align*}
    \hat{\rho} & = \argmax_{\rho \in (0,1)} \omega(\rho)^2 \mathcal{V}(\rho)\\
    & \Rightarrow \argmax_{\rho \in (0,1)} \frac{\omega(\rho)^2\left(\left[\wtilde{J}_0(\rho)\right]^2-\left[\wtilde{J}_0(0)\right]^2-\rho \right)^2}{\int_0^\rho \left[\wtilde{J}_0(r)\right]^2 dr} + \frac{\omega(\rho)^2\left(\left[\wtilde{J}_0(1)\right]^2-\left[\wtilde{J}_0(\rho)\right]^2-(1-\rho) \right)^2}{\int_\rho^1 \left[\wtilde{J}_0(r)\right]^2 dr},
\end{align*}
which is identical to the distribution in Theorem \ref{Thm:infill_ar}.
\end{proof}

\section*{Appendix C}

\setcounter{table}{0}
\renewcommand{\thetable}{A.\arabic{table}}

\setcounter{figure}{0}
\renewcommand{\thefigure}{A.\arabic{figure}}

\begin{table}[H]
\caption{In-fill asymptotic RMSE, bias, and the standard error of the new  estimator and the LS estimator of the break point under model (\ref{eq:mc_stationarydgp}) with parameter values $(\rho_0,d_0)$ and $T = 100$. The number of replications is 5,000.}
\begin{center}  
\begin{tabular}{lL{0.8cm}C{1.5cm}C{1.5cm}C{1.5cm}C{1.5cm}C{1.5cm}C{1.5cm}} 
\hline
\hline
& & \multicolumn{2}{c}{RMSE} & \multicolumn{2}{c}{Bias}& \multicolumn{2}{c}{Standard error}\Tstrut\\
\cline{3-8}
 $\rho_0$ & $d_0$ & NEW & LS & NEW & LS& NEW & LS\Tstrut\\
\hline
 \multirow{3}{*}{0.15} & 1 & 0.4112 & 0.4284 & 0.3270 & 0.3174 & 0.2493 & 0.2876\Tstrut\\
  & 2 & 0.3763 & 0.3898 & 0.2895 & 0.2729 & 0.2403 & 0.2784\\
 & 4 & 0.3159 & 0.3071 & 0.2285 & 0.1807 & 0.2181 & 0.2483\\
 \hline
 \multirow{3}{*}{0.30} & 1 & 0.3004 & 0.3225 & 0.1865 & 0.1741 & 0.2355 & 0.2715\Tstrut\\
  & 2 & 0.2595 & 0.2959 & 0.1527 & 0.1471 & 0.2097 & 0.2567\\
 & 4 & 0.1930 & 0.2277 & 0.1059 & 0.0890 & 0.1614 & 0.2096\\
 \hline
 \multirow{3}{*}{0.50} & 1 & 0.2226 & 0.2635 & 0.0086 & 0.0081 & 0.2224 & 0.2634\Tstrut\\
 & 2 & 0.1894 & 0.2425 & 0.0081 & 0.0140 & 0.1893 & 0.2421\\
 & 4 & 0.1322 & 0.1887 & 0.0094 & 0.0122 & 0.1319 & 0.1883\\
 \hline
 \multirow{3}{*}{0.70} & 1 & 0.2796 & 0.3116 & -0.1560 & -0.1533 & 0.2320 & 0.2713\Tstrut\\
 & 2 & 0.2437 & 0.2837 & -0.1282 & -0.1240 & 0.2072 & 0.2551\\
 & 4 & 0.1840 & 0.2188 & -0.0890 & -0.0676 & 0.1611 & 0.2081\\
 \hline
 \multirow{3}{*}{0.85} & 1 & 0.3986 & 0.4116 & -0.3114 & -0.2981 & 0.2489 & 0.2838\Tstrut\\
 & 2 & 0.3616 & 0.3767 & -0.2717 & -0.2518 & 0.2386 & 0.2802\\
 & 4 & 0.2982 & 0.3003 & -0.2083 & -0.1651 & 0.2134 & 0.2508\\ 
 \hline
\hline
\end{tabular}
\end{center}
\label{table:rmse_infill_stationary}
\end{table}

\begin{table}[H]
\caption{In-fill asymptotic RMSE, bias, and the standard error of the new estimator and the LS estimator of the break point under the AR(1) model (\ref{eq:DGPinfill_ar}) with parameter values $(\beta_1,\beta_1,\rho_0)$ and $T = 200$. The number of replications is 5,000.}
\begin{center}  
\begin{tabular}{lllC{1.5cm}C{1.5cm}C{1.5cm}C{1.5cm}C{1.5cm}C{1.5cm}} 
\hline
\hline
& & & \multicolumn{2}{c}{RMSE} & \multicolumn{2}{c}{Bias} & \multicolumn{2}{c}{Standard error}\Tstrut\\
\cline{4-9}
$\beta_1$ & $\beta_2$ & $\rho_0$ & NEW & LS & NEW & LS & NEW & LS\Tstrut\\
\hline
\multirow{3}{*}{0.5} & \multirow{3}{*}{0.38} & 0.3 & 0.1909 & 0.2046 & 0.1888 & 0.0145 & 0.0290 & 0.2041\Tstrut\\
 &  & 0.5 & 0.0270 & 0.2368 & -0.0053 & -0.0751 & 0.0264 & 0.2246\\
 &  & 0.7 & 0.1995 & 0.3164 & -0.1970 & -0.1878 & 0.0316 & 0.2546\\
 \hline
\multirow{3}{*}{0.995} & \multirow{3}{*}{0.97} & 0.3 & 0.1284 & 0.1737 & 0.0529 & 0.0392 & 0.1170 & 0.1692\Tstrut\\
 &  & 0.5 & 0.0934 & 0.1749 & -0.0145 & -0.0260 & 0.0923 & 0.1730\\
 &  & 0.7 & 0.1759 & 0.2484 & -0.0934 & -0.1128 & 0.1490 & 0.2213\\
 \hline
\hline
\end{tabular}
\end{center}
\label{table:rmse_ar1_infill}
\end{table}

\begin{figure}[H]
    \centering
    \begin{subfigure}{.48\textwidth}
  \centering
    \includegraphics[width=\linewidth]{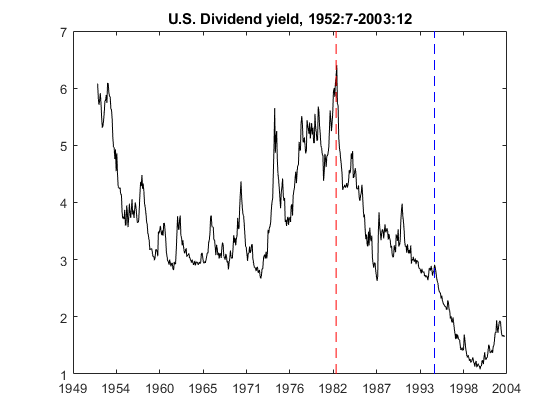}
\end{subfigure}
\begin{subfigure}{.48\textwidth}
  \centering
    \includegraphics[width=\linewidth]{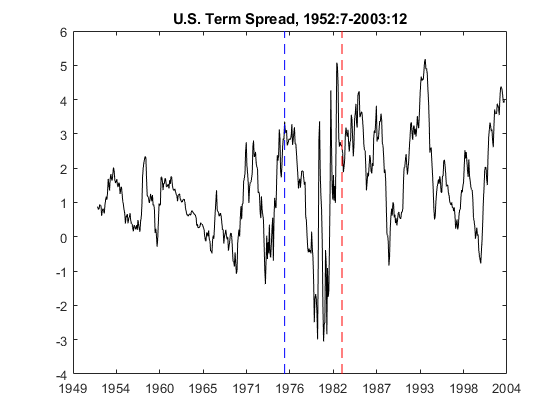}
\end{subfigure}
    \caption{U.S. dividend yield (left) and term spread (right), 1952:7-2003:12. Red and blue dotted lines are the new and LS break date estimates from the univariate model, respectively.}
\label{fig:us_tbill_divyield}
\end{figure}

\end{appendix}

\end{document}